\documentclass[11pt]{article}
\usepackage[margin=3cm]{geometry}
\usepackage[utf8]{inputenc}
\usepackage{amsmath}
\usepackage{amsfonts}
\usepackage{amssymb}
\usepackage{amsthm}
\usepackage{mathrsfs}
\usepackage{commath}
\usepackage{latexsym}
\usepackage{accents}
\usepackage{array}
\usepackage{bm}
\usepackage{enumerate}
\usepackage[mathscr]{eucal}
\usepackage{color}
\usepackage{verbatim}
\usepackage[polutonikogreek,english]{babel}
\usepackage{graphicx}
\usepackage{booktabs,caption}
\usepackage{multicol,multirow,tabularx}
\usepackage{siunitx}
\usepackage{relsize}
\usepackage{multirow}
\usepackage{hyperref}
\usepackage{float}
\usepackage{lineno}
\usepackage{cite}
\usepackage{todonotes}
\usepackage{tikz}
\usepackage{psfrag}

\usepackage[square,sort,comma,numbers,compress]{natbib}
\usepackage{overpic}
\usepackage{bbding}
\usepackage{lineno}
\usepackage[dvipsnames]{xcolor}

\newtheorem{theorem}{Theorem}
\newtheorem{lemma}[theorem]{Lemma}
\newtheorem{proposition}[theorem]{Proposition}
\newtheorem{remark}{Remark}

\usetikzlibrary{arrows.meta}
\usetikzlibrary{positioning}
\usetikzlibrary{decorations.pathreplacing,decorations.markings}
\tikzset{->-/.style={decoration={
markings,
mark=at position #1 with {\arrow{>}}},postaction={decorate}}}
\usepackage{pgfplots}
\pgfplotsset{width=7cm,compat=newest,ticks=none}
\usepgfplotslibrary{fillbetween}

\definecolor{royalpurple}{rgb}{0.47, 0.32, 0.66}
\definecolor{pastelgreen}{rgb}{0.47, 0.87, 0.47}
\definecolor{cornellred}{rgb}{0.7, 0.11, 0.11}
\definecolor{pastelorange}{rgb}{1.0, 0.7, 0.28}
\definecolor{darkred}{rgb}{0.55, 0.0, 0.0}
\definecolor{darkpastelgreen}{rgb}{0.01, 0.75, 0.24}
\definecolor{myviolet}{RGB}{169,110,199}
\definecolor{myred}{RGB}{200,0,50}
\definecolor{myblue}{RGB}{0,180,200}
\definecolor{mygreen}{RGB}{0,150,0}

\title{How inertia affects autotoxicity-mediated vegetation dynamics: from close-to to far-from-equilibrium patterns}
\author{Giancarlo Consolo$^{1}$, Carmela Curr\`o$^{1}$, Gabriele Grif\`o$^{2,3,(\ast)}$, Annalisa Iuorio$^{4}$, \\ Giovanna Valenti$^{5}$,
Frits Veerman$^{6}$\\
{$^{1}$\small Department of Mathematical, Computer, Physical and Earth Sciences, University of Messina,} \\
{\small V.le F. Stagno D'Alcontres 31, I-98166 Messina, Italy.}\\
{$^{2}$\small Istituto Nazionale di Alta Matematica ``F. Severi", Piazzale Aldo Moro 5, I-00185 Rome, Italy} \\
{$^{3}$\small Department of Mathematics and Computer Science, University of Palermo,} \\
{\small Via Archirafi 34, I-90123 Palermo, Italy} \\
{$^{4}$\small Department of Engineering, Parthenope University of Naples,}\\ {\small Centro Direzionale - Isola C4, 80143 Naples, Italy}\\
{$^{5}$\small Department of Engineering, University of Messina, C.da di Dio, I-98166 Messina, Italy}\\
{$^{6}$\small Mathematical Institute, University of Leiden, Einsteinweg 55, 2333 CC Leiden, The Netherlands}\\
{\small $^{(\ast)}$Corresponding author: grifo@altamatematica.it}}
\date{}

\begin{document}

\maketitle

\begin{abstract}
In this work, the influence of inertial effects on the formation and evolution of vegetation patterns on sloped arid terrains is investigated in different dynamical regimes ranging from the onset of instability to far-from-equilibrium. Analyses are carried out in a hyperbolic extension of the one-dimensional Klausmeier model, where autotoxicity effects are also taken into account. As the system moves away from the wave bifurcation threshold, two classes of solutions, corresponding to the emergence of qualitatively different coherent structures, arise: small-amplitude periodic migrating bands near onset and large-amplitude travelling pulses in far-from-equilibrium conditions. 
For the first class of solutions, results of linear stability analysis reveal that inertia has a twofold role at onset: it acts as a destabilising mechanism, thereby enlarging the parameter region in which uphill migrating vegetation bands can emerge, and it reduces the pattern migration speed. Its role also manifests itself close to onset, as proved by the Stuart–Landau equation for the pattern amplitude deduced via multiple-scale weakly–nonlinear analysis. Indeed, it is shown that inertial effects may reverse the dynamical regime, from supercritical to subcritical, thus leading to  hysteresis. 
For the second class of solutions, the key qualitative features of travelling vegetation pulses are first captured via numerical simulations and then investigated in detail analytically via Geometric Singular Perturbation Theory (GSPT). The existence of such solutions is proved via construction of the corresponding homoclinic orbits in the associated four-dimensional system. In far-from-equilibrium conditions, inertia is shown to increase pulse speed while preserving the intrinsic multiscale structure of the solution, in full agreement with the numerical findings. Overall, the proposed combined analytical-numerical investigations have depicted several ecological scenarios as a function of the distance from the instability threshold, elucidating that inertia does not exclusively act as a time lag. Indeed, it constitutes an additional degree of freedom which should be considered in shaping vegetation dynamics under environmental stress. \\

\textbf{Keywords}: hyperbolic reaction-transport model, vegetation patterns, linear and weakly-nonlinear stability analysis, Geometric Singular Perturbation Theory, travelling pulses.
\end{abstract}

\section{Introduction}

The ongoing intensification of the global hydrological cycle, driven by anthropogenic climate change, is significantly altering the stability of ecosystems worldwide \cite{UN22}. Among these, arid and semi-arid drylands are particularly vulnerable, as they exist in a precarious balance where water availability is the primary limiting factor for biological growth. In such water-stressed environments, vegetation does not always maintain a uniform cover; instead, it frequently self-organises into striking spatial patterns, ranging from regular stripes and labyrinths to isolated spots and pulses. From an ecological perspective, these patterns are of paramount importance: they represent a resilient response to environmental scarcity, allowing the ecosystem to concentrate limited resources in high-density biomass patches. Consequently, understanding the mathematical principles underlying pattern formation is essential for predicting so-called ``tipping points" and preventing irreversible transitions toward desertification \cite{Bastiaansen2020, Rietkerk2021}.

Traditional modelling efforts, pioneered by the seminal work of Klausmeier \cite{Klausmeier1999}, have successfully captured the essence of these dynamics using reaction-diffusion-advection systems \cite{Hillerislambers2001, Zelnik2013, VanDerStelt2013, Siteur2014, Bastiaansen2019, Eigentler2020, Iuorio2021, Iuorio2023pre, Byrnes2023, Consolo2024}. However, these classical models are typically based on parabolic partial differential equations (PDEs), which implicitly assume an infinite speed of information propagation and an instantaneous response of the vegetation flux to density gradients. In reality, it is well known that any biological system exhibits a certain degree of \textit{inertia}. In the context of ecosystems, the movement and expansion of vegetation are governed by physical constraints and biological time lags, meaning that the biomass flux cannot adapt immediately to environmental shifts. Empirical observations from long-term monitoring of patterned landscapes, particularly those utilising Fourier cross-spectral analysis across diverse dryland regions, have highlighted a significant time lag between environmental forcing and the structural response of the biomass, suggesting that vegetation dynamics are intrinsically governed by inertial effects rather than instantaneous adaptation \cite{Deblauwe2011, Deblauwe2012}. These evidences provide an ecological basis for incorporating inertial terms that account for the ``memory'' and slow successional processes of perennial species. 

This paper introduces a hyperbolic extension of the Klausmeier model for semiarid sloped terrains that explicitly incorporates these inertial effects and explores the structure of the coherent structures observed at, close to and beyond local equilibrium. The aim of this work is to investigate the role played by inertia in shaping the future of dryland landscapes in a changing climate, when the distance from the instability threshold is progressively increased. As it will be here proved, inertial effects act as significant ``transport enhancers", as they do not simply introduce a time lag in the vegetation response but play a multifarious role. They indeed enlarge the width of the parameter region in which migrating patterns can be observed, alter the migration velocity of vegetation bands and pulses, change the dynamical regime of pattern formation. A key finding of this research is that vegetation inertia can promote subcriticality, creating regions of bistability and hysteresis where the ecosystem's state depends heavily on its historical path.

To provide a rigorous and comprehensive description of these complex vegetation dynamics, this work employs a multi-faceted mathematical approach that covers the different classes of solutions obtained within the pattern-forming region: from small-amplitude periodic migrating bands near onset to travelling pulses in far-from-equilibrium conditions. Three tools will be here employed to characterise these solutions: linear stability analysis (LSA), multiple-scale weakly-nonlinear analysis (WNA) and Geometrical Singular Perturbation Theory (GSPT). 

The study begins by investigating the local stability of the spatially homogeneous vegetated state. LSA is crucial for identifying the critical precipitation thresholds and determining how inertia modifies the onset of wave instability, so highlighting the portion of the (rainfall-plant loss) plane where patterned solution may emerge. In the vicinity of the bifurcation threshold, LSA fails to describe the long-term behaviour of the system, so that WNA is applied to derive a Stuart-Landau-type equation for the pattern amplitude. This allows us to investigate the nature of the bifurcation and to distinguish the occurrence of  gradual (supercritical) or abrupt (subcritical) shifts. When the system is driven far-from-equilibrium, it exhibits large-amplitude, localised structures whose core element is represented by travelling pulses. The analysis of such nonlinear waves in a corresponding four-dimensional phase space is mathematically challenging, and will be tackled using GSPT -- a dynamical systems approach to singularly perturbed ordinary differential equations introduced by Fenichel \cite{Fe79}. In its most common form, GSPT focuses on \emph{slow-fast} systems where the independent variable is considered to be time and the scale separation is induced in the equations by the presence of a small parameter $0 < \varepsilon \ll 1$. However, as in the case analysed here, the independent variable can be equally considered as a travelling-wave variable. Two equivalent formulations of the original problem (on the slow and fast scale) can be obtained by rescaling the independent variable by $\varepsilon$; in the singular limit $\varepsilon=0$, these two limiting systems are respectively called reduced problem and layer problem. Under appropriate assumptions and for sufficiently small values of $\varepsilon$, solutions can be obtained as perturbations of trajectories formed by piecing together solutions of the reduced and layer problems. We refer to \cite{JKK96, Kuehn_2015} for more background on GSPT and its many applications, which include in particular vegetation dynamics \cite{Sewalt2017,, Bastiaansen2019, Carter2018, Carter2024, grifo2025far, Iuorio2021}.

The structure of this paper is reported below and is designed to guide the reader through increasing levels of dynamical complexity.
In Section \ref{sec:model}, the mathematical model based on the hyperbolic extension of the Klausmeier model is presented. In order to account for the accumulation of harmful substances in the soil, autotoxic effects are also included in the model, in line with recent literature \cite{Marasco2014, Iuorio2021, grifo2025far}. 
Section \ref{sec:LSA} focuses on the definition of the pattern-forming region and on the identification of some classes of solutions, corresponding to structurally-different vegetation coherent structures, obtained as the distance from the instability threshold is varied.
In Section \ref{sec:closeto}, dynamics of vegetation patterns occurring close to onset are explored via LSA and WNA with the goal of discussing how inertia influences the selection of the wavelength and the migration speed together with the dynamical regime of the emerging patterns. 
Section \ref{sec:travpulse} is dedicated to far-from-equilibrium patterns, where GSPT is applied to study travelling pulses. The analysis is devoted to the rigorous construction of large-amplitude homoclinic solutions and to the characterisation of their multiscale structure, with particular emphasis on the role of inertia in shaping pulse speed and profile geometry. Theoretical predictions are complemented by numerical simulations illustrating how pulse amplitude, width, and migration velocity vary with the environmental parameters. Finally, in the Section \ref{sec:Conclusion}, our findings are synthesised and some perspectives on the ecological implications are proposed. 

\section{The model} \label{sec:model}
To address the study on the role of inertia in vegetation pattern dynamics, we consider a hyperbolic generalisation of the one-dimensional Klausmeier model for sloped semi-arid environments which also incorporates autotoxic effects \cite{Klausmeier1999, grifo2025far, Consolo2024}. According to the theory developed in \cite{BARBERA2015}, in dimensional form the model reads
\begin{equation} \label{eq:moddim}
\begin{aligned}
    \widetilde{U}_T-\nu \widetilde{U}_X &= p-l\widetilde{U}-r\widetilde{U} \widetilde{V}^2, \\
    \widetilde{V}_T+\widetilde{J}_X &= c \widetilde{U} \widetilde{V}^2-(d+\sigma \widetilde{S}) \widetilde{V}, \\
    \widetilde{S}_T &= q(d+\sigma \widetilde{S}) \widetilde{V} -(m+w p) \widetilde{S}, \\
    \widetilde{\tau}\widetilde{J}_T + \widetilde{V}_X &= -\frac{\widetilde{J}}{D_V} .
\end{aligned}
\end{equation}

The vector of state variables $\widetilde{\mathbf{W}}(X,T)=[\widetilde{U}(X,T),\widetilde{V}(X,T),\widetilde{S}(X,T),\widetilde{J}(X,T)]^T$ describes the spatio-temporal evolution of surface water, vegetation biomass, autotoxicity, and vegetation flux densities, respectively, at position $X\in\Omega\subset \mathbb{R}$ (positive $X$ direction being uphill) and time $T\in\mathbb{R}^+$. Regarding transport mechanisms, the advection term for surface water $-\nu \widetilde{U}_X$ mimics the downhill water flow with speed $\nu$, proportional to the slope of the terrain. Vegetation flux does not obey the classical Fick's law $\widetilde{J}=-D_V \widetilde{V}_X$ ($D_V$ being the diffusion coefficient), as usual in parabolic models. Instead, it satisfies a balance law where an inertial time $\widetilde{\tau}$ is considered to account for the non-instantaneous relation between the flux and the gradient of vegetation density \cite{BARBERA2015, Ruggeri2021, Grifo2025II, Consolo2025}. Of course, for vanishing inertia $\widetilde{\tau}\rightarrow0$, Fick's law (and, in turn, the original parabolic model) is recovered.
Moreover, diffusion of autotoxicity $\widetilde{S}$ in the soil is assumed to be negligible, in line with previous works \cite{Marasco2014, Iuorio2021, Consolo2024}. 
Regarding kinetic terms, surface water is fed by precipitation (with the parameter $p$ mimicking the mean annual rainfall value), whereas it is reduced due to evaporation (with rate $l$) and uptake by roots (with rate $r$). Vegetation biomass grows due to water availability (with rate $c$), but decreases according to a twofold mortality mechanism: intrinsic (with rate $d$) and toxicity-induced (with rate $\sigma$). Finally, four mechanisms involving toxicity are here considered: growth due to decomposition of biomass (with rate  $qd$) and interaction with biomass (with rate $q\sigma$), decay due to natural factors (with rate $m$), and washing-out effects driven by precipitation (with rate $wp$). 

Model \eqref{eq:moddim} can be recast in dimensionless form by rescaling the system's variables and parameters as follows:
\begin{gather*}
    X = \sqrt{\frac{D_V}{l}} x, \quad T = \frac{t}{l}, \quad \widetilde{U} = \frac{\sqrt{l r}}{c} U, \quad \widetilde{V} = \sqrt{\frac{l}{r}} V, \quad \widetilde{S} = \frac{l q}{m+w p} \sqrt{\frac{l}{r}} S, \quad \widetilde{J} = l \sqrt{\frac{D_V}{r}} J, \\
    \mathcal{A} = \frac{c p}{l \sqrt{l r}}, \quad \mathcal{B} = \frac{d}{l}, \quad \mathcal{D} = \frac{l}{m+w p}, \quad \mathcal{H} = \frac{q \sigma}{m + w p}\sqrt{\frac{l}{r}}, \quad \mathcal{V} = \frac{\nu}{\sqrt{D_V l}}, \quad \widetilde{\tau} = \frac{\tau\sqrt{r}}{D_V l \sqrt{l}},
\end{gather*}
which leads to 
\begin{equation} \label{eq:modadim}
\begin{aligned}
    U_t-\mathcal{V}U_x &= \mathcal{A}-U-U V^2 =: f(U,V), \\
    V_t+J_x &= U V^2-\mathcal{B}V-\mathcal{H}SV =: g(U,V,S), \\
    \mathcal{D} S_t &= \mathcal{B} V + \mathcal{H}S V-S =: h(V,S), \\
    \tau J_t +V_x &= -J, 
\end{aligned}
\end{equation}
or, in vector form,
\begin{equation}
\mathbf{W}_{t}+M\mathbf{W}_{x}=\mathbf{N}(\mathbf{W})  
\label{model_compact}
\end{equation}%
with%
\begin{equation}
\begin{tabular}{lll}
$\mathbf{W}=\left[ 
\begin{array}{c}
U \medskip \\ 
V \medskip \\ 
S \medskip \\ 
J%
\end{array}%
\right] ,$ & $M=\left[ 
\begin{tabular}{llll}
$-\mathcal{V}$ & $0$ & $0$ & $0$ \medskip \\ 
$0$ & $0$ & $0$ & $1$ \medskip \\ 
$0$ & $0$ & $0$ & $0$ \medskip \\ 
$0$ & $\frac{1}{\tau }$ & $0$ & $0$%
\end{tabular}%
\right] $, & $\mathbf{N}=\left[ 
\begin{array}{c}
f(U,V) \medskip \\ 
g(U,V,S) \medskip \\ 
h(V,S) \medskip \\ 
-\frac{1}{\tau }J%
\end{array}%
\right] $.%
\end{tabular}%
\label{vectors_model}
\end{equation}

The dimensionless version of the model thus consists of a reduced amount of parameters, $\mathcal{A}$, $\mathcal{B}$, $\mathcal{H}$, $\mathcal{D}$, $\mathcal{V}$ and $\tau$, which are proportional to rainfall, plant loss, plant's sensitivity to toxicity, decomposition rate of toxic compounds, slope of the terrain, and vegetation inertial time, respectively.

\section{Pattern-forming region}
\label{sec:LSA}
In this chapter, we define the region of the parameter plane $(\mathcal{B},\mathcal{A})$ in which stationary or migrating vegetation patterns may be observed. To this aim, let us first compute the spatially homogeneous steady-states of Eq.~\eqref{eq:modadim} and then carry out a linear stability analysis finalised at characterising the onset of diffusion-driving instabilities. These analyses will allow us to identify the model parameters for which pattern dynamics occur at, close to and far from the instability threshold. 

The spatially homogeneous steady-states %
\mbox{$\mathbf{W}^{\ast} = [U^\ast,V^\ast,S^\ast,J^\ast]^T$} admitted by Eq.~\eqref{eq:modadim} are given by $\mathbf{W}
_{D}^{\ast}$ and $\mathbf{W}_{R,L}^{\ast}$, representing respectively a desert state and two vegetated states, namely
\begin{equation}
\begin{aligned}
\mathbf{W}_{D}^{\ast} &= \left[ \mathcal{A},0,0,0\right]^T, \\ 
\mathbf{W}_{R,L}^{\ast} &= \left[ \frac{\mathcal{B}}{%
V_{R,L}\left( 1-\mathcal{H}V_{R,L}\right) }, V_{R,L}, \frac{%
\mathcal{B}V_{R,L}}{1-\mathcal{H}V_{R,L}},0\right]^T
\end{aligned}
\label{equilibria}
\end{equation}%
with
\begin{equation}
0<V_{L}=\frac{\mathcal{A}-\sqrt{\mathcal{A}^{2}-4\mathcal{B}\left( \mathcal{B%
}+\mathcal{A}\mathcal{H}\right) }}{2\left( \mathcal{B}+\mathcal{A}\mathcal{H}%
\right) }<\sqrt{1+\mathcal{H}^{2}}-\mathcal{H}<V_{R}=\frac{\mathcal{A}+\sqrt{%
\mathcal{A}^{2}-4\mathcal{B}\left( \mathcal{B}+\mathcal{A}\mathcal{H}\right) 
}}{2\left( \mathcal{B}+\mathcal{A}\mathcal{H}\right) }.%
\label{eq:VLVR}
\end{equation}%
The desert state $\mathbf{W}
_{D}^{\ast}$ exists for all values of $\mathcal{A}$, whereas the vegetated states $\mathbf{W}_{R,L}^{\ast}$ are admissible only for $\mathcal{A}>\mathcal{A}_{ex}$ where
\begin{equation}
\mathcal{A}_{ex}=2\mathcal{B}\left( \mathcal{H}+\sqrt{1+\mathcal{H}^{2}}%
\right).
\label{eq:ex}
\end{equation}
Moreover, in order for these steady-states to be ecologically feasible the condition $\mathcal{H}V_{R,L}<1$ must be fulfilled. For $\mathcal{A}=\mathcal{A}_{ex}$, the two homogeneous
vegetation steady-states coincide, leading to $\mathbf{W}_{R}^{\ast
}\equiv \mathbf{W}_{L}^{\ast}=\left[ \frac{\mathcal{B}\left( \left( 1+2%
\mathcal{H}^{2}\right) \sqrt{1+\mathcal{H}^{2}}+2\mathcal{H}\left( 1+%
\mathcal{H}^{2}\right) \right) }{1+\mathcal{H}^{2}},\sqrt{1+\mathcal{H}^{2}}-%
\mathcal{H},\frac{\mathcal{B}\sqrt{1+\mathcal{H}^{2}}}{1+\mathcal{H}^{2}}%
,0\right]^T$. 
To establish the local stability character associated with these steady-states, we
linearise Eq.~\eqref{model_compact} around the generic equilibrium $\mathbf{W}^{\ast}$ by looking for
solutions of the form 
\begin{equation}
\mathbf{W=W}^{\ast}+\widehat{\mathbf{W}}\exp \left( \omega t+\mathrm{i}%
kx\right),  \label{svi1}
\end{equation}%
where $\omega $ and $k$ are the growth factor and the real wavenumber,
respectively. This leads to 
\begin{equation}
\left( \omega {I}+\mathrm{i}k{M}-\mathcal{L}^{\ast}\right) 
\widehat{\mathbf{W}}=\mathbf{0},  \label{linear}
\end{equation}%
where ${I}$ is the 4 $\times$ 4 identity matrix, $\mathcal{L}^{\ast
}=\left( \nabla _{\mathbf{W}}\mathbf{N}\right)^{\ast}$, and the
asterisk denotes that the quantity is evaluated at $\mathbf{W}^{\ast}$. Meaningful solutions of \eqref{linear} are obtained if the growth factor $\omega $ satisfies the characteristic equation 
\begin{equation}
A_{0} \omega^{4}+A_{1}\omega^{3}+A_{2}\omega^{2}+A_{3}\omega +A_{4}=0,
\label{poliqua}
\end{equation}%
with 
\begin{equation}
\begin{array}{l}
A_{0}=\tau, \vspace{0.1cm}  \\
A_{1}=\alpha _{1}-\mathrm{i}k\mathcal{V\tau }, \vspace{0.1cm} \\ 
A_{2}=\alpha _{2}+k^{2}+\mathrm{i}k\mathcal{V}\left[ \left( g_{V}^{\ast
}+h_{S}^{\ast}\right) \tau -1\right], \vspace{0.1cm}\\ 
A_{3}=\alpha _{3}-k^{2}\left( f_{U}^{\ast}+h_{S}^{\ast}\right) +\mathrm{i}k%
\mathcal{V}\left[ \left( g_{V}^{\ast}+h_{S}^{\ast}-k^{2}\right) +\left(
g_{S}^{\ast}h_{V}^{\ast}-g_{V}^{\ast}h_{S}^{\ast}\right) \tau \right], \vspace{0.1cm}\\ 
A_{4}=\beta _{3}+f_{U}^{\ast}h_{S}^{\ast}k^{2}+\mathrm{i}k\mathcal{V}\left(
g_{S}^{\ast}h_{V}^{\ast}-g_{V}^{\ast}h_{S}^{\ast}+h_{S}^{\ast
}k^{2}\right), \vspace{0.1cm}\\ 
\alpha _{1}=1+\tau \beta _{1}, \vspace{0.1cm}\\ 
\alpha _{2}=\tau \beta _{2}+\beta _{1}, \vspace{0.1cm}\\ 
\alpha _{3}=\tau \beta _{3}+\beta _{2}, \vspace{0.1cm}\\ 
\beta _{1}=-\left( f_{U}^{\ast}+g_{V}^{\ast}+h_{S}^{\ast}\right), \vspace{0.1cm}\\ 
\beta _{2}=f_{U}^{\ast}g_{V}^{\ast}-f_{V}^{\ast}g_{U}^{\ast}+f_{U}^{\ast
}h_{S}^{\ast}+g_{V}^{\ast}h_{S}^{\ast}-g_{S}^{\ast}h_{V}^{\ast}, \vspace{0.1cm}\\ 
\beta _{3}=f_{U}^{\ast}g_{S}^{\ast}h_{V}^{\ast}-f_{U}^{\ast}g_{V}^{\ast
}h_{S}^{\ast}+f_{V}^{\ast}g_{U}^{\ast}h_{S}^{\ast}.%
\end{array}
\label{coef}
\end{equation}
As well known, $\mathbf{W}^{\ast}$ is linearly stable if all roots of \eqref{poliqua} exhibit negative real part for any wavenumber $k$. This analysis can be addressed via the
Routh-Hurwitz criterion as follows%
\begin{equation}
\begin{aligned}
\text{Re}\{\omega_i\} <0\quad \text{for  } i=1,\dots,4 \quad \Leftrightarrow \quad %
&A_{0}>0, \\
&A_{1}>0, \\
&A_{3}>0, \\
&A_{4}>0, \\
&A_{1}A_{2}A_{3}>{A_{0}}%
A_{3}^{2}+A_{1}^{2}A_{4}, \quad \forall k.  \label{routhhur}
\end{aligned}
\end{equation}
It can be easily checked that the characteristic equation \eqref{poliqua}, evaluated at the desert state $\mathbf{W}_{D}^{\ast}$, can be factorised as
\begin{equation}
\left( \omega \mathcal{+}1-\mathrm{i}k\mathcal{V}\right) \left( \omega 
\mathcal{+}\frac{1}{\mathcal{D}}\right) \left[ \omega^{2}+\left( \frac{1}{%
\tau }+\mathcal{B}\right) \omega +\frac{k^{2}+\mathcal{B}}{\tau }\right] =0,
\end{equation}%
so that $\mathbf{W}_{D}^{\ast}$ is always stable under both homogeneous and heterogeneous perturbations.

To inspect the stability character of the two vegetated states $\mathbf{W}_{R,L}^{\ast}$, let us distinguish the behaviour under homogeneous ($k=0$) and heterogeneous
($k\neq0$) perturbations. In the former case, the characteristic equation %
\eqref{poliqua} reduces to 
\begin{equation}
\left( \tau \omega +1\right) \left( \omega^{3}+\beta _{1}\omega^{2}+\beta
_{2}\omega +\beta _{3}\right) =0
\label{WLR_polchar}
\end{equation}%
and, being $\omega _{1}$ $=$ $-1/\tau$ always real and
negative, the stability character of these steady-states depends on the sign of the three roots of the cubic polynomial appearing in \eqref{WLR_polchar}. In this case, the Routh-Hurwitz criterion states that
\begin{equation}
\begin{aligned}
\text{Re}\{\omega_i\} <0\quad \text{for  } i=2,3,4  \quad \Leftrightarrow \quad & \beta
_{1}>0, \\
&\beta _{3}>0, \\
&\beta _{1}\beta _{2}-\beta _{3}>0, \quad \forall k.
\end{aligned}
\end{equation}%
Therefore, the vegetated state $\mathbf{W}_{L}^{\ast}$ is unstable because
the condition $\,\beta _{3}>0$ is always violated, whereas $\mathbf{W}_{R}^{\ast}$ can be linearly stable under homogeneous perturbations, depending on the model parameters.
These results suggest to exclude further investigations on $\mathbf{W}_{L}^{\ast}$ and to focus on possible destabilisation mechanisms about the steady-state $\mathbf{W}_{R}^{\ast}$. In particular, the interest is in deducing those conditions under which spatial disturbances acting on a stable spatially-uniform steady-state trigger the onset of Turing or wave instabilities responsible for the formation of a stationary or an oscillating periodic patterned state, respectively.

For a Turing instability to occur, it is necessary that the characteristic equation admits a null eigenvalue $\omega =0$ for a non-null wavenumber, which is equivalent to requiring in \eqref{poliqua} that $A_{4}=0$ for $k\not=0$. Unfortunately, the possibility to observe stationary Turing patterns is prevented since the condition $f_{V}^{\ast}g_{U}^{\ast
}h_{S}^{\ast}=0$ is never fulfilled. 

On the other hand, for a wave instability to occur, it is necessary to look for roots of \eqref{poliqua} with null real part for non-null values of the wavenumber. Such an analysis can be addressed by recasting the perturbation \eqref{svi1} in the form of a travelling wave, mimicking the motion of a migrating pattern, by adopting the ansatz $\omega =-\mathrm{i}\mathcal{C}k$, where $\mathcal{C}$ denotes the migration speed of the patterned solution. By substituting this ansatz into the characteristic equation \eqref{poliqua} and taking its derivative with
respect to $k$, we have that Eq.~\eqref{model_compact} undergoes a wave bifurcation with critical wavenumber
\begin{equation}
k_{w}^{2}=\frac{\mathcal{V}\left( g_{V}^{\ast}h_{S}^{\ast}-h_{V}^{\ast
}g_{S}^{\ast}\right) +\alpha _{3}\mathcal{C}}{\alpha _{1}\mathcal{C}^{3}+%
\mathcal{V}\left[ 1-\left( g_{V}^{\ast}+h_{S}^{\ast}\right) \tau \right] 
\mathcal{C}^{2}+\left( f_{U}^{\ast}+h_{S}^{\ast}\right) \mathcal{C}%
+h_{S}^{\ast}\mathcal{V}}  \label{eq:kc}
\end{equation}
whenever the following conditions hold
\begin{equation}
\begin{cases}
\mathcal{C}\left( \mathcal{C}+\mathcal{V}\right) \left( \tau \mathcal{C}%
^{2}-1\right) \left[ \alpha _{3}\mathcal{C}+\mathcal{V}\left( g_{V}^{\ast
}h_{S}^{\ast}-g_{S}^{\ast}h_{V}^{\ast}\right) \right]^{2}+\left[ \alpha
_{3}\mathtt{c+}\mathcal{V}\left( g_{V}^{\ast}h_{S}^{\ast}-g_{S}^{\ast
}h_{V}^{\ast}\right) \right]\times \\ 
\left\{ \mathcal{V}\mathcal{C}\left[ g_{V}^{\ast}+h_{S}^{\ast}-\tau \left(
g_{V}^{\ast}h_{S}^{\ast}-g_{S}^{\ast}h_{V}^{\ast}\right) \right] -\alpha
_{2}\mathcal{C}^{2}+f_{U}^{\ast}h_{S}^{\ast}\right\} \times \\ 
\left\{ \alpha _{1}\mathcal{C}^{3}+\mathcal{V}\mathcal{C}^{2}\left[ 1-\tau
\left( g_{V}^{\ast}+h_{S}^{\ast}\right) \right] +\left( f_{U}^{\ast
}+h_{S}^{\ast}\right) \mathcal{C}+h_{S}^{\ast}\mathcal{V}\right\} + \\ 
+\beta _{3}\left\{ \alpha _{1}\mathcal{C}^{3}+\mathcal{V}\mathcal{C}^{2}\left[
1-\tau \left( g_{V}^{\ast}+h_{S}^{\ast}\right) \right] +\left( f_{U}^{\ast
}+h_{S}^{\ast}\right) \mathcal{C}+h_{S}^{\ast}\mathcal{V}\right\}^{2}=0,
\bigskip \\ 
2\left\{ \left[ 3\alpha _{1}\mathcal{C}^{2}+2\mathcal{V}\mathtt{%
c-}2\mathcal{V}\mathcal{C}\left( g_{V}^{\ast}+h_{S}^{\ast}\right) \tau
+f_{U}^{\ast}+h_{S}^{\ast}\right] k_{w}^{2}-\alpha _{3}\right\} \times \\ 
\left\{ \alpha _{2}\mathcal{C}^{2}-\mathcal{V}\mathcal{C}\left[ g_{V}^{\ast
}+h_{S}^{\ast}-\tau \left( g_{V}^{\ast}h_{S}^{\ast}-h_{V}^{\ast
}g_{S}^{\ast}\right) \right] -f_{U}^{\ast}h_{S}^{\ast}-2\mathcal{C}\left( 
\mathcal{C}+\mathcal{V}\right) \left( \tau \mathcal{C}^{2}-1\right)
k_{w}^{2}\right\} +\medskip \\ 
+\{3k_{w}^{2}[ \alpha _{1}\mathcal{C}^{3}+\mathcal{V}\mathcal{C}%
^{2}\left[ 1-\left( g_{V}^{\ast}+h_{S}^{\ast}\right) \tau \right] +\left(
f_{U}^{\ast}+h_{S}^{\ast}\right) \mathcal{C}+h_{S}^{\ast}\mathcal{V}]
+\mathcal{V}(g_{S}^{\ast}h_{V}^{\ast}-g_{V}^{\ast} h_{S}^{\ast
}) - \mathcal{C} \alpha_3\} \times \\ 
\left\{ \left( 4\tau \mathcal{C}^{3}+3\tau \mathcal{V}\mathcal{C}^{2}-2\mathtt{%
c}-\mathcal{V}\right) k_{w}^{2}-2\alpha _{2}\mathcal{C}+\mathcal{V}\left[
g_{V}^{\ast}+h_{S}^{\ast}-\tau \left( g_{V}^{\ast}h_{S}^{\ast
}-g_{S}^{\ast}h_{V}^{\ast}\right) \right] \right\} =0.%
\end{cases}
\label{eq:Bc}
\end{equation}%

From now on, in order to investigate the behaviour of the ecosystem under variations of rainfall, we consider $\mathcal{A}$ as the main control parameter. Therefore, System \eqref{eq:Bc} defines implicitly the
migration speed $\mathcal{C}$ at onset and the critical rainfall value $\mathcal{A}_c$ at which wave instability occurs. 
Due to the nonlinear and nontrivial dependence of \eqref{eq:Bc} on
the model parameters, quantitative information on the region where migrating patterns can be observed is deduced via
numerical investigations, as reported in Fig.~\ref{firstfigure}.
Here, the upper curve denotes the wave instability locus $\mathcal{A}=\mathcal{A}_{c}$ obtained by solving numerically System \eqref{eq:kc}-\eqref{eq:Bc} for the parameter set reported in the caption, whereas the straight line represents the existence threshold $\mathcal{A}=\mathcal{A}_{ex}$ for the state $\mathbf{W}_{R}^{\ast}$ given in \eqref{eq:ex}. These two loci subdivide the $\left(\mathcal{B},\mathcal{A}\right)$ parameter plane into three different regions: (i) for $\mathcal{A}<\mathcal{A}_{ex}$, only the desert state exists and is stable; (ii) for $\mathcal{A}>\mathcal{A}_{c}$, stable homogeneous vegetation cover ($\mathbf{W}_{R}^{\ast}$) and bare ground ($\mathbf{W}_{D}^{\ast}$) states coexist; (iii) for $\mathcal{A}_{ex}<\mathcal{A}<\mathcal{A}_{c}$, oscillatory periodic (migrating banded) patterns may be observed. 
\begin{figure}[t!]
	\centering
	\includegraphics[width=0.5\textwidth]{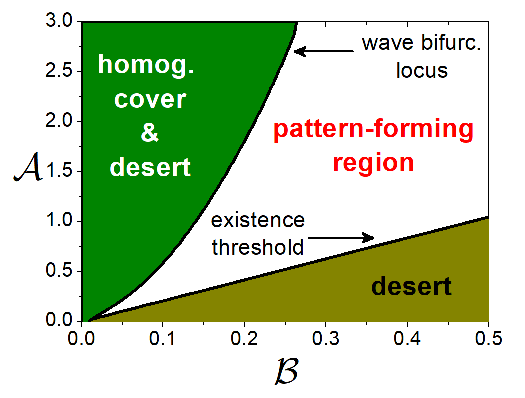}
	\caption{Subdivision of the $\left(\mathcal{B},\mathcal{A}\right)$-plane into three different zones according to the location of the existence threshold $\mathcal{A}=\mathcal{A}_{ex}$ and the wave bifurcation locus $\mathcal{A}=\mathcal{A}_{c}$ obtained via numerical integration of System \eqref{eq:kc}-\eqref{eq:Bc}. Parameter set: $\tau=1$, $\mathcal{H}=0.05$, $\mathcal{V} = 182.5$ and $\mathcal{D} = 4.5$. Note that a different choice of parameter values does not affect qualitatively the abovementioned subdivision of the parameter plane.}	
	\label{firstfigure}	
\end{figure}

This preliminary result allows us to explore, qualitatively, the pattern dynamics occurring into the wave instability region. To this aim, the governing system \eqref{model_compact}-\eqref{vectors_model} is integrated numerically through COMSOL Multiphysics \cite{Comsol} by using a computational domain $x\in[0,100]$ together with periodic boundary conditions. Then, starting from a configuration $(\mathcal{B}_c,\mathcal{A}_c)$ lying in the wave locus, two ecological scenarios reproducing the worsening of environmental conditions are considered: (i) increasing aridity, where the rainfall value is progressively decreased from $\mathcal{A}_c$, keeping the plant loss fixed and (ii) increasing plant mortality, where the plant loss is progressively increased from $\mathcal{B}_c$, keeping the rainfall constant. The above ramps occur over a time window $t\in[0,1000]$, and the initial condition is represented by a small random perturbation of the homogeneous steady-state $\mathbf{W}_{R}^{\ast}$. Results of the former investigation are reported in Fig.~\ref{figure0}(a-d), those of the latter in Fig.~\ref{figure0}(e-h). 
In both cases, the emerging migrating patterns exhibit qualitatively similar behaviours. In particular, very close to onset, migrating patterns are in the form of small-amplitude periodic oscillations around the equilibrium vegetated state (panels (b) and (e)). When the distance from the threshold increases, the amplitude of the oscillation increases too, the minimum of vegetation density approaches the bare ground, and the overall pattern shape deviates from that of a sinusoid (panels (c) and (f)). Very far from the threshold, the patterned solution takes the form of large-amplitude travelling periodic pulses that originate from the desert state and whose profile exhibits different spatial scales (panels (d) and (g)).
Also, the worsening of environmental conditions yields an increase of the migration speed but a reduction of pattern wavelength (panels a and h).
\begin{figure}[p]
	\centering
	\includegraphics[width=0.75\textwidth]{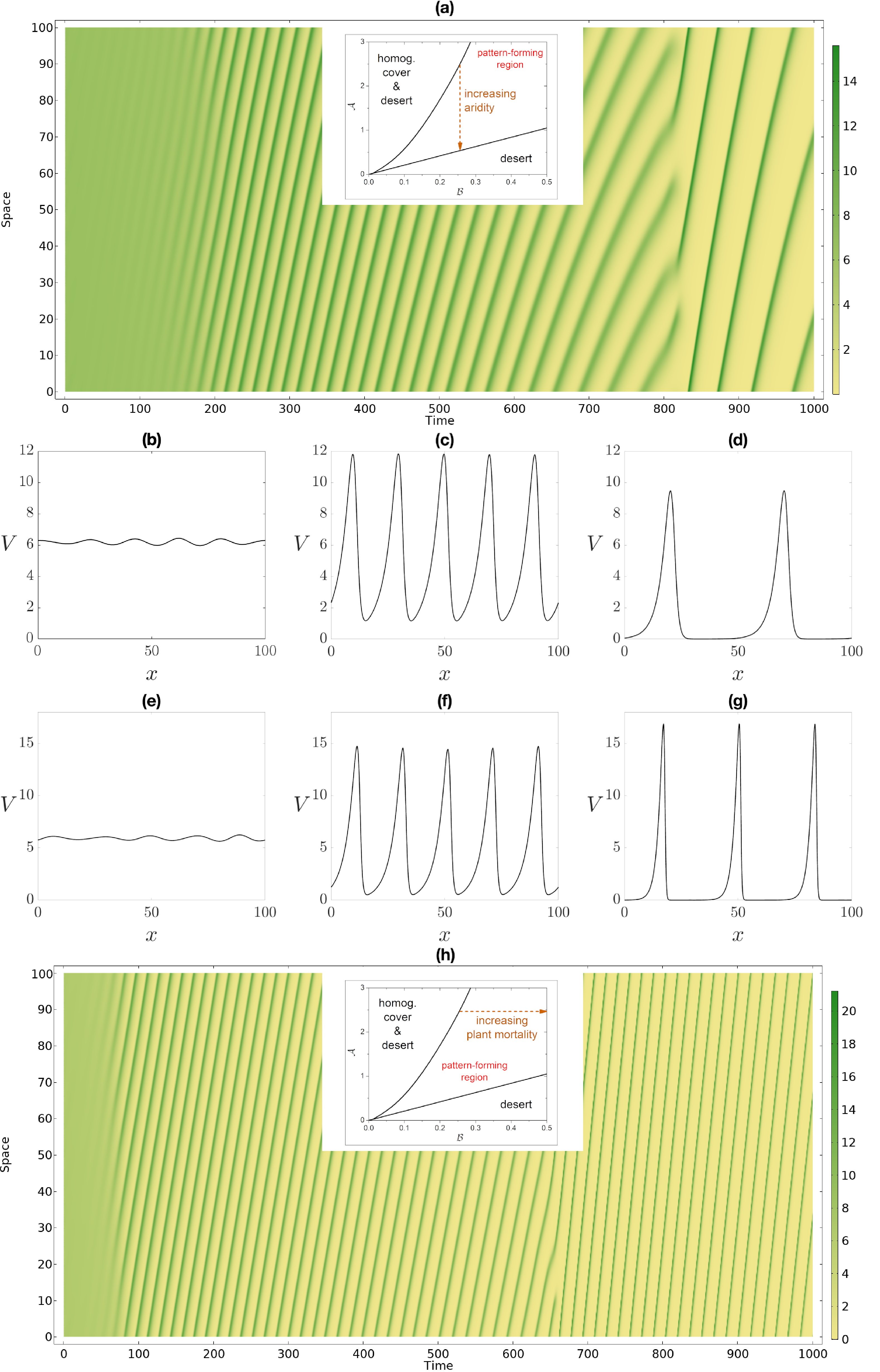}
	\caption{Vegetation pattern dynamics observed into the wave instability region under worsening environmental conditions for increasing aridity (panels a-d) and increasing plant mortality (panels e-h). Panels (a) and (h) show the spatio-temporal evolution of the patterned solution obtained by sweeping the rainfall or the plant loss, respectively, starting from the critical value $(\mathcal{B}_c,\mathcal{A}_c)$. Panels (b,e), (c,f) and (d,g) depict the spatial profiles of such solutions very close to the threshold, in the middle of the pattern-forming region and very far from the instability threshold, respectively. Parameter set: $\tau=1$, $\mathcal{H}=0.05$, $\mathcal{V} = 182.5$ and $\mathcal{D} = 4.5$ (same as in Fig.~\ref{firstfigure}).}	
	\label{figure0}	
\end{figure}

The main aim of the subsequent analysis is to characterise in detail the two most relevant class of solutions depicted in Fig.~\ref{figure0} observed close to and far from onset: small-amplitude oscillating periodic patterns and travelling periodic patterns. In particular, we focus on inspecting how inertia affects their key features. 

\section{Vegetation patterns close to onset} \label{sec:closeto}
\subsection{Characterisation of the wave bifurcation locus}
In this section, we exploit numerical investigations to elucidate the functional dependence of the wave bifurcation locus on the model parameters and, in turn, to gain deeper insights into the corresponding ecological implications. Results are shown in Fig.~\ref{figure1}. Here, the solid black line represents the existence locus $\mathcal{A}=\mathcal{A}_{ex}$ given in \eqref{eq:ex}. 
Wave bifurcation loci are computed by solving System \eqref{eq:kc}-\eqref{eq:Bc} for different values of $\mathcal{H}$ ($0$ and $0.05$) and $\tau$, namely $\tau\leq0.1$ (solid blue line), $\tau=1$ (dash-dotted red line) and $\tau=3$ (dashed green line). Inspection of these results reveals that increasing sensitivity to autotoxicity and inertia favors the enlargment of the pattern-forming region, as they both act as destabilising mechanisms \cite{Consolo2022PRE, Consolo2024, grifo2025far}. In particular, it should be noticed that the parabolic regime (negligible inertial effects) extends up to $\tau\leq0.1$. Significant inertial effects become tangible when the inertial time is increased by about one order of magnitude, leading to a relevant expansion of the pattern-forming region, especially for large values of rainfall.
\begin{figure}[t!]
	\centering
	\includegraphics[width=1\textwidth]{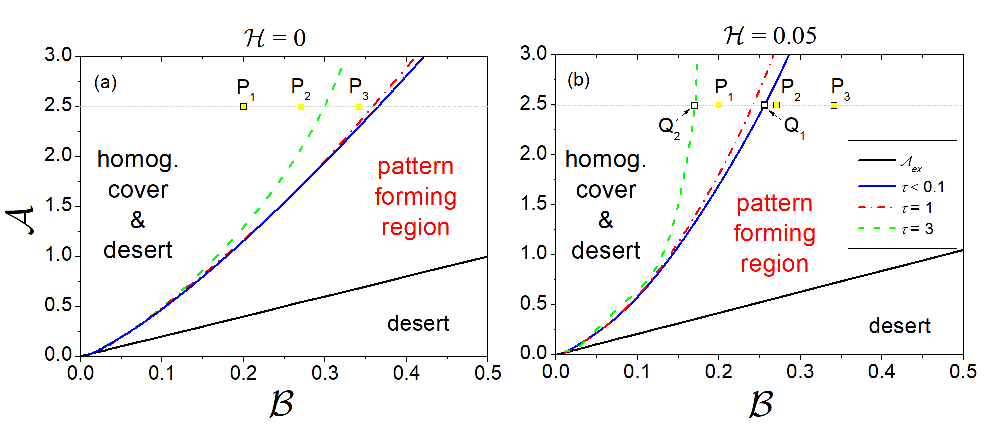}
	\caption{Wave bifurcation locus in the $\left(\mathcal{B},\mathcal{A}\right)$-plane for two different values of the toxicity strength: (a) $\mathcal{H}=0$, (b) $\mathcal{H}=0.05$. The loci are computed for different values of inertial time: $\tau\leq0.1$ (solid blue line), $\tau=1$ (dash-dotted red line) and $\tau=3$ (dashed green line). The solid black line denotes the existence locus $\mathcal{A}=\mathcal{A}_{ex}$.}	
	\label{figure1}	
\end{figure}

A validation of the previous predictions is achieved by means of numerical integration of the governing system \eqref{model_compact}-\eqref{vectors_model} through COMSOL Multiphysics \cite{Comsol}. Numerical simulations are carried out by using a computational domain $x\in[0,100]$ over a time window $t\in[0,200]$, periodic boundary conditions, and considering a small random perturbation of the homogeneous steady-state $\mathbf{W}_{R}^{\ast}$ as initial condition. In Fig.~\ref{figure3} we report the solutions associated to the points $\mathrm{P}_1$ (first column), $\mathrm{P}_2$ (second column) and $\mathrm{P}_3$ (third column), respectively, and obtained for different values of $\mathcal{H}$ and $\tau$. Results again reveal an excellent agreement with the above findings. Indeed, in all those cases reported in Fig.~\ref{figure1} where the point falls within the pattern-forming region, the numerical integration of the system gives rise to oscillatory patterned dynamics, namely vegetation bands migrating uphill. On the contrary, if the point lies beyond the wave bifurcation locus, the system converges towards the uniformly vegetated state. Of course, in the same figure, it is possible to appreciate patterned states with variable entities of migration speed and wavelengths, as they reproduce dynamics occurring at different distances from the instability threshold and at different rainfall values. 
\begin{figure}[p!]
	\centering
	\includegraphics[width=1\textwidth]{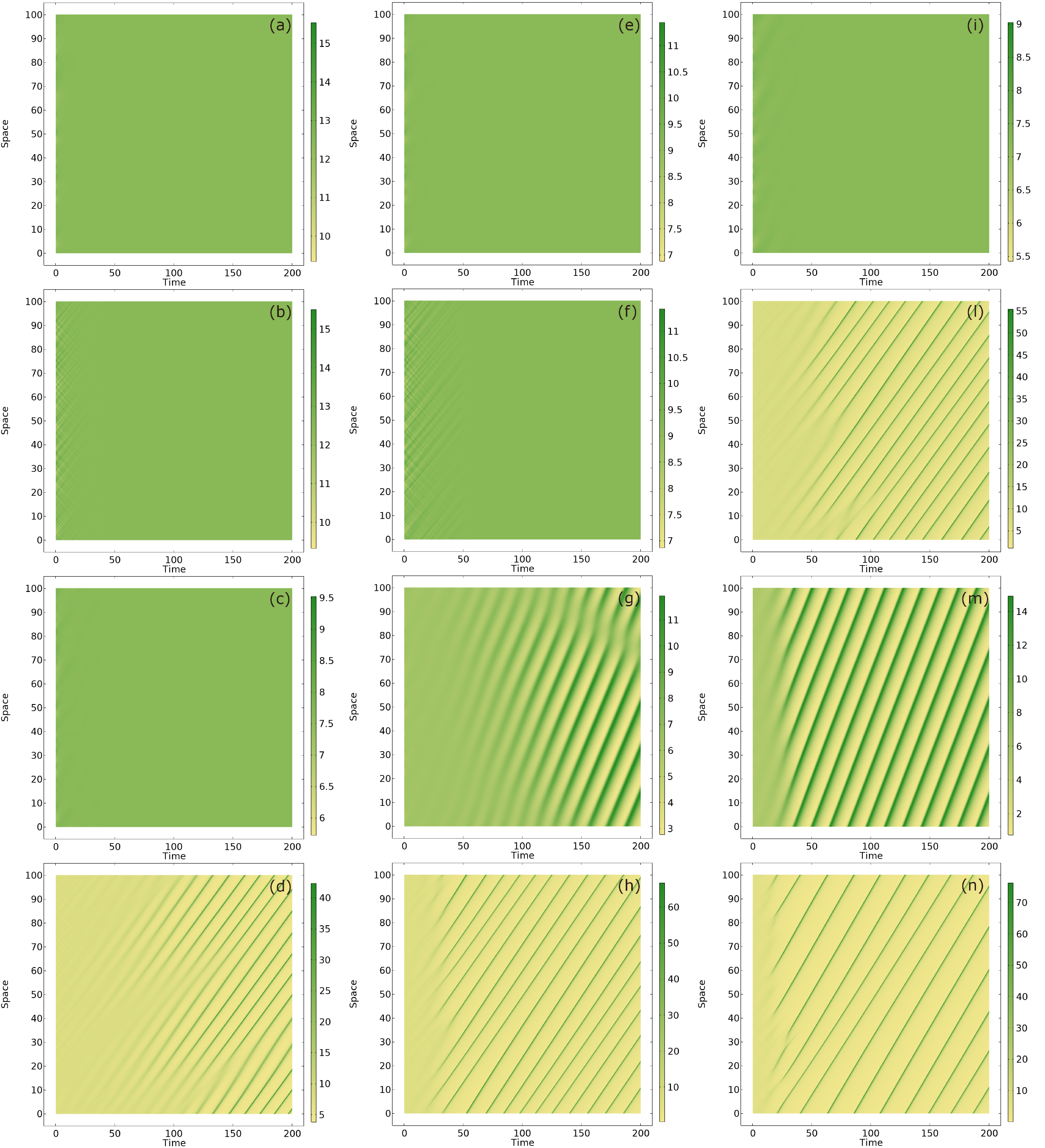}
	\caption{Vegetation patterns obtained by integrating numerically the governing System \eqref{model_compact}-\eqref{vectors_model} by using the parameter set $(\mathcal{B},\mathcal{A})$ corresponding to points $\mathrm{P}_1$ (first column), $\mathrm{P}_2$ (second column) and $\mathrm{P}_3$ (last column) depicted in Fig.~\ref{figure1}. The panels represent spatio-temporal evolutions attained for different values of autotoxicity strength and inertial time: $\mathcal{H}=0$ and $\tau=0.1$ (first row), $\mathcal{H}=0$ and $\tau=3$ (second row), $\mathcal{H}=0.05$ and $\tau=0.1$ (third row), and $\mathcal{H}=0.05$ and $\tau=3$ (last row).}	
	\label{figure3}	
\end{figure}

We now turn our attention to the dependence of the migration speed on the main model parameters and, in particular, on inertia. 
To this aim, we construct a density plot in the $\left(\mathcal{H},\mathcal{A}\right)$-plane for the migration speed $\mathcal{C}$ at the onset of wave instability. This result is obtained by solving numerically System \eqref{eq:kc}-\eqref{eq:Bc} for two values of inertial time: $\tau=0.1$ (see Fig.~\ref{figure6}(a)) and $\tau=3$ (see Fig.~\ref{figure6}(b)). These pictures provide several information on the migration speed of banded vegetation: (i) it increases with the increase of rainfall; (ii) for very small values of rainfall (about $\mathcal{A}<0.2$), it is almost independent of autotoxicity and gets quite small values ($\mathcal{C}\leq0.2$); (iii) for larger values of rainfall ($\mathcal{A}>0.2$), it becomes rainfall-dependent and the combination between the negative effect induced by autotoxicity (which speeds up the erosion process of the bottom part of the tiger bush) and the larger availability of water (giving rise to higher accumulation of soil moisture in the top part of vegetation band) favors a faster uphill migration of the overall pattern \cite{Tongway2001, Sherratt2012, SHERRATT2013, EIGENTLER2020b, Consolo2024II, EIGENTLER2024}; (iv) the presence of inertia tends to mitigate the role of autotoxicity, as weak variations of the migration speed are observed with variable toxicity strength; (v) the presence of inertia also slows down the growth of new vegetation above the upper edge of the tiger bush, so leading to overall smaller migration speeds (about 30$\%$ on average).   
\begin{figure}[t!]
	\centering
	\includegraphics[width=1\textwidth]{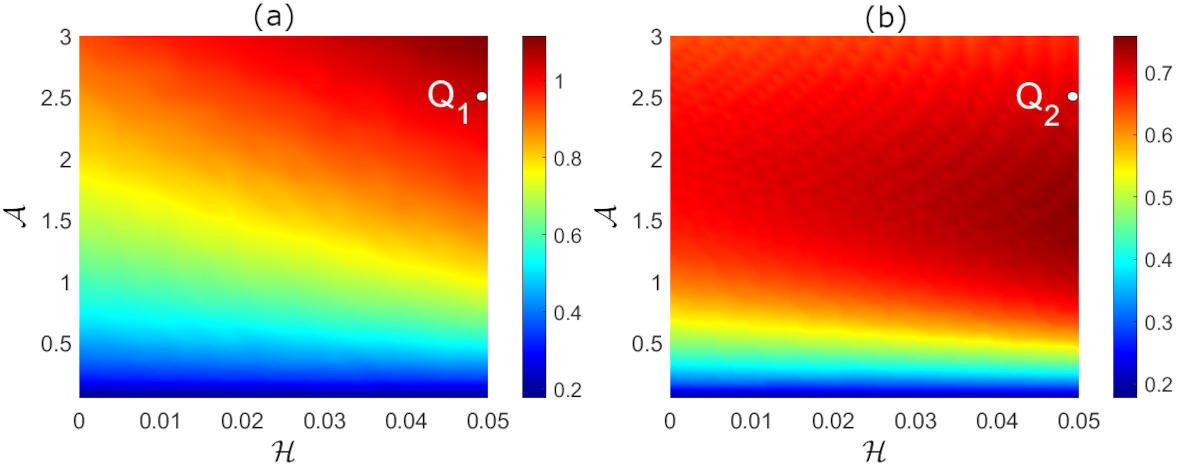}
	\caption{Migration speed $\mathcal{C}$ in the $\left(\mathcal{H},\mathcal{A}\right)$-plane at onset of wave instability obtained by solving System \eqref{eq:kc}-\eqref{eq:Bc} for (a) $\tau=0.1$ and (b) $\tau=3$.}	
	\label{figure6}	
\end{figure}

To verify whether these predictions on the migration speed deduced from LSA hold true, we again perform numerical simulations to integrate the governing System \eqref{model_compact}-\eqref{vectors_model} with a computational domain $x\in[0,x_D]=[0,100]$, a time window $t\in[0,t_{end}]=[0,500]$, periodic boundary conditions, and a small random perturbation of the homogeneous steady-state $\mathbf{W}_{R}^{\ast}$ as initial condition. For simplicity, we consider the parameter set $\left(\mathcal{H},\mathcal{A}\right)=(0.05, 2.5)$ and two different inertial times, $\tau=0.1$ and $\tau=3$. This setup is representative of the pattern dynamics observed at the points labeled as $\mathrm{Q}_1$ and $\mathrm{Q}_2$ in Fig.~\ref{figure6}, respectively. It is worth noticing that, since the migration speeds there reported are evaluated at onset through \eqref{eq:kc}-\eqref{eq:Bc}, the change of inertial time yields a shift of the wave locus, as previously described (see Fig.~\ref{figure1}). As a consequence of that, the scenario represented by the points $\mathrm{Q}_1$ and $\mathrm{Q}_2$ denotes vegetation dynamics taking place for different values of inertial time and plant loss, keeping rainfall and sensitivity to autotoxicity fixed, as also appears clear from Fig.~\ref{figure1}(b). For computational reasons, the values of plant loss $\mathcal{B}$ are set, in both cases, at a distance $0.1$ from the respective thresholds.
Results are depicted in Fig.~\ref{figure7} (panels (a) and (d)), where we also report the Fast Fourier Transforms (FFTs) of the vegetation density $V(x,t)$ evaluated either at a fixed location ($x=x_D/2$) (panels (b) and (e)) or at the final simulation time ($t=t_{end}$) (panels (c) and (f)). From the estimation of the location of the dominant peak in these graph, it is possible to estimate numerically the migration speed as $\mathcal{C}^{num}=\omega_0/k_0$ which is then compared with the theoretical one $\mathcal{C}^{th}$ reported in Fig.~\ref{figure6}. The analysis reveals a satisfying agreement in both cases: for $\tau=0.1$ we get $\mathcal{C}^{th}=1.04$ and $\mathcal{C}^{num}=1.03$ while, for $\tau=3$ we get $\mathcal{C}^{th}=0.71$ and $\mathcal{C}^{num}=0.73$. This also confirms that the reduction of migration speed is not only caused by inertia but there is an additional contribution due to plant loss (in line with observations reported in Fig.~\ref{figure0}).
\begin{figure}[tt!]
	\centering
	\includegraphics[width=1\textwidth]{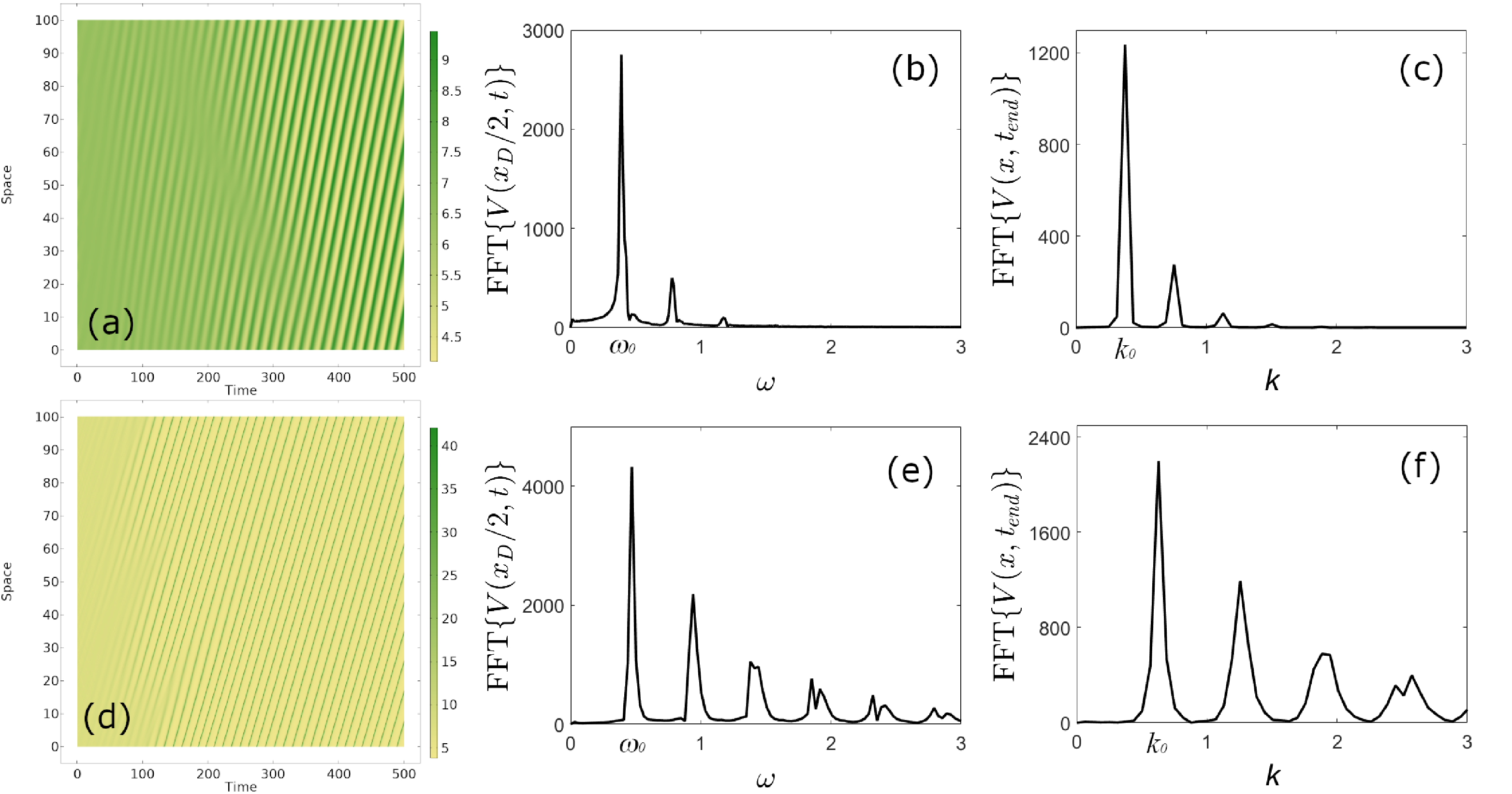}
	\caption{(a,d) Spatio-temporal vegetation patterns obtained by using the parameter set corresponding to points $\mathrm{Q}_1$ (a) and $\mathrm{Q}_2$ (d) depicted in Fig.~\ref{figure6} and in Fig.~\ref{figure1}(b). (b,e) FFTs of vegetation density evaluated at half of the computational domain FFT$\{V\left(x_D,t\right)\}$. (c,f) FFTs of vegetation density evaluated at the final simulation time FFT$\{V\left(x,t_{end}\right)\}$.}	
	\label{figure7}	
\end{figure}

\subsection{Evolution of pattern amplitude}
\label{sec:WNA}
To characterise the time evolution of the pattern amplitude near the instability onset, we perform a multiple-scale WNA \cite{Wollkind1994, Consolo2024, Giunta2021} around the steady-state $\mathbf{W}_R^{\ast}$ close to the critical value of the main control parameter $\mathcal{A}_c$ defined implicitly in \eqref{eq:kc}-\eqref{eq:Bc}.

First, we expand $\mathbf{N}\left( \mathbf{W}\right) $ around $\mathbf{W}_R^{\ast}$ and separate the linear and nonlinear parts by distinguishing
quadratic and cubic terms, thus recasting Eq.~\eqref{model_compact}-\eqref{vectors_model} into the following form
\begin{equation}
\overline{\mathbf{W}}_{t}=\mathcal{L}\overline{\mathbf{W}}+\frac{1}{2}%
\mathcal{Q}\left( \overline{\mathbf{W}},\overline{\mathbf{W}}\right) +\frac{1%
}{6}\mathcal{R}\left( \overline{\mathbf{W}},\overline{\mathbf{W}},\overline{%
\mathbf{W}}\right) ,  \label{eq:model_oper}
\end{equation}%
where $\overline{\mathbf{W}}=\mathbf{W}-\mathbf{W}_R^{\ast}$, $\mathcal{L}=%
\mathcal{L}^{\ast}-{M}\partial _{x}$, and the operators $\mathcal{Q}$
and $\mathcal{R}$ -- acting on the couple $\left( \mathbf{x},\mathbf{y}\right) $
and on the triplet $\left( \mathbf{x},\mathbf{y},\mathbf{z}\right) $,
respectively -- are expressed as%
\begin{eqnarray}
\mathcal{Q}\left( \mathbf{x},\mathbf{y}\right) &=&\left( \mathbf{x\cdot
\nabla }\right) \left( \mathbf{y\cdot \nabla }\right) \mathbf{N} \\
\mathcal{R}\left( \mathbf{x},\mathbf{y,z}\right) &=&\left( \mathbf{x\cdot
\nabla }\right) \left( \mathbf{y\cdot \nabla }\right) \left( \mathbf{%
z\cdot \nabla }\right) \mathbf{N}
\end{eqnarray}%
being, for a generic vector $\mathbf{w}$,
\[
\mathbf{w}\cdot \nabla =w_{1}%
\frac{\partial }{\partial U}+w_{2}\frac{\partial }{\partial V}+w_{3}\frac{%
\partial }{\partial S}+w_{4}\frac{%
\partial }{\partial J}.
\]

Then, we expand the field $\overline{\mathbf{W}}$ and the bifurcation
parameter $\mathcal{A}$ in terms of a positive small parameter $%
0<\epsilon \ll 1$, measuring the relative distance from the threshold, and account for a time hierarchy as follows:

\begin{equation}
\begin{array}{l}
\overline{\mathbf{W}}=\epsilon \overline{\mathbf{W}}_{1}+\epsilon^{2}\overline{\mathbf{W}}%
_{2}+\epsilon^{3}\overline{\mathbf{W}}_{3}+O\left( \epsilon^{4}\right) ,\medskip
\\ 
\mathcal{A}=\mathcal{A}_{c}+\epsilon \mathcal{A}_{1}+\epsilon^{2}%
\mathcal{A}_{2}+\epsilon^{3}\mathcal{A}_{3}+O\left( \epsilon
^{4}\right) ,\medskip \\ 
\frac{\partial }{\partial t}\rightarrow \frac{\partial }{\partial t}%
+\epsilon \frac{\partial }{\partial T_{1}}+\epsilon^{2}\frac{\partial 
}{\partial T_{2}}+\epsilon^{3}\frac{\partial }{\partial T_{3}}+O\left(
\epsilon^{4}\right) .%
\end{array}
\label{eq:sviluppo}
\end{equation}%
Furthermore, taking into account Eq.~\eqref{eq:sviluppo}, the linear, quadratic
and cubic operators $\mathcal{L}$, $\mathcal{Q}$, and $\mathcal{R}$ can be expanded as
\begin{equation}
\begin{array}{l}
\mathcal{L}=\mathcal{L}_{c}+\epsilon \mathcal{A}_{1}\mathcal{L}_{\mathcal{%
A}}+\epsilon^{2}\left( \mathcal{A}_{2}\mathcal{L}_{\mathcal{A}}+\frac{1}{%
2}\mathcal{A}_{1}^{2}\mathcal{L}_{\mathcal{A}\mathcal{A}}\right)
+\epsilon^{3}\left( \mathcal{A}_{3}\mathcal{L}_{\mathcal{A}}+\mathcal{A}%
_{1}\mathcal{A}_{2}\mathcal{L}_{\mathcal{A}\mathcal{A}}+\frac{1}{6}\mathcal{A%
}_{1}^{3}\mathcal{L}_{\mathcal{A}\mathcal{A}\mathcal{A}}\right) +O\left(
\epsilon^{4}\right) ,\medskip \\ 
\mathcal{Q}\left( \overline{\mathbf{W}},\overline{\mathbf{W}}\right)
=\epsilon^{2}\mathcal{Q}\left( \overline{\mathbf{W}}_{1},\overline{\mathbf{W}}_{1}\right)
+\epsilon^{3}\left[ 2\mathcal{Q}\left( \overline{\mathbf{W}}_{1},\overline{\mathbf{W}}%
_{2}\right) +\mathcal{A}_{1}\mathcal{Q}_{\mathcal{A}}\left( \overline{\mathbf{W}}_{1},%
\overline{\mathbf{W}}_{1}\right) \right] +O\left( \epsilon^{4}\right) , \medskip \\ 
\mathcal{R}\left( \overline{\mathbf{W}},\overline{\mathbf{W}},\overline{%
\mathbf{W}}\right) =\epsilon^{3}\mathcal{R}\left( \overline{\mathbf{W}}_{1},\mathbf{%
W}_{1},\overline{\mathbf{W}}_{1}\right) +O\left( \epsilon^{4}\right) ,%
\end{array}
\label{eq:exp_op}
\end{equation}%
where $\mathcal{L}_{c}=\mathcal{L}^{\ast}\vert_{\mathcal{A}=
\mathcal{A}_{c}}-%
{M}\partial _{x}$ and all the other operators here appearing are evaluated at $\mathcal{A}=\mathcal{A}_{c}$ as well. \\
\noindent
Substituting the expansions \eqref{eq:sviluppo}-\eqref{eq:exp_op} into %
Eq.~\eqref{eq:model_oper} and collecting terms of the same order of $%
\epsilon $, the following set of PDEs is obtained%
\begin{equation}
\begin{array}{ll}
\text{at order}\hspace{0.1cm}\epsilon: & \frac{\partial \overline{\mathbf{W}}_{1}}{%
\partial t}=\mathcal{L}_{c}\overline{\mathbf{W}}_{1},\medskip \\ 
\text{at order}\hspace{0.1cm}\epsilon^{2}: & \frac{\partial \overline{\mathbf{W}}_{2}}{%
\partial t}=\mathcal{L}_{c}\overline{\mathbf{W}}_{2}-\mathbf{F},\medskip \\ 
\text{at order}\hspace{0.1cm}\epsilon^{3}: & \frac{\partial \overline{\mathbf{W}}_{3}}{%
\partial t}=\mathcal{L}_{c}\overline{\mathbf{W}}_{3}-\mathbf{G},\medskip%
\end{array}
\label{ordini}
\end{equation}
where
\begin{equation}
\begin{array}{l}
\mathbf{F}=-\frac{1}{2}\mathcal{Q}\left( \overline{\mathbf{W}}_{1},\overline{\mathbf{W}}_{1}\right)
-\mathcal{A}_{1}\mathcal{L}_{\mathcal{A}}\overline{\mathbf{W}}_{1}+\frac{\partial 
\overline{\mathbf{W}}_{1}}{\partial T_{1}},\medskip \\ 
\mathbf{G}=-\mathcal{Q}\left( \overline{\mathbf{W}}_{1},\overline{\mathbf{W}}_{2}\right) -\frac{1}{6%
}\mathcal{R}\left( \overline{\mathbf{W}}_{1},\overline{\mathbf{W}}_{1},\overline{\mathbf{W}}_{1}\right) -%
\mathcal{A}_{1}\mathcal{L}_{\mathcal{A}}\overline{\mathbf{W}}_{2}-\mathcal{A}_{2}%
\mathcal{L}_{\mathcal{A}}\overline{\mathbf{W}}_{1}-\frac{1}{2}\mathcal{A}_{1}^{2}%
\mathcal{L}_{\mathcal{A}\mathcal{A}}\overline{\mathbf{W}}_{1}+\medskip \\ 
\hspace{1cm}-\frac{1}{2}\mathcal{A}_{1}\mathcal{Q}_{\mathcal{A}}\left( 
\overline{\mathbf{W}}_{1},\overline{\mathbf{W}}_{1}\right) +\frac{\partial \overline{\mathbf{W}}_{2}}{%
\partial T_{1}}+\frac{\partial \overline{\mathbf{W}}_{1}}{\partial T_{2}}.%
\end{array}
\label{eq:ordini_sor}
\end{equation}

To address the study of patterned solutions in the case of wave instability,
let us consider a travelling wave variable $\xi =x-\mathcal{C}t$, so that the PDEs
system \eqref{ordini} can be recast into the following set of ODEs

\begin{subequations}
\begin{alignat}{2}
&\text{at order}\hspace{0.1cm}\epsilon: \quad && \widehat{\mathcal{L}}\overline{\mathbf{W}}%
_{1}=\mathbf{0}, \label{ODE_ordini1} \medskip \\ 
&\text{at order}\hspace{0.1cm}\epsilon^{2}: \quad && \widehat{\mathcal{L}}\overline{\mathbf{W}}%
_{2}=\mathbf{F}, \label{ODE_ordini2} \medskip \\ 
&\text{at order}\hspace{0.1cm}\epsilon^{3}: \quad && \widehat{\mathcal{L}}\overline{\mathbf{W}}%
_{3}=\mathbf{G}, \label{ODE_ordini3}
\medskip%
\end{alignat}
\label{ODE_ordini}
\end{subequations}
being $\widehat{\mathcal{L}}=\mathcal{L}^{\ast}\vert_{\mathcal{A}=
\mathcal{A}_{c}}+\left( \mathcal{C}I-{M}\right) \partial _{\xi }$. 
At first order, the solution reads
\begin{equation}
\overline{\mathbf{W}}_{1}=\frac{1}{2}\left\{ \Omega \,\boldsymbol{\varphi }\,e^{\mathrm{i}%
\hat{k}\xi }+\overline{\Omega }\,\overline{\boldsymbol{\varphi }}\,e^{-\text{%
i}\hat{k}\xi }\right\} ,  \label{eq:sol_wave_ordine1}
\end{equation}
being $\overline{\hspace{0.1cm}\cdot \hspace{0.1cm}}$ the complex conjugate
and $\hat{k}$ the first unstable excited mode.
From the eigenvalue problem $\widehat{\mathcal{L}}\boldsymbol{\varphi }=\mathbf{0}$, we get $\hat{k}=k_w$ given in \eqref{eq:kc} and the vector 
\mbox{$\boldsymbol{\varphi }\in \text{Ker}\left\{
\mathcal{L}^\ast\vert_{\mathcal{A}=
\mathcal{A}_{c}} + \mathrm{i}\hat{k}\left(\mathcal{C}I-M\right)\right\}$} appearing in \eqref{eq:sol_wave_ordine1} is expressed as
\begin{equation}
\boldsymbol{\varphi }=\left[ 
\begin{array}{c}
1 \\ 
r_{2}+\mathrm{i}\hat{r}_{2} \\ 
r_{3}+\mathrm{i}\hat{r}_{3} \\ 
r_{4}+\mathrm{i}\hat{r}_{4}%
\end{array}%
\right] ,  \label{eq:autovettore_wave}
\end{equation}%
with 
\begin{equation}
\begin{alignedat}{2}
r_{2}   &= -\frac{f_{U}^{\ast}}{f_{V}^{\ast}},
&\qquad \hat{r}_{2} &= -\frac{(\mathcal{C}+\mathcal{V}) k_{w}}{f_{V}^{\ast}}, \\[0.7ex]
r_{3}   &= -\frac{h_{V}^{\ast}\left( h_{S}^{\ast}r_{2}+\mathcal{C}k_{w}\hat{r}_{2}\right)}
              {{h_{S}^{\ast}}^{2}+\mathcal{C}^{2}k_{w}^{2}},
&\qquad \hat{r}_{3} &= \frac{h_{V}^{\ast}\left( \mathcal{C}k_{w}r_{2}-h_{S}^{\ast}\hat{r}_{2}\right)}
              {{h_{S}^{\ast}}^{2}+\mathcal{C}^{2}k_{w}^{2}}, \\[0.7ex]
r_{4}   &= \frac{k_{w}\left( \mathcal{C}k_{w}\tau r_{2}+\hat{r}_{2}\right)}
              {\mathcal{C}^{2}k_{w}^{2}\tau^{2}+1},
&\qquad \hat{r}_{4} &= \frac{k_{w}\left( \mathcal{C}k_{w}\tau \hat{r}_{2}-r_{2}\right)}
              {\mathcal{C}^{2}k_{w}^{2}\tau^{2}+1}.
\end{alignedat}
\end{equation}
Therefore, at the second order, the non-homogeneous equation \eqref{ODE_ordini2} reduces to%
\begin{equation}
\begin{array}{l}
\widehat{\mathcal{L}}\mathbf{W}_{2}=-\frac{1}{8}\Omega^{2}\mathcal{Q}\left( 
\boldsymbol{\varphi },\boldsymbol{\varphi }\right) e^{2\mathrm{i}k_{w}\xi }-%
\frac{1}{4}|\Omega |^{2}\mathcal{Q}\left( \boldsymbol{\varphi },\overline{%
\boldsymbol{\varphi }}\right) -\frac{1}{8}\overline{\Omega }^{2}\mathcal{Q}%
\left( \overline{\boldsymbol{\varphi }},\overline{\boldsymbol{\varphi }}%
\right) e^{-2\mathrm{i}k_{w}\xi }+\medskip \\ 
\hspace{1cm}+\frac{1}{2}\Omega _{T_{1}}\boldsymbol{\varphi }e^{\mathrm{i}\hat{k%
}\xi }+\frac{1}{2}\overline{\Omega }_{T_{1}}\overline{\boldsymbol{\varphi }}%
e^{-\mathrm{i}k_{w}\xi }-\frac{1}{2}\mathcal{A}_{1}\mathcal{L}_{\mathcal{A}}%
\left[ \Omega \boldsymbol{\varphi }e^{\mathrm{i}k_{w}\xi }+\overline{\Omega }%
\overline{\boldsymbol{\varphi }}e^{-\mathrm{i}k_{w}\xi }\right] .%
\end{array}
\label{eq:ordine2_wave}
\end{equation}%
The solvability condition $\langle \mathbf{F},\boldsymbol{\chi }\rangle =0$
with $\boldsymbol{\chi \in }\text{Ker}\left\{ \left[ \mathcal{L}^{\ast}\vert_{\mathcal{A}=
\mathcal{A}_{c}}+\mathrm{i}%
k_w\left( \mathcal{C}{I}-{M}\right) \right]^{\dagger }\right\} $ implies that $\mathcal{A}_{1}=0$ and that the pattern amplitude must be independent of the slow
time scale $T_{1}$, i.e. $\Omega _{T_{1}}=0$. 
The vector $\chi$ reads
\begin{equation}
\boldsymbol{\chi }=\left[ 
\begin{array}{c}
1\\ 
n_{2}+\mathrm{i}\hat{n}_{2} \\ 
n_{3}+\mathrm{i}\hat{n}_{3} \\ 
n_{4}+\mathrm{i}\hat{n}_{4}%
\end{array}%
\right] ,  \label{eq:autovettore_wave}
\end{equation}%
with 
\begin{equation}
\begin{alignedat}{2}
n_{2}&=-\frac{f_{U}^{\ast}}{g_{U}^{\ast}}, &\qquad \hat{n}_{2}&=\frac{\left( \mathcal{C}+%
\mathcal{V}\right) k_{w}}{g_{U}^{\ast}},\medskip \\ 
n_{3}&=-\frac{g_{S}^{\ast}\left( h_{S}^{\ast}n_{2}-\mathcal{C}k_{w}\hat{n}_{2}%
\right) }{{h_{S}^{\ast}}^{2}+\mathcal{C}^{2}\,k_{w}^{2} }, &\qquad \hat{n}%
_{3}&=-\frac{g_{S}^{\ast}\left( \mathcal{C}k_{w}n_{2}+h_{S}^{\ast}\hat{n}_{2}\right) 
}{ {h_{S}^{\ast}}^{2}+\mathcal{C}^{2}\,k_{w}^{2} } \medskip \\ 
n_{4}&=\frac{k_{w}\tau\left( \mathcal{C}k_{w}\tau n_{2}-\hat{n}_{2}\right) }{
\mathcal{C}^{2}\,k_{w}^{2}\tau^{2}+1 } &\qquad \hat{n}_{4}&=\frac{k_{w}\tau\left( \mathcal{C}%
k_{w}\tau \hat{n}_{2}+n_{2}\right) }{\mathcal{C}^{2}\,k_{w}^{2}\tau
^{2}+1}%
\end{alignedat}
\end{equation}
According to the previous statements, the solution of %
\eqref{eq:ordine2_wave} can be expressed as
\begin{equation}
\mathbf{W}_{2}=\Omega^{2}\mathbf{m}_{22}e^{2\mathrm{i}k_w\xi }+\overline{\Omega }%
^{2}\overline{\mathbf{m}}_{22}e^{-2\mathrm{i}k_w\xi }+|\Omega |^{2}\mathbf{m}%
_{20},
\end{equation}%
where the vectors $\mathbf{m}_{20}$ and $\mathbf{m}_{22}$ satisfy the
following systems
\begin{equation}
\begin{array}{l}
\mathcal{L}^{\ast}\vert_{\mathcal{A}=
\mathcal{A}_{c}}\mathbf{m}_{20}=-\frac{1}{4}\mathcal{Q}\left( \boldsymbol{%
\varphi },\overline{\boldsymbol{\varphi }}\right) ,\medskip  \\ 
\left[ \mathcal{L}^{\ast}\vert_{\mathcal{A}=
\mathcal{A}_{c}}+2\mathrm{i}k_w\left( \mathcal{C}{I}-{M}\right) %
\right] \mathbf{m}_{22}=-\frac{1}{8}\mathcal{Q}\left( \boldsymbol{\varphi },%
\boldsymbol{\varphi }\right) ,\medskip  \\ 
\left[ \mathcal{L}^{\ast}\vert_{\mathcal{A}=
\mathcal{A}_{c}}-2\mathrm{i}k_w\left( \mathcal{C}{I}-{M}\right) %
\right] \overline{\mathbf{m}}_{22}=-\frac{1}{8}\mathcal{Q}\left( \overline{%
\boldsymbol{\varphi }},\overline{\boldsymbol{\varphi }}\right) .%
\end{array}
\label{eq:sistemi_sol_wave_ordine2}
\end{equation}
Moreover, at the third perturbative order, System \eqref{ODE_ordini3} can be written as
\begin{equation}
\begin{array}{l}
\widehat{\mathcal{L}}\mathbf{W}_{3}=\frac{1}{2}\left\{ \Omega _{T_{2}}%
\boldsymbol{\varphi }-\mathcal{A}_{2}\Omega \mathcal{L}_{\mathcal{A}}%
\boldsymbol{\varphi }-\Omega |\Omega |^{2}\left[ \mathcal{Q}\left( 
\boldsymbol{\varphi },\mathbf{m}_{20}\right) +\mathcal{Q}\left( \overline{%
\boldsymbol{\varphi }},\mathbf{m}_{22}\right) +\frac{1}{8}\mathcal{R}\left( 
\boldsymbol{\varphi },\overline{\boldsymbol{\varphi }},\boldsymbol{\varphi }%
\right)\right]  \right\} e^{\mathrm{i}{k}_w\xi }+ \medskip \\ 
\hspace{1.1cm}+\frac{1}{2}\left\{ \overline{\Omega }_{T_{2}}\overline{%
\boldsymbol{\varphi }}-\mathcal{A}_{2}\overline{\Omega }\mathcal{L}_{%
\mathcal{A}}\overline{\boldsymbol{\varphi }}-\overline{\Omega }|\Omega |^{2}%
\left[ \mathcal{Q}\left( \overline{\boldsymbol{\varphi }},\mathbf{m}%
_{20}\right) +\mathcal{Q}\left( \boldsymbol{\varphi },\overline{\mathbf{m}}%
_{22}\right) +\frac{1}{8}\mathcal{R}\left( \overline{\boldsymbol{\varphi }},%
\overline{\boldsymbol{\varphi }},\boldsymbol{\varphi }\right) \right]
\right\} e^{-\mathrm{i}{k}_w\xi }+\medskip  \\ 
\hspace{1.1cm}-\frac{1}{2}\Omega^{3}\left\{ \mathcal{Q}\left( \boldsymbol{%
\varphi },\mathbf{m}_{22}\right) +\frac{1}{24}\mathcal{R}\left( \boldsymbol{%
\varphi },\boldsymbol{\varphi },\boldsymbol{\varphi }\right) \right\} e^{3%
\mathrm{i}{k}_w\xi }+\medskip  \\ 
\hspace{1.1cm}-\frac{1}{2}\overline{\Omega }^{3}\left\{ \mathcal{Q}%
\left( \overline{\boldsymbol{\varphi }},\overline{\mathbf{m}}_{22}\right) +%
\frac{1}{24}\mathcal{R}\left( \overline{\boldsymbol{\varphi }},\overline{%
\boldsymbol{\varphi }},\overline{\boldsymbol{\varphi }}\right) \right\} e^{-3%
\mathrm{i}{k}_w\xi },\medskip 
\end{array}%
\end{equation}%
and its solvability condition $\langle \mathbf{G},\boldsymbol{\chi }\rangle
=0$ leads to the following complex cubic Stuart-Landau equation \cite{VanSaarloosHohenberg1992, Doelman1995, Mielke1998, Mielke2002, AransonKramer2002}
\begin{equation}
\frac{\partial \Omega }{\partial T_{2}} =\left( \sigma _{1}+\mathrm{i}\sigma
_{2}\right) \Omega -\left( L_{1}-\mathrm{i}L_{2}\right) {\lvert \Omega \rvert }%
^{2}\Omega ,  \label{eq:SL_wave} 
\end{equation}%
where the coefficients here appearing are given by 
\begin{eqnarray}
\sigma _{1}+\mathrm{i}\sigma _{2} &=&\mathcal{A}_{2}\frac{\mathcal{L}_{%
\mathcal{A}}{\boldsymbol{\varphi }}\cdot {\overline{\boldsymbol{\chi }}}}{{\boldsymbol{\varphi }}\cdot {\overline{\boldsymbol{\chi }}}},\medskip\\ 
L_{1}-\text{i%
}L_{2}&=&-\frac{1}{8}\frac{\left[ \mathcal{R}\left( \boldsymbol{\varphi },%
\boldsymbol{\varphi },\overline{\boldsymbol{\varphi }}\right) +8\mathcal{Q}\left( 
\boldsymbol{\varphi },\mathbf{m}_{20}\right) +8\mathcal{Q}\left( \overline{%
\boldsymbol{\varphi }},\mathbf{m}_{22}\right) \right] \cdot {%
\overline{\boldsymbol{\chi }}}}{{\boldsymbol{\varphi }}\cdot \overline{\boldsymbol{\chi }}}.  
\label{eq:SL_coeff_wave} 
\end{eqnarray}%
As known, the dynamical regime is identified by the real part of Landau coefficient $L_{1}$. However, due to the highly nontrivial and nonlinear dependence of the above coefficients on the model parameters, numerical investigations are needed to elucidate the role of autotoxicity and inertia in vegetation pattern dynamics close to onset.


The first investigation focuses on the dependence of the dynamical regime for patterned solutions on the model parameters, a feature which is expressed through the sign of the real part of the Landau coefficient $L_1$: positive (negative) values denote supercritical (subcritical) dynamics. In Fig.~\ref{figure8} we evaluate numerically this coefficient from Eq.~\eqref{eq:SL_coeff_wave} in the $\left(\mathcal{H},\mathcal{A}\right)$-plane for different values of inertial times ($\tau=1$ and $\tau=3$). For $\tau=1$ (panel (a)), there exists a region characterised by small values of autotoxicity ($\mathcal{H}<0.02$) and large values of rainfall ($\mathcal{A}>2.2$) where the coefficient $L_1$ becomes negative giving rise to the emergence of subcritical dynamics. By increasing the inertial time further up to $\tau=3$ (panel (b)), the region of subcritical modes takes place for any value of autotoxicity strength. In particular, this region tends to shrink for increasing $\mathcal{H}$ and is observed for progressively smaller values of rainfall. From the ecological viewpoint, in the presence of strong inertial effects, the emergence of large amplitude vegetation patterns requires moderate levels of rainfall which may be gradually reduced with the increase of autotoxicity strength. 
\begin{figure}[b!]
	\centering
	\includegraphics[width=1\textwidth]{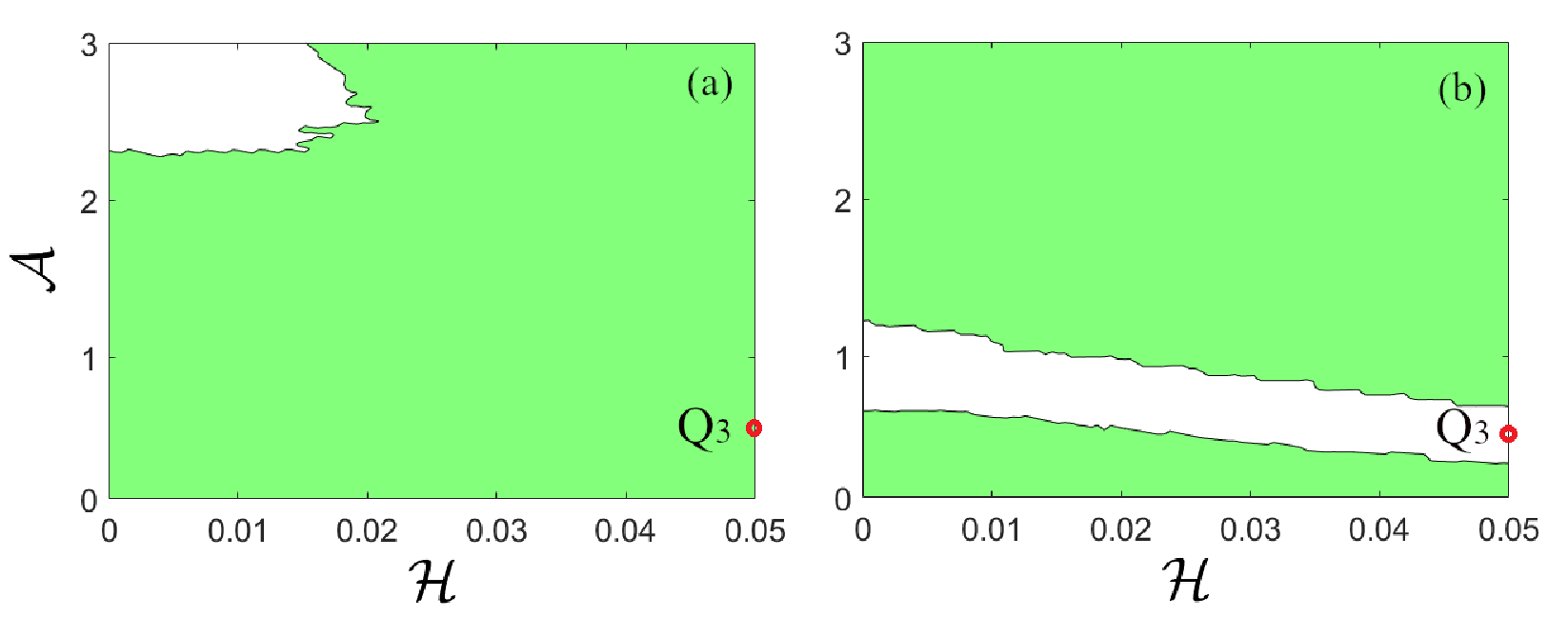}
	\caption{Landau coefficient $L_1$ obtained for (a) $\tau=1$ or (b) $\tau=3$: positive (negative) values are represented in green (white) and denote a supercritical (subcritical) regime.}	
	\label{figure8}	
\end{figure}

To validate this theoretical prediction, we inspect further aspects of the underlying phenomenon, as summarised in Fig.~\ref{figure9}. Here, we consider the parameter set corresponding to point $\mathrm{Q}_3=\left(\mathcal{H},\mathcal{A}\right)=(0.05,0.5)$ depicted in Fig.~\ref{figure8} and perform additional numerical investigations with $\tau=1$ (left panels) and $\tau=3$ (right panels). A time window $t\in[0,1000]$ is here implemented. At first, simulations make use of a small random perturbation of the steady-state $\mathbf{W}_{R}^{\ast}$ and the control parameter is set slightly above threshold by fixing $\epsilon^2=0.1$ and $\mathcal{A}_2=\mathcal{A}_c$ in \eqref{eq:sviluppo}. 
This choice provides the two patterned solutions shown in the first row of Fig.~\ref{figure9}. As it can be noticed, these solutions appear significantly different for a twofold reason. On the one hand, the width of the vegetated portion of the tiger bush is much smaller far from the parabolic limit (compare the extension of the green regions in the two figures with respect to the yellow ones). On the other hand, considerably larger variations of the pattern amplitude are observed far from the parabolic limit (see the different scales in the colorbar). These constitute a preliminary indication on the different dynamical regime in which such patterns emerge. 
\begin{figure}[t!]
	\centering
	\includegraphics[width=1\textwidth]{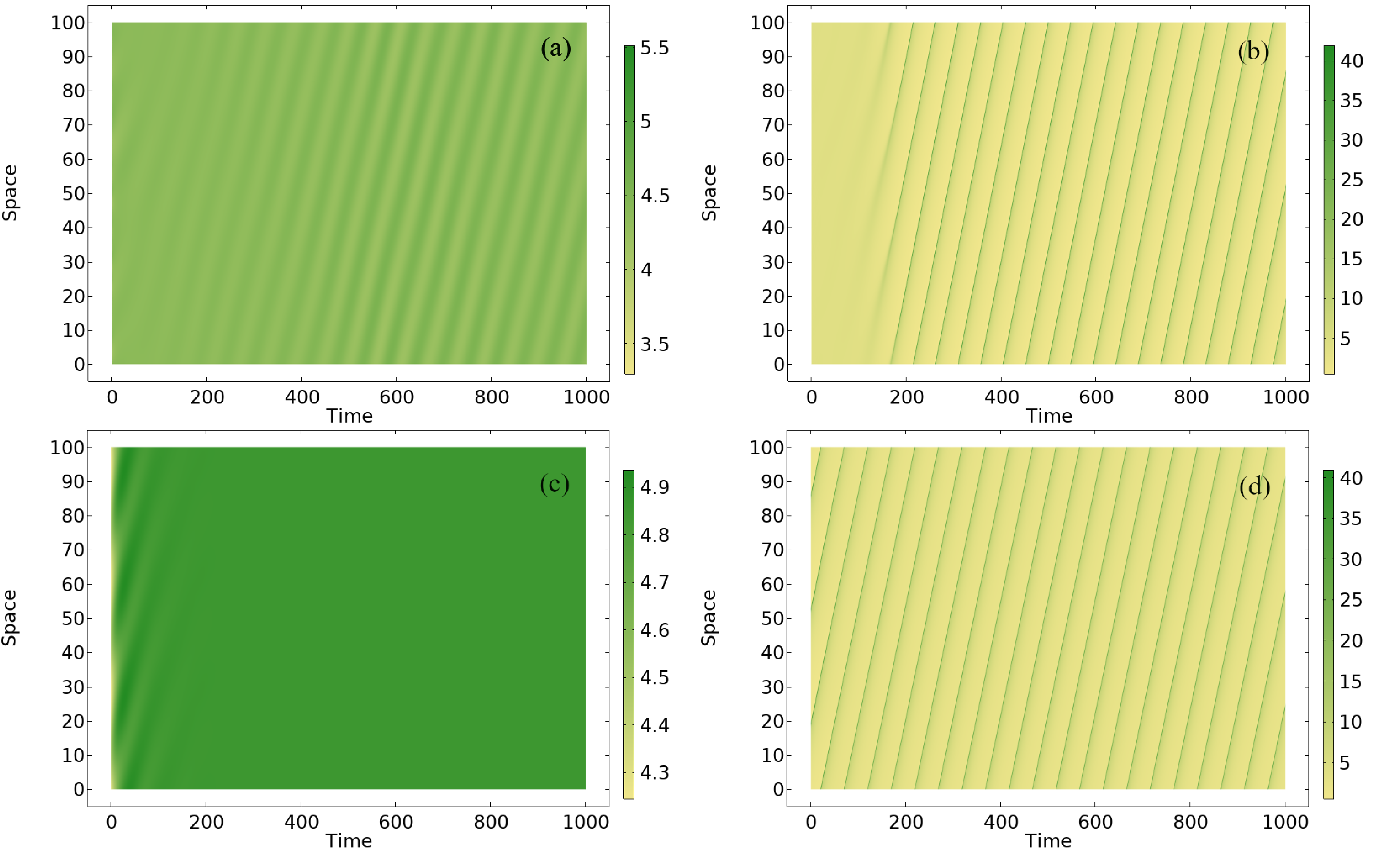}
	\caption{(a,b) Vegetation patterns obtained by using the parameter set corresponding to point $\mathrm{Q}_3=\left(\mathcal{H},\mathcal{A}\right)=(0.05,0.5)$ depicted in Fig.~\ref{figure8} for different values of inertial time (left column: $\tau=1$, right column: $\tau=3$). Here, the initial condition is a small random perturbation of the steady-state $\mathbf{W}_{R}^{\ast}$ and the control parameter is set slightly above threshold by an amount of $\epsilon^2=0.1$.
(c,d)  Vegetation patterns obtained using the final configuration ($t=1000$) depicted in (a,b) as initial condition for the new simulation where the control parameter is set slightly below threshold by an amount of $\epsilon^2=0.1$}.	
	\label{figure9}	
\end{figure}
To confirm it, we carry out another investigation where the final configuration obtained in panels (a) and (b) is taken as initial condition for a new simulation where the control parameter is, now, set slightly below the threshold by an amount of $\epsilon^2=0.1$. Results corroborate the validity of our predictions: for $\tau=1$, the patterned solutions progressively die out and the system converges towards the uniform vegetated state (panel (c)) whereas, for $\tau=3$, the patterned state still survives. This is a clear evidence that, in the former case, dynamics take place in the supercritical regime, as the branch of patterned solution originates gradually from the homogeneous (zero-amplitude) state and no stable solutions exist for below-threshold values of the control parameter. On the contrary, in the latter case, dynamics fall into the subcritical regime since, even close to onset, patterns exhibit large amplitude and, below onset, they still exist denoting a hysteretic behaviour. Therefore, it can be safely concluded that the point $\mathrm{Q}_3$ selected in Fig.~\ref{figure8}(a) and (b) lies in the supercritical and subcritical regime, respectively. This result is also in line with our previous theoretical investigations \cite{Consolo2022PRE, Consolo2022III, Grifo2025II}, where it was pointed out that inertia may not only act as a simple time lag but can also affect the regime in which patterned dynamics take place.

\section{Travelling vegetation pulses far from onset} \label{sec:travpulse}

In the following, consistently with ecologically relevant scenarios, we consider {$\mathcal{V}=1/\varepsilon$} with $0 < \varepsilon \ll 1$. System \eqref{eq:modadim} can thus be rewritten as
\begin{equation} \label{eq:modadimeps}
\begin{aligned}
    U_t-\varepsilon^{-1}U_x &= \mathcal{A}-U-U V^2, \\
    V_t+J_x &= U V^2-\mathcal{B}V-\mathcal{H}SV, \\
    \mathcal{D} S_t &= \mathcal{B} V + \mathcal{H}S V-S, \\
    \tau J_t +V_x &= -J.
\end{aligned}
\end{equation}
\noindent
The steady-states associated to Eq.~\eqref{eq:modadimeps} are given by $\left( U^\ast, V^\ast, S^\ast, J^\ast \right) = \left( \mathcal{A}, 0,0,0 \right)$ representing the desert state and two nontrivial homogeneous steady-states as given in \eqref{equilibria}.

The goal of this section is to investigate the influence of the inertial parameter $\tau$ on the speed $\mathcal{C}>0$ and the structure of travelling pulse solutions originating from the desert state observed numerically in the far-from-equilibrium regime (see e.g.~Fig.~\ref{fig:num_GSPT}). In particular, we aim at constructing these pulses using the methods of GSPT and compare them with the ones studied in the parabolic case ($\tau=0$) in \cite{grifo2025far}. We remark that, as highlighted in \cite{Carter2018}, periodic wave train solutions can be constructed by concatenating individual vegetation stripes on a one-dimensional domain; because of this reason, and consistently with \cite{grifo2025far}, we will focus our subsequent analysis on single travelling pulses.\\
To set the stage for our analytical construction, we first provide numerical evidence illustrating the existence and qualitative features of travelling pulses in the hyperbolic regime. To this goal, direct numerical simulations of System \eqref{eq:modadimeps} are performed in COMSOL Multiphysics\textsuperscript{\textregistered} \cite{Comsol} over a spatial domain of length $L=2000$ with periodic boundary conditions. Initial conditions are set as constant for water $U$ and autotoxicity $S$, namely $U(x,0)=0.5$ and $S(x,0)=0$ for all $x\in[0,L]$, whereas for the vegetation variable $V$ we prescribe a Gaussian pulse centered at $x=300$ with standard deviation $\sigma = 25$ and amplitude $V_{\max} = 10$. Model parameters are fixed as $\mathcal{A} = 1.2$, $\mathcal{B} = 0.45$, $\varepsilon = 0.005$, $\mathcal{D} = 4.5$, and $\mathcal{H} = 1$. Results are illustrated in Fig.~\ref{fig:num_GSPT}. In particular, panels~(a)–(c) report the spatial profiles of surface water $U$, biomass $V$, and toxicity $S$ for the travelling pulse obtained for the two values of the inertial parameter $\tau=0$ and $\tau=0.019$. These profiles highlight the nonlinear character of the solution and the coexistence of multiple spatial scales. Moreover, it can be observed that increasing $\tau$ slightly affects the amplitude of the vegetation and toxicity components while leaving the overall shape of all profiles essentially unchanged. On the other hand, panels~(d)–(e) display the corresponding space–time evolution of the biomass density, confirming the existence of a localised peak propagating at constant speed across the domain. A quantitative comparison of the migration speeds reveals a clear influence of inertia: for $\tau=0$ the numerically pulse speed is $\mathcal{C}_{\mathrm{num}}^{(0)} = 5.84$, whereas for $\tau=0.019$ it increases to $\mathcal{C}_{\mathrm{num}}^{(0.019)} = 6.08$.
Analogously to the parabolic case, here we also have that the state variables evolve on three scales -- defined as \emph{superslow}, \emph{slow} and \emph{fast} -- which are to be intended in terms of the travelling wave variable rather than in the temporal sense. 
However, moving from the parabolic to the hyperbolic framework leads to a change of monotonicity in the wave speed (consistently with numerical observations) and also requires a more complex analysis of the pulse' dynamics in phase space. Our main analytical result consists in the following

\begin{theorem} \label{thm:main}
    For $0 < \varepsilon \ll 1$ sufficiently small and $0 < \tau < \tau_{\rm max} \leq \tau_{\rm upper}$,
    there exists a unique $\theta(\tau)>0$ such that there exists a travelling wave solution $\left( U, V, S, J \right) (x,t) = \left( U, V, S, J \right) (x-\mathcal{C} t)$ of System \eqref{eq:modadim} with wave speed $\mathcal{C}=\left( \frac{\mathcal{A}^2 \theta(\tau)^2}{\varepsilon} \right)^{1/3} + \mathcal{O}(1)=:\mathcal{C}^\ast(\tau)$.
\end{theorem}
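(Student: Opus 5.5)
The plan is to pass to the travelling-wave frame, reduce the problem to a four-dimensional slow–fast system, and then build the pulse as a homoclinic orbit to the desert state $(\mathcal{A},0,0,0)$. The orbit will be obtained by concatenating a fast excursion of $(V,J)$ with slow and superslow segments, and then persisting the resulting singular orbit for $0<\varepsilon\ll1$, in the spirit of the parabolic construction of \cite{grifo2025far}.

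\textbf{Travelling-wave system and the role of $\tau$.} Substituting $\xi=x-\mathcal{C}t$ into \eqref{eq:modadimeps} gives
\[
U'=-\frac{\varepsilon}{1+\varepsilon\mathcal{C}}\,(\mathcal{A}-U-UV^2),\qquad
S'=-\frac{1}{\mathcal{D}\mathcal{C}}\,(\mathcal{B}V+\mathcal{H}SV-S).
\]
The remaining two equations, $-\mathcal{C}V'+J'=g$ and $-\tau\mathcal{C}J'+V'=-J$, form a linear system for $(V',J')$. Solving it yields
\[
(1-\tau\mathcal{C}^2)\,V'=\tau\mathcal{C}\,g-J,\qquad (1-\tau\mathcal{C}^2)\,J'=g-\mathcal{C}J .
\]
The system is therefore regular exactly when $\tau\mathcal{C}^2\neq1$. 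I will read $\tau_{\rm upper}$ as the threshold guaranteeing $\hat\tau:=\tau\mathcal{C}^{\ast2}<1$ along the whole construction; since $\mathcal{C}^\ast\sim\varepsilon^{-1/3}$, this is a genuine restriction on $\tau$.

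\textbf{Scaling.} Next I introduce the scalings
\[
\mathcal{C}=\varepsilon^{-1/3}c,\qquad U=\varepsilon^{a}u,\qquad V=\varepsilon^{-b}v,\qquad J=\varepsilon^{-b'}j,
\]
and fix the exponents by balancing. On the fast scale, $UV^2$ must compete with $\mathcal{B}V$ and with the $\mathcal{C}$-advection of $V$. Across the pulse, the integrated uptake $\int UV^2$ must produce an $\mathcal{O}(1)$ jump in $U$.

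With these scalings the system splits into three scales:
\begin{itemize}
\item a \emph{fast} $(v,j)$ layer with $u$ and $S$ frozen;
\item a \emph{slow} scale on which $S$ relaxes with rate $1/(\mathcal{D}\mathcal{C})$;
\item a \emph{superslow} scale on which $u$ recovers linearly toward $\mathcal{A}$.
\end{itemize}
In the layer problem, the inertia enters only through the combination $\hat\tau$. The layer reduces to a planar system, equivalently a second-order equation of the form $v''+c\,(\dots)v'=\dots$ with coefficients rescaled by $(1-\hat\tau)$. I expect to eliminate the damping term by an exponential change of variables or a suitable limit, so that the layer admits an explicit $\mathrm{sech}^2$-type homoclinic, as in the classical Klausmeier pulse. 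On the slow and superslow scales, the critical manifold $\{v=j=0\}$ is normally hyperbolic (saddle type) thanks to the $-\mathcal{B}V$ term, so Fenichel theory \cite{Fe79} gives persistent slow manifolds together with their stable and unstable fibres.

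\textbf{Singular orbit and the equation for $\theta(\tau)$.} The singular homoclinic is assembled from three pieces:
\begin{itemize}
\item the superslow orbit along which $u$ increases back to $\mathcal{A}$;
\item a take-off, at some $u=u_\ast$, along the unstable fibre into the fast pulse;
\item the fast pulse, during which $u$ jumps by $\Delta u=-\int u v^2\,d\xi$ and $S$ jumps by an amount computed similarly. After it, the orbit lands on the stable fibre and the slow $S$-decay and superslow $u$-recovery close the loop.
\end{itemize}
Requiring that the jump in $u$ match the superslow return is a Melnikov/matching condition, namely an algebraic equation $F(\theta,\hat\tau)=0$ for the rescaled speed parameter $\theta$, where $c^3=\mathcal{A}^2\theta^2$. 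The theorem's formula $\mathcal{C}=(\mathcal{A}^2\theta^2/\varepsilon)^{1/3}$ is exactly this leading-order speed.

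To get existence and uniqueness of $\theta(\tau)>0$, I will show that $F(\cdot,\hat\tau)$ changes sign and is strictly monotone on the admissible range. At $\tau=0$ the equation should reduce to the parabolic relation of \cite{grifo2025far}. The implicit function theorem in $\tau$ then continues the root, and $\tau_{\max}$ is defined as the largest value up to which both conditions hold: monotonicity of $F$ and $\hat\tau<1$. Finally, transversality of the intersection of the unstable and stable manifolds (the Melnikov derivative $\partial_\theta F\neq0$) together with Fenichel persistence yields a true homoclinic orbit for small $\varepsilon$, with an $\mathcal{O}(1)$ correction to $\mathcal{C}$.

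\textbf{Main obstacle.} The main obstacle is the fast layer analysis with inertia. The coupled $(v,j)$ equations carry the factor $(1-\tau\mathcal{C}^2)$, and $\tau\mathcal{C}^2$ is itself tied to the unknown speed. This makes the matching equation implicit in $\theta$ through $\hat\tau=\tau\mathcal{A}^{4/3}\theta^{4/3}\varepsilon^{-2/3}$, so the admissible scaling of $\tau$ and the monotonicity of $F$ must be handled together.

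My first attempt will be to treat $\hat\tau\in[0,1)$ as an independent parameter. I will compute the layer homoclinic and the jump integrals explicitly as functions of $\hat\tau$, and only afterwards impose the self-consistency $\hat\tau=\tau c^2\varepsilon^{-2/3}$. From that self-consistency I will extract $\tau_{\rm upper}$ and $\tau_{\max}$.
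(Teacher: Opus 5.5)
\textbf{There is a genuine gap in your fast-layer analysis, and it sits exactly where $\theta(\tau)$ is supposed to come from.}

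With $\mathcal{C}=c\varepsilon^{-1/3}$ and the balance you describe ($V\sim\varepsilon^{-2/3}$, fast variable $\xi=c^2\varepsilon^{1/3}\zeta$), neither $u$ nor $s$ is frozen on the fast scale. The uptake term gives $\dot u=uv^2/(1+\delta)+\mathcal{O}(\delta^2)$, and the toxicity equation reduces to $\dot s=-\mathcal{H}sv/\mathcal{D}$. This is forced by your own requirement that $U$ jump by $\mathcal{O}(1)$ across the pulse. If you did freeze $u=u_\ast$, the planar $(v,j)$-layer would have the origin as its only equilibrium, with eigenvalues $0$ and $-c^3/(1-c^3\hat\tau)$. (Here $\hat\tau$ is the paper's scaling, $\tau\mathcal{C}^2=c^3\hat\tau$.) There would be no saddle for a $\mathrm{sech}^2$-type homoclinic to attach to. Moreover, the damping $-c^3j$ is $\mathcal{O}(1)$ and cannot be transformed away.

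The quantity that is actually frozen is $w=(1+\delta)u-j+v+\mathcal{D}s$. The layer problem \eqref{eq:laypb} then has an extra critical branch $\mathcal{M}^{(2)}=\{v=w,\ j=s=0\}$, on which $u=0$. The fast part of the pulse is a concatenation of two fronts, $p_3(a,a/\mathcal{D})\to p_2(a)\to p_1(a)$, across which $u$ rises from $0$ to $a$. The speed is selected inside the layer: the strong stable manifold of $p_1(a)$ in the planar system \eqref{eq:laypbs0}, whose nonlinearity is $(w+j-v)v^2$, must hit the saddle $p_2(a)$. This has no closed form; even $\theta_0\approx0.8615$ at $\tau=0$ is numerical. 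So the argument must be a shooting/trapping one in $c$:
\begin{itemize}
\item for $c^3<w^2/(3+\hat\tau w^2)$, a comparison curve forces exit through the $v$-nullcline;
\item for large $c$, a second comparison curve forces exit through $\{v=w\}$;
\item uniqueness follows from a Melnikov integral in $c$;
\item $\theta(T)$, with $T=\hat\tau a^2$, is then computed numerically.
\end{itemize}

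\textbf{Consequences for the rest of your plan.}

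First, your matching condition $F(\theta,\hat\tau)=0$ is not the speed-selecting mechanism here. The reduced flows on $\mathcal{M}^{(3)}$ and $\mathcal{M}^{(1)}$ involve neither $c$ nor $\hat\tau$. The slow leaf $\mathcal{M}^{(3),0}$, where $s=w/\mathcal{D}$, passes automatically through the landing point $s^\ast=a/\mathcal{D}$. That landing point is fixed by the invariant plane $\{a+j-v-\mathcal{D}s=0\}$ of Lemma \ref{lemma:fc1}, so nothing is left to match.

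Second, the normal hyperbolicity you attribute to $-\mathcal{B}V$ is not available. In the scaled system $\mathcal{B}$ enters only at $\mathcal{O}(\delta)$. At $\{v=j=s=0\}$ the unstable rate is therefore $\mathcal{O}(\delta)$, comparable to the slow flow, so there is no spectral gap. The usable saddle structure comes from the $\mathcal{H}sv$ term on $\mathcal{M}^{(3)}$ with $s>0$, and persistence also needs the Exchange Lemma.

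Third, you miss the constraint that actually fixes $\tau_{\rm upper}$, because it only appears once you see the front into $p_2(a)$. For $\hat\tau>1/a^2$ one has $\dot v>0$ on $\{v=w,\ j>0\}$, so the front cannot exist. Since $c^3=a^2\theta^2$ with $\theta<1$, this amounts to $\tau\mathcal{C}^2<\theta^2$. That is strictly stronger than your condition $\tau\mathcal{C}^2<1$.
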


\noindent
Further details regarding both the function $\theta(\tau)$ as well as $\tau_{\max}$, $\tau_{\rm upper}$ will be provided in Sec.~\ref{sec:persist}. We remark here that for the parabolic case we retrieve $\theta(0)\approx0.8615=\theta_0$, consistently with \cite{Carter2018, grifo2025far}.


\begin{figure}[t!]
	\centering
	\includegraphics[width=1\textwidth]{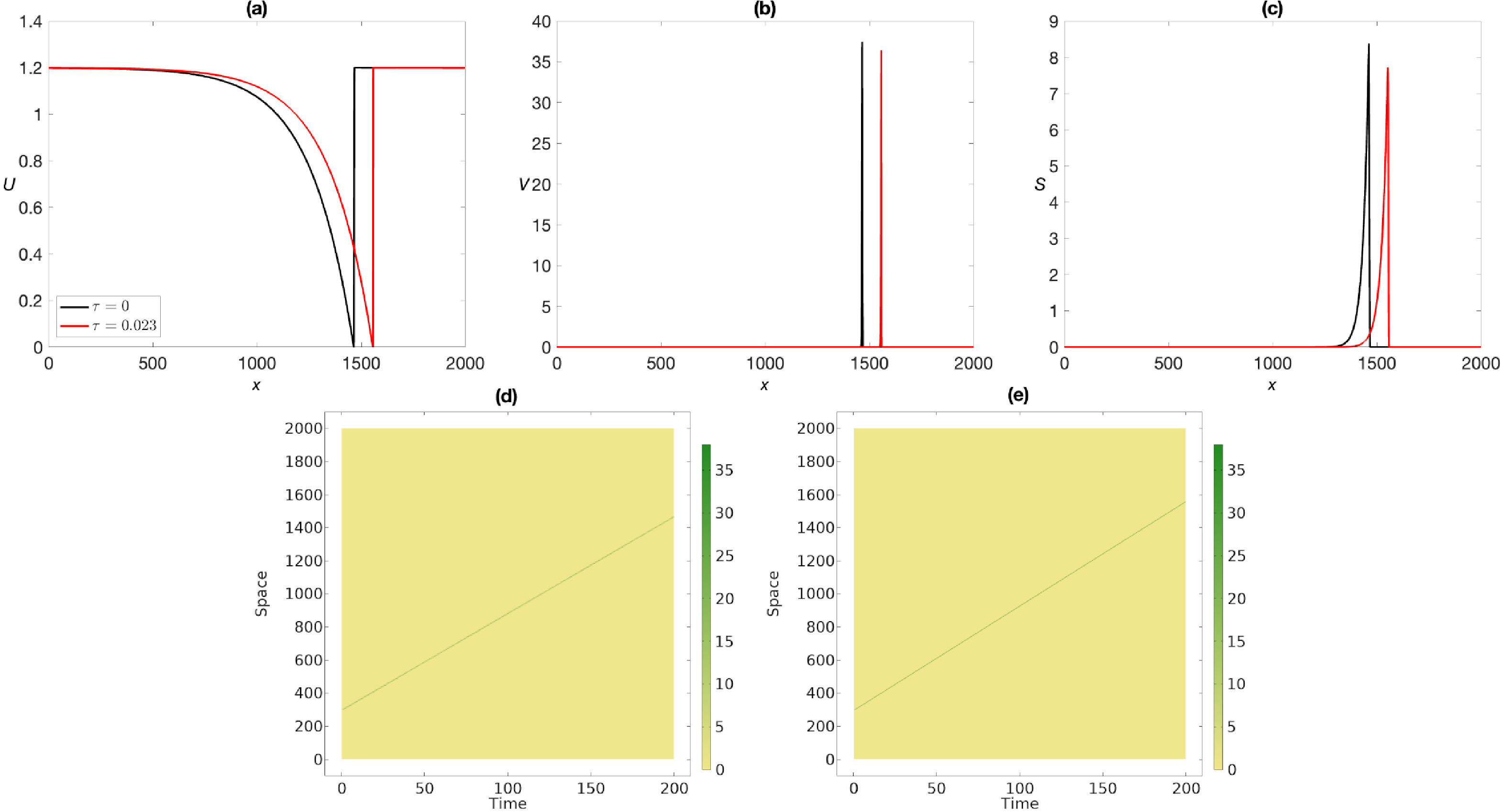}
	\caption{Spatial profiles for surface water $U$ (a), biomass $V$ (b) and toxicity $S$ (c) of a travelling pulse, obtained by numerical integration of System \eqref{eq:modadim} over a domain with length $L=2000$ with periodic boundary conditions, for different values of $\tau$. Spatio-temporal dependence of the biomass density $V(x,t)$ for $\tau = 0$ (panel (d)) and $\tau = 0.019$ (panel (e)). The parameter values are $\mathcal{A}=1.2$, $\mathcal{B}=0.45$, $\varepsilon=0.005$, $\mathcal{D}=4.5$, and $\mathcal{H}=1$.}
	\label{fig:num_GSPT}
\end{figure}

\subsection{Existence of travelling pulses}

In order to construct our sought-after travelling pulse with fixed speed $\mathcal{C}>0$, we introduce the comoving frame variable $\xi=x-\mathcal{C}t$, which transforms Eq.~\eqref{eq:modadimeps} in the following system of first-order ODEs
\begin{equation} \label{eq:prescaling}
\begin{aligned}
    U_\xi &= \frac{\varepsilon}{1 + \varepsilon \mathcal{C}} \left(U - \mathcal{A} + U V^2 \right), \\
    V_\xi  &= \frac{1}{1-\mathcal{C}^2 \tau} \left( \mathcal{C} \tau \left( U V^2-\mathcal{B}V-\mathcal{H}SV \right)-J \right), \\
    S_\xi &= \frac{1}{\mathcal{C} \mathcal{D}} \left(S - \mathcal{B} V - \mathcal{H} S V\right), \\
    J_\xi &= \frac{1}{1-\mathcal{C}^2 \tau} \left( U V^2-\mathcal{B}V-\mathcal{H}SV-\mathcal{C}J \right).
\end{aligned}
\end{equation}
We apply the same rescaling carried out in \cite{grifo2025far} to the variables $ V, S, J$ and the speed $\mathcal{C}$, while considering $U, \mathcal{A}, \mathcal{B}, \mathcal{H}, \mathcal{D}$ to be $\mathcal{O}(1)$. However, we additionally introduce a rescaling on $\tau$ in order to capture the pulse different spatial scales observed numerically (see Fig.~\ref{fig:num_GSPT}). This finally leads to
\begin{equation}  \label{eq:rescaling}
\begin{array}{c c c c}
U = u,& V = c^{-1} \varepsilon^{-2/3} v,& 
J = \varepsilon^{-1} j,&
S = c^{-1} \varepsilon^{-2/3} s,  \\
\mathcal{C} = c\, \varepsilon^{-1/3},&
\xi = c^2 \varepsilon^{1/3} \zeta,&
\delta = c\, \varepsilon^{2/3}, & \tau = c\, \varepsilon^{2/3} \hat{\tau}.
\end{array}
\end{equation}
In these rescaled variables, and indicating $\dot{}=\frac{d}{d\zeta}$, System \eqref{eq:prescaling} becomes
\begin{equation} \label{eq:postscaling}
\begin{aligned}
\dot{u} &= \frac{1}{1+\delta} \left( u v^2 + \delta^2 \left( u - \mathcal{A} \right) \right), \\
\dot{v} &= \frac{c^3}{1-c^3\hat{\tau}} \left( \hat{\tau} \left( u v^2-\delta\mathcal{B}v - \mathcal{H}sv \right) -j \right),  \\
\dot{s} &= \frac{1}{\mathcal{D}} \left(\delta s - \delta \mathcal{B} v - \mathcal{H} s v\right), \\
\dot{j} &= \frac{1}{1-c^3\hat{\tau}} \left(u v^2- \delta \mathcal{B} v - \mathcal{H} s v - c^3 j \right),
\end{aligned}
\end{equation}
where the small parameter which will be considered in the subsequent GSPT analysis is now $\delta$. The travelling pulse still corresponds to a homoclinic orbit to the equilibrium $(u, v, s, j) = \left( \mathcal{A},0,0,0 \right)$, corresponding to the desert state $\left( U^\ast, V^\ast, S^\ast, J^\ast \right)$. The scale separation in System \eqref{eq:postscaling} is made more evident by replacing the $u$-dynamics with the ones for the new variable
\begin{equation}\label{eq:definition_w_variable}
w = \left( 1 + \delta \right) u - j + v + \mathcal{D} s.
\end{equation}
This finally leads to the \emph{fast} system
\begin{equation} \label{eq:postscalingw}
\begin{aligned}
\dot{w} &= \delta s + \frac{\delta^2}{1+\delta} \left( w +j - v - \mathcal{D} s - a \right), \\
\dot{v} &= \frac{c^3}{1-c^3\hat{\tau}} \left( \frac{\hat{\tau}}{1+\delta} \left( w+j-v-\mathcal{D}s \right) v^2 -\delta\hat{\tau}\mathcal{B}v - \hat{\tau}\mathcal{H}sv -j \right), \\
\dot{s} &= \frac{1}{\mathcal{D}} \left(\delta s - \delta \mathcal{B} v - \mathcal{H} s v\right), \\
\dot{j} &= \frac{1}{1-c^3\hat{\tau}} \left( \frac{1}{1+\delta} \left( w+j-v-\mathcal{D}s \right) v^2 - \delta \mathcal{B} v - \mathcal{H} s v - c^3 j \right),
\end{aligned}
\end{equation}
where $a = \left( 1 + \delta \right) \mathcal{A}$. The (equivalent) \emph{slow} and \emph{superslow} formulation of System \eqref{eq:postscalingw} will be provided in Sec.~\ref{sec:redprob}. The sought-after travelling pulse thus corresponds to a homoclinic orbit to the steady-state $(w^*,v^*,s^*,j^*)=(a,0,0,0)$.
The presence of the small parameter $0 < \delta \ll 1$ allows us to obtain such homoclinic solutions as perturbations of singular orbits obtained in the limit $\delta=0$ by matching slow dynamics on the critical manifold (i.e., a manifold of equilibria of System \eqref{eq:postscalingw} for $\delta=0$) with fast excursions along heteroclinic orbits emerging in the corresponding layer problem, analysed in Sec.~\ref{sec:laypb}. The singular solutions will be constructed in Sec.~\ref{sec:singorb}, and their persistence will be proved in Sec.~\ref{sec:persist}. In Sec.~\ref{sec:numcont}, the software AUTO \cite{Doedel1981} will be employed in order to perform numerical continuation of the travelling pulse solution w.r.t.~the main system parameters.

\begin{remark}
    In the following, we will focus on the assumption $1-c^3\hat{\tau}>0$, as we are continuing travelling pulses from the parabolic case corresponding to $\hat{\tau}=0$. This leads to the (implicit) constraint $\hat{\tau} < \frac{1}{c^3}$, which in terms of the original variables corresponds to $\tau < \frac{1}{\mathcal{C}^2}$. We point out that in the analysis below several rescaled quantities (including e.g.~$c$ and $\hat{\tau}$), which are treated as parameters, are implicitly dependent on other parameters -- a feature that is often hidden at leading order. In the proof of Thm.~\ref{thm:main} (in Sec.~\ref{sec:persist}), reversing back to the original scaling will allow us to reveal a clearer link between the original variables and parameters, including an expression for both $\mathcal{C}^\ast$ and $\tau_{\rm upper}$. 
\end{remark}


\subsubsection{Critical manifolds and reduced dynamics} \label{sec:redprob}
Considering the singular limit $\delta=0$ in Eq.~\eqref{eq:postscalingw} yields
\begin{equation} \label{eq:laypb}
\begin{aligned}
\dot{w} &= 0, \\
\dot{v} &= \frac{c^3}{1-c^3\hat{\tau}} \left( \hat{\tau} \left( w+j-v-\mathcal{D}s \right) v^2 - \hat{\tau}\mathcal{H}sv -j \right), \\
\dot{s} &= -\frac{\mathcal{H}}{\mathcal{D}} s v, \\
\dot{j} &= \frac{1}{1-c^3\hat{\tau}} \left( \left( w+j-v-\mathcal{D}s \right) v^2 - \mathcal{H} s v - c^3 j \right),
\end{aligned}
\end{equation}
i.e., the layer problem. The critical manifold, obtained by determining the steady-states of the layer problem, is composed of the following three parts
\begin{subequations}\label{eq:critmanifolds}
\begin{align}
\mathcal{M}^{(1)} &= \left\{\,v=0,\, j=0,\, s=0\,\right\},\\
\mathcal{M}^{(2)} &= \left\{\,v=w,\, j=0,\, s=0\,\right\},\\
\mathcal{M}^{(3)} &= \left\{\,v=0,\, j=0\,\right\}.
\end{align}
\end{subequations}
As in the parabolic case analysed in \cite{grifo2025far}, these manifolds can be written as $\mathcal{M}^{(i)} = \bigcup_{w \in \mathbb{R}} p_i(w)$, $i=1,2,3$, where
\begin{equation}
p_1(w) = (w, 0, 0, 0),
\hspace{0.3cm}
p_2(w) = (w, w, 0, 0),
\hspace{0.3cm}
p_3(w,s) = (w, 0, s, 0).
\label{eq:equilibria}
\end{equation}
The manifolds $\mathcal{M}^{(1)}$ and $\mathcal{M}^{(3)}$ are invariant under the flow of \eqref{eq:laypb}, differently from $\mathcal{M}^{(2)}$ which is involved in the construction of the pulse only through the point $p_2(a)$ where two heteroclinic connections meet (see Sec.~\ref{sec:laypb}). Each point on $\mathcal{M}^{(2)}$ (except for the origin) is in fact a saddle with eigenvalues $-c^3$, $-(1-c^3 \hat{\tau}) \frac{\mathcal{H} w}{\mathcal{D}}$, and $(1-c^3 \hat{\tau})w^2$ in the normal directions. As for $\mathcal{M}^{(3)}$, this is also normally hyperbolic since the equilibria $p_3(w,s)$ are saddles in the $(v,j)$-directions.\\
In order to study the dynamics on $\mathcal{M}^{(3)}$, we introduce the slow scale $\sigma = \delta \zeta$, leading to
\begin{subequations}
\begin{align}
w_\sigma &= s + \frac{\delta}{1+\delta} \left( w +j - v - \mathcal{D} s - a \right), \\
\delta v_\sigma &= \frac{c^3}{1-c^3\hat{\tau}} \left( \frac{\hat{\tau}}{1+\delta} \left( w+j-v-\mathcal{D}s \right) v^2 -\delta\hat{\tau}\mathcal{B}v- \hat{\tau}\mathcal{H}sv -j \right), \\
\delta s_\sigma &= \frac{1}{\mathcal{D}} \left( \delta s - \delta \mathcal{B} v - \mathcal{H} s v \right), \\
\delta j_\sigma &= \frac{1}{1-c^3\hat{\tau}} \left( \frac{1}{1+\delta} \left( w+j-v-\mathcal{D}s \right) v^2 - \delta \mathcal{B} v - \mathcal{H} s v - c^3 j \right).
\label{eq:slow}
\end{align}
\end{subequations}
In particular, on $\mathcal{M}^{(3)}$ these reduce to linear slow-fast dynamics as follows
\begin{subequations}
\begin{align}
w_\sigma &= s + \frac{\delta}{1+\delta} \left( w - \mathcal{D} s - a \right), \\
s_\sigma &= \frac{s}{\mathcal{D}}.
\end{align}
\label{eq:slowflow_M3}
\end{subequations}
As in the parabolic case, the manifold $\mathcal{M}^{(3)}$ can thus be foliated by one-dimensional manifolds $\mathcal{M}^{(3),w_0}:= \left\{ p_3\left(w, s(w)\right) \, : \, w_0 \in \mathbb{R}_0^+ \right\}$ where $s(w):=\frac{w-w_0}{\mathcal{D}}$. On the manifold $\mathcal{M}^{(1)} \subset \mathcal{M}^{(3)}$, corresponding to the critical manifold of Eq.~\eqref{eq:slowflow_M3}, the flow is given by
\begin{equation}\label{eq:superslowM1}
 w_\eta = \frac{1}{1+\delta} \left( w - a \right),
\end{equation}
where $\eta := \delta \sigma = \delta^2 \zeta$ corresponds to the superslow variable. These dynamics simply describe a flow away from the unique unstable equilibrium at $w=a$.

\subsubsection{Layer problem} \label{sec:laypb}
Analogously to the parabolic case analysed in \cite{grifo2025far}, in the singular limit the sought-after travelling pulse corresponds to the union of a superslow orbit from $p_1(a)=(a,0,0,0)$ to the origin, a slow transition on $\mathcal{M}^{(3),0}$ until $p_3\left( \frac{a}{\mathcal{D}} \right) = \left( a,0,\frac{a}{\mathcal{D}},0 \right)$, and a fast transition from $p_3\left( \frac{a}{\mathcal{D}} \right)$ to $p_1(a)$. This last piece, in particular, consists of two heteroclinic connections from  $p_3\left( \frac{a}{\mathcal{D}} \right)$ to $p_2(a)$ and from $p_2(a)$ to $p_1(a)$, which will be analysed in Lemmas \ref{lemma:fc2}-\ref{lemma:fc1}. The inertial parameter $\hat{\tau}$, in fact, does not affect the geometric construction of the pulse in phase-space, but influences its speed, as it will be shown in Lemma \ref{lemma:fc1}.

\begin{lemma} \label{lemma:fc2}
For each $\hat{\tau}$ such that $0 < \hat{\tau} < \mathrm{min} \left\{ \frac{1}{c^3}, \frac{1}{a^2} \right\}$, there exists a unique $c(\hat{\tau})=c^\ast(\hat{\tau})>0$ for which System \eqref{eq:laypb} admits a heteroclinic orbit from $p_2(a) =(a,a,0,0)$ to $p_1(a)=(a,0,0,0)$. This heteroclinic orbit lies in the hyperplane $\left\{ (w,v,s,j) \, : \, w=a, \ s=0 \right\}$. In particular,
\begin{equation}\label{eq:c_ast}
c^\ast(\hat{\tau})=(a \, \theta (a,\hat{\tau}))^{2/3}
\end{equation}
with $\theta(a,\hat{\tau}) \approx \theta_0 +\theta_1 a^2\hat{\tau} + \theta_2 a^4\hat{\tau}^2$ where $\theta_0 \approx 0.8615$, $\theta_1 \approx 0.0183$, $\theta_2 \approx 0.0120$. 
\end{lemma}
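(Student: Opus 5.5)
The plan is to exploit the invariance of the hyperplane $\{w=a,\ s=0\}$ and reduce the problem to a planar shooting problem in the single speed parameter $c$.

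\emph{Reduction to the plane.} First I check invariance: in \eqref{eq:laypb} we have $\dot w=0$ and $\dot s=-\frac{\mathcal{H}}{\mathcal{D}}sv$, so $\{w=a,\ s=0\}$ is invariant, and both $p_2(a)$ and $p_1(a)$ lie in it. On this plane the layer problem becomes
\begin{equation*}
\dot v=\frac{c^3}{1-c^3\hat\tau}\Bigl(\hat\tau(a+j-v)v^2-j\Bigr),\qquad \dot j=\frac{1}{1-c^3\hat\tau}\Bigl((a+j-v)v^2-c^3 j\Bigr).
\end{equation*}
Next I remove $a$ by the scaling $v=a\tilde v$, $j=a^{\alpha}\tilde j$ together with a rescaling of $\zeta$. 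The aim is a normalised system depending only on the two combinations $\kappa:=c^3/a^2$ and $\mu:=a^2\hat\tau$. This is the origin of the form $c^\ast=(a\theta)^{2/3}$, i.e. $\kappa^\ast=\theta^2$, with $\theta$ a function of $\mu=a^2\hat\tau$.

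\emph{Local analysis at the endpoints.} At $(\tilde v,\tilde j)=(1,0)$ the planar linearisation has eigenvalues $-\kappa$ and $(1-\kappa\mu)$ (normalised). Under the hypothesis $\hat\tau<\min\{1/c^3,1/a^2\}$ this point is a hyperbolic saddle, so it has a one-dimensional unstable manifold $W^u$; I follow its branch entering $\{0<\tilde v<1\}$. At the origin the linearisation has eigenvalues $0$ and $-\kappa/(1-\kappa\mu)$. I compute a one-dimensional centre manifold $\tilde j=h(\tilde v)=\tilde v^2/\kappa+O(\tilde v^3)$. The reduced flow on it is $\dot{\tilde v}\sim -\tilde v^2\cdot(\text{positive})$ for $\tilde v>0$, so the origin attracts an open sector of $\{\tilde v>0\}$: the centre-stable set is a full neighbourhood side. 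A heteroclinic connection therefore cannot be produced just by entering that sector, and the selection mechanism has to be identified more carefully.

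\emph{Existence and uniqueness of the speed.} As in the parabolic case of \cite{Carter2018,grifo2025far}, the correct mechanism is that $W^u(1,0)$ must avoid the escape region, i.e. it must not cross $\tilde j<0$ with $\tilde v$ turning back or leaving to $\tilde v\to-\infty$. I will define the boundary curve $\tilde j=0$ and the nullcline $\dot{\tilde v}=0$ and set up a trapping-region/shooting argument. For $\kappa$ small, $W^u$ overshoots to $\tilde v<0$ and then escapes. For $\kappa$ large, $W^u$ remains in a region where it converges to the origin along the fast stable direction and not along the centre direction, or vice versa. I would define the distance $\Delta(\kappa,\mu)$ between $W^u$ and the unique strong-stable/centre-manifold trajectory, measured along a section $\tilde v=\tilde v_0$. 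I then show that $\Delta$ changes sign and is strictly monotone in $\kappa$, using a Melnikov-type integral of $\partial_\kappa$ of the vector field wedged against the flow. The resulting zero $\kappa^\ast(\mu)$ is unique, and the implicit function theorem gives smoothness in $\mu$ near $\mu=0$, where $\kappa^\ast(0)=\theta_0^2$ with $\theta_0\approx0.8615$ is known.

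\emph{Expansion of $\theta$ and the main obstacle.} To obtain $\theta_1$ and $\theta_2$, I regularly expand the heteroclinic and $\kappa^\ast$ in powers of $\mu$. The solvability conditions at orders $\mu$ and $\mu^2$ (Fredholm alternative against the adjoint solution along the $\mu=0$ orbit) give $\theta_1,\theta_2$ as integrals. These I evaluate numerically, or I fit the numerically continued $\kappa^\ast(\mu)$. The main obstacle is the monotonicity/sign-change step. The origin is non-hyperbolic, so the transversality statement must be formulated relative to the centre manifold, and proving it analytically for all admissible $\mu$ (not just $\mu$ small) may require a careful comparison argument on the phase plane. If that fails, I would settle for uniqueness for $\mu$ in a neighbourhood of $0$ via the implicit function theorem, and support the remaining range numerically.
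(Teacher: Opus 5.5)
\textbf{Gap: the selection mechanism is misidentified.} You correctly see that the origin attracts the whole sector on the $\tilde v>0$ side. However, the criterion you then adopt, \emph{$W^u(1,0)$ must avoid the escape region}, is open in $\kappa$ and cannot single out one speed. For every $\kappa$ in your large-$\kappa$ regime, $W^u(p_2)$ enters the region bounded by three curves: the $v$-nullcline $B_1$, the line $v=w$, and the strong stable branch of the origin. This region is forward invariant. On $B_1$ one has $\dot v=0$ and $\dot j=j/\hat\tau>0$; on $v=w$ one has $\dot v<0$, since $\hat\tau w^2<1$. Inside the region $\dot v<0$, so $W^u(p_2)$ converges to $p_1$ tangentially to your centre manifold $\tilde j\approx\tilde v^2/\kappa$. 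These are genuine heteroclinic orbits for a whole interval of speeds.

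Your hedge \emph{or vice versa} and the target \emph{strong-stable/centre-manifold trajectory} leave the decisive condition open. The centre-manifold option is not even well defined, because centre manifolds are non-unique on the attracting side. The missing idea is that the connection must lie in the strong stable manifold $\mathcal{W}^{ss}(p_1(w))=\{j=v-\frac{w}{2c^3}(1-c^3\hat\tau)v^2+\mathcal{O}(v^3)\}$, which is tangent to $(1,1)$.

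Persistence forces this. For $\delta>0$, the $(v,j)$-block of the linearisation at $p_1(a)$ has determinant $-\delta\mathcal{B}c^3/(1-c^3\hat\tau)<0$. So the zero eigenvalue becomes $\approx\delta\mathcal{B}>0$, and only the strong stable direction remains incoming. The lemma's \emph{unique $c$} must be read in this sense, as the paper's proof implicitly does. With this target, your two regimes are exactly the two sides of the sign change: escape to $v<0$ for small $\kappa$, and centre-tangent convergence for large $\kappa$.

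\textbf{The sign change is only asserted.} The paper proves it by tracing $\mathcal{W}^{ss}(p_1(w))$ backward inside the region $\mathcal{T}$ bounded by $B_1$, $B_2=\{v=w\}$ and $B_3=\{j=v\}$, using two comparison curves:
\begin{itemize}
\item $\bar j(v)=v\bigl(1-\frac{1-c^3\hat\tau}{3c^3}v^2\bigr)$ forces exit through $B_1$ when $c^3<w^2/(3+\hat\tau w^2)$;
\item $\hat j(v)=v^2\sqrt{(1-c^3\hat\tau)/(2c^3)}$ forces exit through $B_2$ above $p_2(w)$ when $c^3(1-c^3\hat\tau)>8w^2$.
\end{itemize}
Continuity in $c$ then yields $c^\ast$.

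\textbf{Uniqueness.} The obstacle you anticipate also disappears once the splitting is measured against $\mathcal{W}^{ss}$, which is a smooth curve despite the non-hyperbolicity of $p_1$. Because the connection lies in $\mathcal{T}$, the Melnikov integrand $\alpha\,(w-v+j)\,v^2\bigl(j-\hat\tau(w-v+j)v^2\bigr)$ is positive, with $\alpha>0$ the exponential weight. Hence $M_c$ has a fixed sign at every zero, and there is no need to fall back on the implicit function theorem near $\mu=0$.

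Your perturbative route to $\theta_1,\theta_2$, via solvability conditions at orders $\mu$ and $\mu^2$, is a legitimate alternative to the paper's method. The paper instead matches sixth-order expansions of $\mathcal{W}^{ss}(0,0)$ and $\mathcal{W}^u(1,0)$ at $Z=\frac12$ numerically and then fits a polynomial.
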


\begin{proof}
On the hyperplane $s=0$, Eq.~\eqref{eq:laypb} becomes ($w$, $s$ are here constant)
\begin{equation} \label{eq:laypbs0}
\begin{aligned}
\dot{v} &= \frac{c^3}{1-c^3\hat{\tau}} \left( \hat{\tau} \left( w+j-v \right) v^2 -j \right), \\
\dot{j} &= \frac{1}{1-c^3\hat{\tau}} \left( \left( w+j-v \right) v^2 - c^3 j \right).
\end{aligned}
\end{equation}
Our first goal is to show that for each $w$ there exists a value $c^\ast(w, \hat{\tau})$ such that a heteroclinic orbit $\phi_{f,1}(w)$ between $p_2(w)$ and $p_1(w)$ exists. To this aim, we start by observing that for each $w$ the point $p_1(w)$ admits one strong stable direction (identified by the eigenvector $(1,1)$ associated to the eigenvalue $-\frac{c^3}{1-c^3\hat{\tau}}$) and a neutral direction (identified by the eigenvector $(1,0)$ associated to the null eigenvalue). In particular, the strong stable direction $\mathcal{W}^{ss}(p_1(w))$ admits the expansion
\begin{equation} \label{eq:Wssp1w}
    \mathcal{W}^{ss}(p_1(w)) = \left\{ (w,v,s,j) \, : \, s=0, \, j = v-\frac{w}{2 c^3}(1-c^3 \hat{\tau}) v^2 + \mathcal{O}(v^3) \right\},
\end{equation}
i.e.~lies in the first quadrant below the line $j=v$ and asymptotically approaches $p_1(w)$ along this line. Our region of interest $\mathcal{T}$ in the construction of the heteroclinic connection is in the nonnegative $(v,j)$-plane delimited by the following three curves (see Fig.~\ref{fig:qualitative_plot}):
\begin{equation}
    B_1 = \left\{ j=\frac{\hat{\tau}(w-v)v^2}{1-\hat{\tau}v^2} \right\}, \quad B_2 = \left\{ v=w \right\}, \quad B_3 = \left\{ j=v \right\}.
\end{equation}

\begin{figure}[t!]
    \centering
    \begin{overpic}[width=0.6\linewidth]{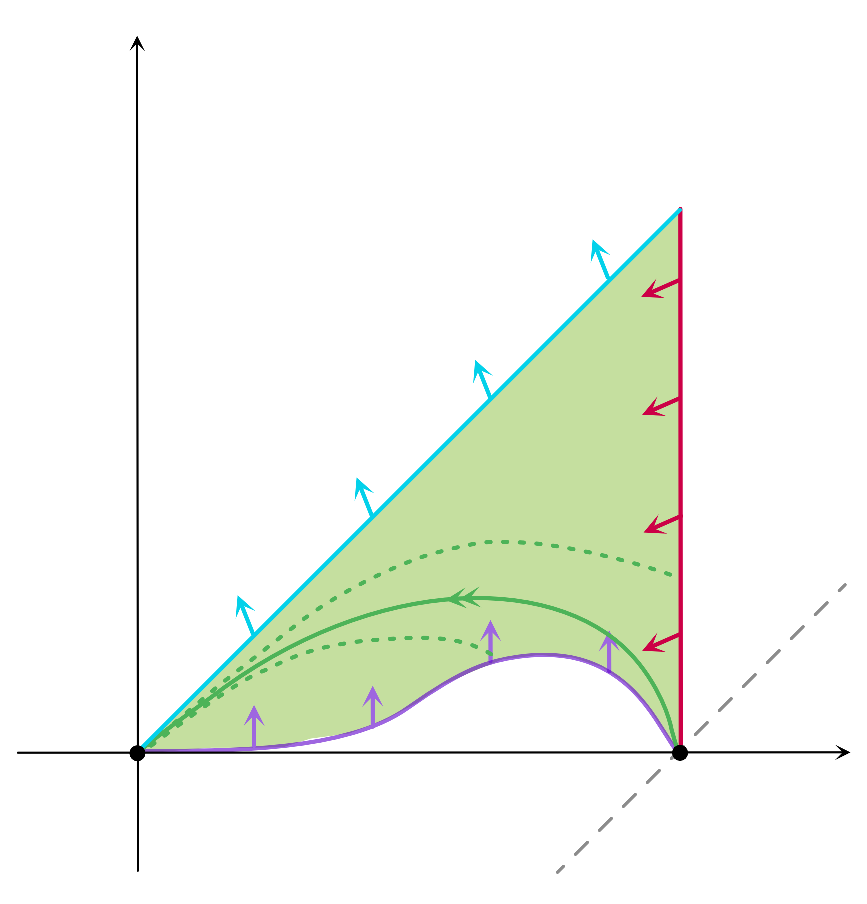}
    \put(58,24){\textcolor{myviolet}{$B_{1}$}}
    \put(76,45){\textcolor{myred}{$B_{2}$}}
    \put(44,52){\textcolor{myblue}{$B_{3}$}}
    \put(95,35){\textcolor{gray}{$v-j=w$}}
    \put(58,35){\textcolor{mygreen}{$\phi_{f,1}$}}
    \put(65,60){$\mathcal{T}$}
    \put(94,13){$v$}
    \put(12,97){$j$}
    \put(5,13){$p_1(w)$}
    \put(75,13){$p_2(w)$}
    \end{overpic}
    \caption{Sketch of the region $\mathcal{T}$ in the $(v,j)$-plane defined in the proof of Lemma \ref{lemma:fc2} to construct a heteroclinic orbit between $p_2(w)$ and $p_1(w)$.}
    \label{fig:qualitative_plot}
\end{figure}
Let us examine the dynamics on each of these curves. On $B_3$ we have
\[
\dot{v}-\dot{j} = \frac{1}{1-c^3\hat{\tau}} \left( wv^2(c^3\hat{\tau}-1) \right) < 0, 
\]
hence on this curve the flow brings the orbits away from our region $\mathcal{T}$, confirming that $\mathcal{W}^{ss}(p_1(w))$ lies to the right of $B_3$. On $B_2$, we have $\dot{v}<0$ if and only if $\hat{\tau} < \frac{1}{w^2}$, whereas $\dot{j} \gtrless 0 $ if and only if $w^2 \gtrless c^3$. Finally, on $B_1$ we have $\dot{v}=0$ and $\dot{j}=\frac{j}{\hat{\tau}} > 0$. Therefore, in backward ``time'' the strong stable manifold $\mathcal{W}^{ss}(p_1(w))$ only has three possibilities: either it exits along $B_1$ or $B_2$, or precisely at $p_2(w)$, thus leading to the desired heteroclinic connection. We remark that the line $v-j=w$ is invariant for the flow \eqref{eq:laypbs0}, since on this line $\dot{v}-\dot{j}=0$; consequently, when trajectories leave either via $B_1$ or $B_2$, they remain confined in the regions $\left\{ v+w < j < \frac{\hat{\tau}(w-v)v^2}{1-\hat{\tau}v^2}  \right\}$ and $\left\{ w < v < j + w \right\}$, respectively. \\
The idea is then to connect a backward exit from $B_1$ (respectively $B_2$) to small (respectively large) speeds $c$; by continuity w.r.t.~$w$, this will guarantee the existence of a speed $c=c^\ast(w, \hat{\tau})$ for each $\hat{\tau}$ within the admissible range such that $\mathcal{W}^{ss}(p_1(w))$ connects with $p_2(w)$ in backward ``time''. To this aim, we observe that from Eq.~\eqref{eq:laypbs0} we have
\begin{equation} \label{eq:djdvineq}
\frac{dj}{dv} = 1-\frac{v^2}{c^3}(1-c^3 \hat{\tau})\frac{w+j-v}{j-\hat{\tau}(w+j-v) v^2} < 1-\frac{v^2}{c^3}(1-c^3 \hat{\tau})
\end{equation}
Consequently, the solution $j(v)$ to Eq.~\eqref{eq:laypbs0} always lies below the curve $\bar{j}(v)$, where $\bar{j}(v)$ is the solution to the equation
\begin{equation}
\frac{d\bar{j}}{dv} =  1-\frac{v^2}{c^3}(1-c^3 \hat{\tau}) \quad \Leftrightarrow \quad \bar{j}(v) = v \left( 1-\frac{1-c^3 \hat{\tau}}{3 c^3} v^2 \right).
\end{equation}
If $\bar{j}(w) < 0$, then the solution $j(v)$ must leave in backward ``time'' $\mathcal{T}$ through $B_1$. This condition is satisfied when
\begin{equation} \label{eq:condspB1}
    c^3 < \frac{w^2}{3+\hat{\tau} w^2}.
\end{equation}
As for $B_2$, we proceed in a similar manner by determining a curve $\hat{j}(v)$ such that $j(v) > \hat{j}(v)$ in $\mathcal{T}$ and that $\hat{j}(w)>0$. To this aim, we preliminarily observe from Eq.~\eqref{eq:djdvineq} that
\begin{equation} \label{eq:djdvineq2}
\begin{aligned}
&\frac{dj}{dv} = 1+\frac{v^3(1-c^3 \hat{\tau})}{c^3 \left( j-\hat{\tau}(w+j-v) v^2 \right)}-\frac{v^2}{c^3}(1-c^3 \hat{\tau})\frac{w+j}{j-\hat{\tau}(w+j-v) v^2} > \frac{v^3(1-c^3 \hat{\tau})}{c^3 \left( j-\hat{\tau}(w+j-v) v^2 \right)} \\
& \,  \Leftrightarrow \, c^3 > v^2 (1-c^3 \hat{\tau}) \frac{w+j}{j-\hat{\tau}(w+j-v) v^2} \, \Leftrightarrow \, j(c^3-v^2) > v^2(w-c^3 \hat{\tau} v).
\end{aligned}
\end{equation}
Taking $c^3 > 2w^2$ and $ j > \frac{2 w v^2}{c^3}$ ensures that $c^3-v^2>0$ and $j > \frac{v^2(w-c^3 \hat{\tau} v)}{c^3-v^2}$, thus guaranteeing the validity of the last inequality in \eqref{eq:djdvineq2}. Therefore, under these conditions we have
\begin{equation}
\frac{dj}{dv} > \frac{v^3(1-c^3 \hat{\tau})}{c^3 \left( j-\hat{\tau}(w+j-v) v^2 \right)} > \frac{v^3}{c^3 j}(1-c^3 \hat{\tau}).
\end{equation}
We thus have that the solution $j(v)$ to Eq.~\eqref{eq:laypbs0} always lies above the curve $\hat{j}(v)$ which, together with the conditions $\hat{j}(0)=0$ and $\hat{j} > 0$, solves
\begin{equation}
\frac{d\hat{j}}{dv} = \frac{v^3}{c^3 j}(1-c^3 \hat{\tau}) \quad \Leftrightarrow \quad \hat{j}(v) = v^2 \sqrt{\frac{1-c^3 \hat{\tau}}{2 c^3}}.
\end{equation}
This function indeed satisfies $\hat{j}(w) > 0$. However, in order to ensure that $j(w) > \frac{2 w^3}{c^3}$, we further require that $c^3 (1-c^3 \hat{\tau}) > 8 w^2$, which in turn leads to
\begin{equation}
    \frac{1}{2 \hat{\tau}} \left( 1-\sqrt{1-32 \hat{\tau} w^2} \right) < c^3 < \frac{1}{2 \hat{\tau}} \left( 1+\sqrt{1-32 \hat{\tau} w^2} \right).
\end{equation}
When the speed lies within this range, the orbit $j(v)$ leaves $\mathcal{T}$ in backward ``time'' through $B_2$.  In particular, this leads to the constraint $\hat{\tau} < \frac{1}{32 w^2}$ -- which is, however, not restrictive (see Rem.~\ref{rem:constrtau}). Then, by continuity in $w$, there exists a value $c^\ast(w, \hat{\tau})$ in the range
\begin{equation} \label{eq:rangecw}
\frac{w^2}{3+\hat{\tau} w^2} < (c^\ast(w,\hat{\tau}))^3 < \frac{1}{2 \hat{\tau}} \left( 1-\sqrt{1-32 \hat{\tau} w^2} \right)
\end{equation}
such that $\mathcal{W}^{ss}(p_1(w))$ reaches $p_2(w)$ in backward ``time'', leading to a heteroclinic connection between $p_1(w)$ and $p_2(w)$ indicated by $\phi_{f,1}(w)$. The uniqueness of such a speed is now proved via Melnikov integrals by showing that the splitting between $\mathcal{W}^{ss}(p_1(w))$ and the unstable manifold $\mathcal{W}^{u}(p_2(w))$ along the heteroclinic connection $\phi_{f,1}(w)$, represented by the distance function $D(\tilde{c}; w, \hat{\tau})$ with $\tilde{c}=c-c^\ast(w,\hat{\tau})$, has a unique zero to leading order at $\tilde{c}=0$ for each value of $w$ and $\hat{\tau}$ in its admissibility range. Such function is given by
\begin{equation} \label{eq:distc}
    D(\tilde{c}; w, \hat{\tau}) = M_c(w,\hat{\tau}) \tilde{c}(w) + \mathcal{O}(\tilde{c}^2(w)),
\end{equation}
where $M_c(w,\hat{\tau})$ is the Melnikov integral associated to Eq.~\eqref{eq:laypbs0} as follows. The adjoint equation associated to the linearisation of the layer problem \eqref{eq:laypbs0} about the front $\phi_{f,1}(\zeta; w) = \left( v_{f,1}(\zeta; w), j_{f,1}(\zeta; w) \right)$ is given by
\begin{equation} \label{eq:adjeq}
    \psi' = \begin{pmatrix}
        \frac{(c^\ast)^3 \hat{\tau} v_{f,1}}{1-(c^\ast)^3 \hat{\tau}} \left( 3 v_{f,1} - 2(j_{f,1}+w) \right) & \frac{v_{f,1}}{1-(c^\ast)^3 \hat{\tau}} \left( 3 v_{f,1} - 2(j_{f,1} \right) \\
        \frac{(c^\ast)^3}{1-(c^\ast)^3 \hat{\tau}} \left( 1-\hat{\tau} v_{f,1}^2 \right) & \frac{1}{1-(c^\ast)^3 \hat{\tau}} \left( (c^\ast)^3-v_{f,1}^2 \right)
    \end{pmatrix}   \psi.
\end{equation}
The one-dimensional space of solutions which grow as $\zeta \to \infty$ at most algebraically is spanned by
\begin{equation}
\begin{aligned}
    \psi_{f_1}(\zeta; w) &:= e^{(c^\ast)^3 \zeta(1-2\hat{\tau}w)-(1-3 (c^\ast)^3\hat{\tau}) \int_0^\zeta v_{f,1}^2 , \mathrm{d}\zeta-2(c^\ast)^3 \hat{\tau} \int_0^\zeta j_{f,1} \, \mathrm{d}\zeta} \begin{pmatrix}
        -\dot{j}_{f,1}(\zeta; w) \\
        -\dot{v}_{f,1}(\zeta; w)
    \end{pmatrix} \\
    &= \alpha(\zeta; w) \begin{pmatrix}
        -\dot{j}_{f,1}(\zeta; w) \\
        -\dot{v}_{f,1}(\zeta; w)
    \end{pmatrix}.
\end{aligned}
\end{equation}
Indicating the right-hand side of Eq.~\eqref{eq:laypbs0} by $F$, the Melnikov integral $M_c$ is now defined as
\begin{equation}
\begin{aligned}
    M_c &= \int_{-\infty}^\infty \partial_c F(\phi_{f,1}(\zeta; w)) \cdot \psi_{f,1}(\zeta; w) \, \mathrm{d}\zeta = \\
    &= \frac{3 (c^\ast)^3}{(1-(c^\ast)^3\hat{\tau})^2} \int_{-\infty}^\infty \alpha(\zeta; w) (w-v_{f,1}+j_{f,1})v_{f,1}^2 \left( j_{f,1}-(w-v_{f,1}+j_{f,1})v_{f,1}^2 \hat{\tau} \right) \, \mathrm{d}\zeta.
\end{aligned}
\end{equation}
Since all terms in brackets involved in $M_c$ are strictly positive for any arbitrary $c^\ast(w, \hat{\tau})$, this Melnikov integral always has a definite positive sign. This implies that the distance function $D(\tilde{c}; w, \hat{\tau})$ defined in Eq.~\eqref{eq:distc} is nonconstant to leading order; in particular, it changes linearly with respect to $\tilde{c}$. Therefore, the zero level set of such function corresponds to a smooth curve passing through $c_0$, which in turn proves that the intersection between $\mathcal{W}^{ss}(\mathcal{M}^{(1)})$ and $\mathcal{W}^u(\mathcal{M}^{(2)})$ persists and varies smoothly -- i.e., it is transverse. \\
The last step in order to compute an asymptotic expansion for $c^\ast(\hat{\tau})$ for our travelling pulse (corresponding to $w=a$) is based on a variables rescaling, namely
\begin{equation} \label{eq:rescT}
    Z = \frac{v}{w}, \quad J = \frac{j}{w}, \quad \mu = c^{3/2} w \zeta, \quad \theta = \frac{c^{3/2}}{w}, \quad T = \hat{\tau} w^2.
\end{equation}
Eq.~\eqref{eq:laypbs0} thus becomes
\begin{equation} \label{eq:rescalZJ}
\begin{aligned}
    Z' &= \theta \left( (1+J-Z)T Z^2-J \right), \\
    J' &= \frac{1}{\theta} \left( (1+J-Z)Z^2-\theta^2 J \right),
\end{aligned}
\end{equation}
where we have adopted the notation $' = (1-\theta^2 T)\frac{d}{d\mu}$. In this planar system, the equilibria $p_1(w)$ and $p_2(w)$ correspond to $(0,0)$ and $(1,0)$, respectively. The range in Eq.~\eqref{eq:rangecw} hence corresponds to
\begin{equation} \label{eq:rangetheta}
\frac{1}{3+T} < \theta^2(T) < \frac{1}{2 T} \left( 1-\sqrt{1-32 T} \right),
\end{equation}
where $\theta(T)$ can be approximated numerically as the function such that the distance between the expansion of both $\mathcal{W}^{ss}(0,0)$ and $\mathcal{W}^{u}(1,0)$ up to the sixth order at $Z=\frac12$ is $0$. We note that for $T=0$ we retrieve the value $\theta_0 \approx 0.8615$, consistently with \cite{Carter2018}. Plotting $\theta(T)$ in the region of the $(\theta, T)$-plane where $\theta$ is in a neighbourhood of $\theta_0$ and $T \in \left(0 ,\frac{1}{32} \right)$ suggests to expand $\theta$ as a function of $T$ yielding
\begin{equation} \label{eq:thetaT}
    \theta(T) = \theta_0 + \theta_1 T + \theta_2 T^2 + \mathcal{O}(T^3),
\end{equation}
where $\theta_1 \approx 0.0183$ and $\theta_2 \approx 0.0120$. This expression gives an accuracy of $\theta(T)$ up to $\mathcal{O}(10^{-6})$. Undoing the rescaling in \eqref{eq:rescalZJ} and focusing our attention on $w=a$, we obtain Eq.~\eqref{eq:c_ast}.
\end{proof}

\begin{remark} \label{rem:constrtau}
    Whereas the constraint $\hat{\tau} < \frac{1}{c^3}$ is fundamental in the setup of the analysis of System \eqref{eq:postscaling}, the constraint $\hat{\tau}< \frac{1}{32 a^2}$ in Lemma \ref{lemma:fc2} is the result of fairly conservative estimates in the construction of the overshooting function $\hat{j}(v)$. As such, the statement of Lemma \ref{lemma:fc2} does not rule out the existence of a heteroclinic between $p_1(w)$ and $p_2(w)$ that approaches $p_1(w)$ along its strong stable manifold \eqref{eq:Wssp1w} for values $\hat{\tau} > \frac{1}{32 a^2}$. However, analysis of the saddle $p_2(w)$ reveals that for $\hat{\tau} > \frac{1}{w^2}$, the unstable manifold of $p_2(w)$ lies to the right of the vertical line $\left\{ v = w\right\}$, since $\dot{v} > 0$ on that line. In particular, this prevents the existence of the desired heteroclinic orbit. Hence, we obtain the necessary condition $\hat{\tau} < \frac{1}{w^2} = \frac{1}{a^2}$. Numerical investigations reveal that a heteroclinic orbit exists for $\frac{1}{32 a^2} <\hat{\tau} < \frac{1}{4 a^2}$, where the upper limit can be increased by applying higher precision in the numerical shooting approach.
\end{remark}

\begin{lemma} \label{lemma:fc1}
There exists a unique $s= s^\ast = \frac{a}{\mathcal{D}}$ for which System \eqref{eq:laypb} admits a heteroclinic orbit from $p_3(a,s^\ast) = (a,0,s^\ast,0)$ to $p_2(a) = (a,a,0,0)$. This heteroclinic orbit lies in the hyperplane $\left\{ (w,v,s,j) \, : \, w=a,\,w+j-v-\mathcal{D} s = 0 \right\}$.
\end{lemma}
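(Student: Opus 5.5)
The plan has three steps: find a conserved structure of the layer problem \eqref{eq:laypb} that forces the value of $s$, reduce the dynamics to a planar system, and then build the connection in that plane with a trapping-region argument.

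\textbf{Step 1: invariance of the hyperplane and uniqueness of $s^\ast$.} Since $\dot w=0$ in \eqref{eq:laypb}, I fix $w=a$ throughout and set $E:=w+j-v-\mathcal{D}s$. I compute $\dot E=\dot j-\dot v-\mathcal{D}\dot s$ directly from \eqref{eq:laypb}. The $c^3 j$ terms cancel, the terms $\hat\tau\mathcal{H}sv$ and $\mathcal{H}sv$ combine with the factor $(1-c^3\hat\tau)$, and the term $-\mathcal{D}\dot s=\mathcal{H}sv$ cancels what remains. I expect the result to be
\[
\dot E = E\,v^2 .
\]
Hence $\{E=0\}$ is invariant, and along any orbit $E(\zeta)=E(\zeta_0)\exp\big(\int_{\zeta_0}^{\zeta}v^2\,\mathrm{d}\zeta'\big)$.

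For an orbit converging to $p_2(a)$ we have $v\to a\neq 0$. Therefore $\int^{\zeta}v^2\to+\infty$, and $E$ would blow up unless $E\equiv 0$. We also have $E(p_2(a))=0$. So any heteroclinic to $p_2(a)$ must lie in $\{E=0\}$. Evaluating $E$ at the source $p_3(a,s)$ gives $a-\mathcal{D}s=0$, i.e.\ $s=s^\ast=a/\mathcal{D}$. This gives both the uniqueness of $s^\ast$ and the hyperplane claimed in the statement.

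\textbf{Step 2: reduction to a planar system.} On $\{w=a,\,E=0\}$ the cubic terms $(w+j-v-\mathcal{D}s)v^2$ vanish. Eliminating $v=a+j-\mathcal{D}s$, the flow becomes
\[
\dot s=-\tfrac{\mathcal{H}}{\mathcal{D}}sv,\qquad \dot j=\tfrac{1}{1-c^3\hat\tau}\big(-\mathcal{H}sv-c^3 j\big),
\]
a planar system in $(s,j)$. In these coordinates $p_3(a,s^\ast)$ corresponds to $(s^\ast,0)$ and $p_2(a)$ to $(0,0)$. I then linearise at both points, using $1-c^3\hat\tau>0$.

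At $(0,0)$ I expect the eigenvalues $-c^3/(1-c^3\hat\tau)$ and $-\mathcal{H}a/\mathcal{D}$, consistent with the normal eigenvalues of $\mathcal{M}^{(2)}$ listed earlier. This makes $p_2(a)$ a stable node inside the hyperplane.

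At $(s^\ast,0)$ there is one direction with positive eigenvalue, which is the unstable $(v,j)$-saddle direction of $\mathcal{M}^{(3)}$ restricted to the hyperplane. I must check that the branch of this unstable manifold entering $v>0$ lies in the hyperplane and is one-dimensional. I would do this by a dimension count: $p_3$ has a one-dimensional unstable manifold in the full layer problem, and the hyperplane is invariant.

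\textbf{Step 3: existence via a trapping region (main obstacle).} The key step is to show that this unstable branch actually reaches $(0,0)$. I would take the region $\mathcal{R}=\{0\le s\le s^\ast,\ v=a+j-\mathcal{D}s\ge 0\}$ together with a suitable sign condition on $j$, and check the direction of the vector field on each boundary piece.

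On $\{s=0\}$ the line is invariant. On $\{v=0\}$ I need $\dot v\ge0$; I would verify this from the $\dot v$ equation, where $\dot v=-\tfrac{c^3}{1-c^3\hat\tau}j$ when $v=0$, so the condition reduces to the sign of $j$. Inside $\mathcal{R}$ with $v>0$ we have $\dot s<0$, so $s$ decreases monotonically and tends to a limit. Poincaré–Bendixson, or simply this monotonicity plus boundedness of $j$ coming from the linear damping $-c^3j$, then forces the $\omega$-limit set to be an equilibrium with $sv=0$ inside $\mathcal{R}$. Since $p_2(a)$ is an attracting node there, the limit is $p_2(a)$.

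The delicate point is the sign of $j$ along the initial part of the branch. I would fix it from the eigenvector at $(s^\ast,0)$. If this sign control fails, my fallback is to work directly in $(v,s)$ coordinates: there $s$ depends monotonically on $v$ through $\mathrm{d}s/\mathrm{d}v$, and I would use $\mathcal{D}s$ as a Lyapunov-type function decreasing to $0$.
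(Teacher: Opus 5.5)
Your proof is correct, and it takes a different route from the paper in both halves.

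\textbf{Uniqueness and location of the orbit.} The paper changes to centre/stable/unstable coordinates at $p_3(a,\bar s)$. Following \cite{grifo2025far}, it shows that the plane $\{w_c=0\}$ is invariant only when $\bar s=a/\mathcal{D}$. Your identity $\dot E=E\,v^2$ does hold for \eqref{eq:laypb}, since $\dot j-\dot v=Ev^2-\mathcal{H}sv$ and $-\mathcal{D}\dot s=\mathcal{H}sv$. It gives invariance of the whole family of hyperplanes at once. The relation $|E(\zeta)|=|E(\zeta_0)|\exp\bigl(\int_{\zeta_0}^{\zeta}v^2\bigr)$ with $v\to a\neq0$ then forces any orbit entering $p_2(a)$ to have $E\equiv0$, hence $s=a/\mathcal{D}$. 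This is shorter and self-contained.

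\textbf{Existence.} The paper identifies the two-dimensional $\mathcal{W}^s(p_2(a))$ with $\Pi$ by local uniqueness, and concludes that the unstable manifold of $p_3(a,s^\ast)$ lies in it. That identification is only local near $p_2(a)$: indeed $p_3(a,s^\ast)\in\Pi$ but $p_3(a,s^\ast)\notin\mathcal{W}^s(p_2(a))$. So the paper leaves implicit the global fact that the unstable branch actually reaches a neighbourhood of $p_2(a)$. Your trapping-region argument supplies exactly this step, so your version is, if anything, more complete. The paper's formulation, in turn, fits directly into the Fenichel/persistence framework it reuses later.

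To make Step 3 airtight, fix three small points.
\begin{enumerate}
\item The sign condition is $j\le0$, and you must also check the edge $\{j=0\}$, where $\dot j=-\mathcal{H}sv/(1-c^3\hat\tau)\le0$. The region is then the triangle in the $(s,j)$-plane with vertices $p_2(a)=(0,0)$, $p_3(a,s^\ast)=(s^\ast,0)$ and $(0,-a)$. It is forward invariant and contains no other equilibria.
\item The unstable branch with $v>0$ does enter the triangle. The $\dot j$-row of the linearisation at $p_3(a,s^\ast)$ gives $j_1=-\mathcal{H}s^\ast v_1/\bigl(c^3+(1-c^3\hat\tau)\lambda^u\bigr)$, which is negative when $v_1>0$.
\item $p_3(a,s^\ast)$ is excluded as the $\omega$-limit not because $p_2(a)$ is an attracting node, but because $s$ strictly decreases from $s^\ast$ along the branch. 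With $s\to s_\infty<s^\ast$, invariance of the $\omega$-limit set forces $s_\infty v=0$ on it. For $s_\infty>0$ this would make it a single equilibrium on $\{v=0\}$, i.e.\ $p_3(a,s^\ast)$, a contradiction. On the invariant edge $\{s=0\}$, where $\dot j=-c^3j/(1-c^3\hat\tau)$, the only compact invariant set is $\{(0,0)\}$.
\end{enumerate}
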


\begin{proof}
We start by fixing $w=a$ and follow the same approach used in \cite[Lemma 3.2]{grifo2025far}, showing that a heteroclinic connection between $p_3(a,s^\ast)$ and $p_2(a)$ exists on an invariant hyperplane $\Pi$ as follows (further details are available in the proof of \cite[Lemma 3.2]{grifo2025far}). For an arbitrarily fixed $s=\bar{s}$, the point $p_3(a,\bar{s}) \in \mathcal{M}^{(3)}$ admits eigenvalues
\begin{equation}\label{eq:eigenvalues_p3}
    \lambda^c_3(a,\bar{s}) = 0\quad\text{and}\quad \lambda^{s,u}_3(a,\bar{s}) = -\frac{1}{2}c^{3}(1-\hat{\tau} \mathcal{H}\bar{s})\left(1 \mp \sqrt{1 + \frac{4 \mathcal{H} \bar{s}}{c^3(1-\hat{\tau} \mathcal{H}\bar{s})^2}}\right),   
\end{equation}
with associated eigenvectors $\bm{\eta}^c$, $\bm{\eta}^s$, and $\bm{\eta}^u$, respectively. By performing a coordinate change $(v,s,j) = (0,\bar{s},0) + w_c \bm{\eta}^c + w_s \bm{\eta}^s + w_u \bm{\eta}^u$, we obtain for the centre dynamics that the plane $\left\{ w_c=0 \right\}$ is globally invariant if and only if \mbox{$\bar{s} = \frac{a}{\mathcal{D}} =: s^\ast$}. This plane coincides with $\Pi = \left\{a +j - v - \mathcal{D}s = 0\right\}$ in the original coordinates. Both $p_2(a)$ and $p_3(a,s^\ast)$ lie on $\Pi$, and the eigenvalues of $p_2(a)$ are given by
    \begin{equation}\label{eq:eigenvalues_p2}
    \lambda^{(1)}_2(a) = -c^3,\quad \lambda^{(2)}_2(a)= -\frac{\mathcal{H} a}{\mathcal{D}}(1-c^3\hat{\tau}),\quad \lambda^{(3)}_2(a) = a^2(1-c^3\hat{\tau}).
    \end{equation}
The stable manifold $\mathcal{W}^s(p_2(a))$ is two-dimensional and the stable eigenspace $E^s(p_2(a)) \subset \Pi$, which implies that $E^s \equiv \Pi$. Moreover, $\mathcal{W}^s(p_2(a))$ is tangent to $\Pi$ at $p_2(a)$ (hyperbolic) and both $\mathcal{W}^s(p_2(a))$ and $\Pi$ are two-dimensional invariant manifolds under the flow \eqref{eq:laypb}. Because of the local uniqueness of the stable manifold, they must coincide. \\
On the hyperplane $\Pi$ Eq.~\eqref{eq:laypb} becomes
\begin{equation} \label{eq:laypbPi}
    \begin{aligned}
    \dot{v} &= \frac{c^3}{1-c^3\hat{\tau}} \left( - \frac{\hat{\tau}\mathcal{H}}{\mathcal{D}}(a+j-v)v -j \right), \\
    \dot{j} &= \frac{1}{1-c^3\hat{\tau}} \left( - \frac{\mathcal{H}}{\mathcal{D}} (a+j-v) v - c^3 j \right),
    \end{aligned}
\end{equation}
where the steady-state $(v,j)=(0,0)$ (saddle) corresponds to $p_3(a,s^\ast)$. By construction, its unstable manifold lies in $\mathcal{W}^s(p_2(a))$, thus yielding the existence of a heteroclinic connection between $p_3(a,s^\ast)$ and $p_2(a)$ as intended. 
\end{proof}

\subsubsection{Singular orbit} \label{sec:singorb}
We can now construct a full homoclinic orbit to $p_1(a)$ by matching orbits from the reduced problems obtained in Sec.~\ref{sec:redprob} with the heteroclinic connections showed in Lemmas \ref{lemma:fc2}-\ref{lemma:fc1}. To this aim, we introduce the following notation for subsets of $\mathcal{M}^{(1)}$ and $\mathcal{M}^{(3),w_0}$ for any $\underline{w}$, $\overline{w}$ such that $\underline{w}<\overline{w}$:
\begin{equation}
\begin{aligned}
 \mathcal{M}^{(i)} (\underline{w},\overline{w}) &= \left\{ p_i(w) \, : \, w \in [\underline{w},\overline{w}] \right\}, \qquad i=1,2, \\
 \mathcal{M}^{(3),w_0} (\underline{w},\overline{w}) &= \left\{ p_3(w,s(w)) \, : \, w \in [\underline{w},\overline{w}] \right\}.
\label{eq:M1M3w}
\end{aligned}
\end{equation}
We then obtain the following result.
\begin{proposition}\label{thm:singsol}
 For any fixed value of $a$, $\mathcal{B}$, $\mathcal{D}$, $\mathcal{H}>0$, $0 < \hat{\tau} < \mathrm{min} \left\{ \frac{1}{c^3}, \frac{1}{a^2} \right\}$ and given $c(\hat{\tau})=c^\ast(\hat{\tau})=(a \, \theta (a,\hat{\tau}))^{2/3}$, there exists a unique singular orbit $\Phi_0(\hat{\tau})$ homoclinic to $p_1(a)=(a,0,0,0)$ corresponding to a singular solution to System \eqref{eq:postscalingw}, composed of segments
of orbits of the layer problem \eqref{eq:laypb} and the reduced problems \eqref{eq:slowflow_M3} and \eqref{eq:superslowM1}. In particular, $\Phi_0(\hat{\tau})$ is given by the concatenation
\begin{equation} \label{eq:singhom}
  \Phi_0(\hat{\tau})=\mathcal{M}^{(1)}(0, a) \cup \mathcal{M}^{(3),0}(0, a) \cup \phi_{f,2} \cup \phi_{f,1},
\end{equation}
where $\phi_{f,1}$ and $\phi_{f,2}$ are the heteroclinic orbits determined in Lemma \ref{lemma:fc2} and Lemma \ref{lemma:fc1}, respectively.
\end{proposition}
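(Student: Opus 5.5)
The proof is a matching argument. We start at the superslow level and follow the candidate orbit $\Phi_0(\hat{\tau})$ piece by piece, checking at each junction that the endpoint of one segment is the starting point of the next. Uniqueness comes from the fact that every choice along the way is forced.

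\emph{Superslow segment.} On $\mathcal{M}^{(1)}$ the flow \eqref{eq:superslowM1} with $\delta=0$ reads $w_\eta = w-a$, so $p_1(a)$ is an unstable equilibrium. For $w<a$, orbits move from $p_1(a)$ toward $w=0$ in backward superslow time; equivalently, the orbit leaving $p_1(a)$ to the left, read in the appropriate direction, traces $\mathcal{M}^{(1)}(0,a)$. At $w=0$ the manifold $\mathcal{M}^{(1)}$ meets the fibre $\mathcal{M}^{(3),0}$, whose points satisfy $s(w)=w/\mathcal{D}$. This is the place where the superslow orbit can hand over to slow dynamics on $\mathcal{M}^{(3)}$.

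\emph{Slow segment.} On $\mathcal{M}^{(3)}$ with $\delta=0$, \eqref{eq:slowflow_M3} gives $w_\sigma = s$ and $s_\sigma = s/\mathcal{D}$. Hence $w-\mathcal{D}s$ is conserved, and the leaves $\mathcal{M}^{(3),w_0}$ are invariant. Starting from the origin on the leaf $w_0=0$, the slow flow carries $w$ from $0$ to $a$ along $s=w/\mathcal{D}$. It arrives exactly at $p_3(a,a/\mathcal{D})=p_3(a,s^\ast)$, with $s^\ast$ the value singled out in Lemma \ref{lemma:fc1}. This is the key consistency check: the conservation law forces the slow arrival point to coincide with the unique take-off point allowed by the layer problem.

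\emph{Fast segments.} From $p_3(a,s^\ast)$, Lemma \ref{lemma:fc1} supplies the fast heteroclinic $\phi_{f,2}$ to $p_2(a)$. From $p_2(a)$, Lemma \ref{lemma:fc2}, applied with $c=c^\ast(\hat{\tau})=(a\theta(a,\hat{\tau}))^{2/3}$ and under the stated bounds on $\hat{\tau}$, supplies $\phi_{f,1}$ back to $p_1(a)$. Along both fast pieces $w=a$, because $\dot w=0$ in \eqref{eq:laypb}, so the endpoints match. The concatenation \eqref{eq:singhom} is therefore a closed singular orbit through $p_1(a)$.

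\emph{Uniqueness.} The starting point is forced, since the only superslow equilibrium is $p_1(a)$. The fibre is forced: the orbit leaves $\mathcal{M}^{(1)}$ at $w=0$, which is where $\mathcal{M}^{(1)}$ meets $\mathcal{M}^{(3)}$ via $s=0$, so the fibre must be $w_0=0$. The fast jump must return to $w=a$, so it occurs at $w=a$, and there Lemma \ref{lemma:fc1} makes $s^\ast$ unique. The remaining pieces are unique by the uniqueness of $c^\ast$ in Lemma \ref{lemma:fc2} and of the heteroclinic in the invariant plane $\Pi$.

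\emph{Main obstacle.} The delicate step is justifying the transition from $\mathcal{M}^{(1)}$ onto $\mathcal{M}^{(3),0}$ at $w=0$. Orbits on $\mathcal{M}^{(1)}$ approach $w=0$ only by flowing away from $w=a$. Moreover, the slow flow on $\mathcal{M}^{(3)}$ makes $s$ grow exponentially, and $\mathcal{M}^{(1)}$ plays the role of a repelling critical manifold of \eqref{eq:slowflow_M3}. I would handle this by adopting the convention of \cite{grifo2025far}: the singular orbit is a concatenation of orbit segments with matching endpoints, and at the origin the superslow segment enters the slow one through the unstable direction $s$ of \eqref{eq:slowflow_M3}. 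This direction is tangent to $\mathcal{M}^{(3),0}$ since $w_\sigma/s_\sigma=\mathcal{D}$. Orientation consistency is then checked by traversing the loop in the direction $p_1(a)\to$ fast $\to$ slow $\to$ superslow if needed.
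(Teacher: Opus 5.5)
Your decomposition is the same as the paper's, which simply defers to the concatenation argument of Prop.~3.3 in \cite{grifo2025far}. However, two steps are stated incorrectly, and the first one creates your ``main obstacle''.

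\emph{Orientation of the superslow flow.} On $\mathcal{M}^{(1)}$, \eqref{eq:superslowM1} gives $w_\eta = w-a<0$ for $0<w<a$. So the superslow segment runs from $p_1(a)$ down to $p_1(0)$ in \emph{forward} time, not backward. With that sign, the forward orientation
$p_1(a)\to\mathcal{M}^{(1)}(0,a)\to\mathcal{M}^{(3),0}(0,a)\to\phi_{f,2}\to\phi_{f,1}\to p_1(a)$
is consistent at every junction:
\begin{itemize}
\item $p_1(a)$ is the $\alpha$-limit of the superslow piece and the $\omega$-limit of $\phi_{f,1}$;
\item the superslow piece reaches $p_1(0)$;
\item the slow piece $s(\sigma)=s_0e^{\sigma/\mathcal{D}}$, $w=\mathcal{D}s$, has $\alpha$-limit $p_1(0)$ and reaches $p_3(a,a/\mathcal{D})$ in forward slow time;
\item $\phi_{f,2}$ and $\phi_{f,1}$ run forward $p_3\to p_2\to p_1$ by Lemmas~\ref{lemma:fc1}--\ref{lemma:fc2}.
\end{itemize}
The repelling character of $\mathcal{M}^{(1)}$ for \eqref{eq:slowflow_M3} is therefore not an obstruction. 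It is exactly what lets the orbit leave along the unstable fibre $\mathcal{M}^{(3),0}$. No appeal to a ``convention'' is needed, and traversing the loop in the reverse order you suggest is just the same check in backward time.

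\emph{Selection of the fibre in the uniqueness argument.} Your argument here is circular. You assume the orbit leaves $\mathcal{M}^{(1)}$ at $w=0$, and justify this by $w=0$ being where $\mathcal{M}^{(1)}$ meets $\mathcal{M}^{(3)}$. But $\mathcal{M}^{(1)}\subset\mathcal{M}^{(3)}$, and every $p_1(w_0)$ is the base point of its own unstable fibre $\mathcal{M}^{(3),w_0}$. Moreover, the superslow flow does not stop at $w=0$. So nothing distinguishes $w_0=0$ a priori.

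The selection has to be made backward from the jump:
\begin{itemize}
\item $w$ is constant on the fast pieces, which must end at $p_1(a)$, so the jump occurs at $w=a$;
\item Lemma~\ref{lemma:fc1} then forces $s=s^\ast=a/\mathcal{D}$;
\item the slow invariant $w-\mathcal{D}s$ gives $w_0=a-\mathcal{D}s^\ast=0$;
\item this fixes the take-off point $p_1(0)$, and hence the superslow piece $\mathcal{M}^{(1)}(0,a)$.
\end{itemize}
You already have every ingredient (your ``key consistency check''). It just has to be run in this direction to give uniqueness rather than only existence.
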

\begin{proof}
It follows the same steps as the proof of \cite[Prop.~3.3]{grifo2025far}.
\end{proof}

\subsubsection{Persistence} \label{sec:persist}

In this section, we first prove that the singular orbit $\Phi_0(\hat{\tau})$ obtained in Prop.~\ref{thm:singsol} perturbs to a solution $\Phi_\delta(\hat{\tau})$ of Eq.~\eqref{eq:postscalingw} for $0 < \delta \ll 1$. Then, we finalise the proof of Thm.~\ref{thm:main}, in particular showing the leading order expression for the pulse speed $\mathcal{C}^\ast$.

\begin{theorem} \label{thm:pertsol} 
Let $a$, $\mathcal{B}$, $\mathcal{D}$, $\mathcal{H}$, $\hat{\tau}$ be fixed and positive, and let $\hat{\tau} < \mathrm{min} \left\{ \frac{1}{c^3}, \frac{1}{a^2} \right\}$. Let $0<\delta \ll 1$ be sufficiently small, and assume that $a$, $\mathcal{B}$, $\mathcal{D}$, $\mathcal{H}$ are $\mathcal{O}(1)$ in $\delta$. There exists a unique value $c(\hat{\tau})=c^\ast_\delta(\hat{\tau})$ for which a unique orbit $\Phi_\delta(\hat{\tau})$ to \eqref{eq:postscalingw} exists that is homoclinic to $p_1(a) = (a,0,0,0)$. Moreover, $\Phi_\delta(\hat{\tau})$ is $\mathcal{O}(\delta)$ close to $\Phi_0(\hat{\tau})$ \eqref{eq:singhom}, and $c^\ast_\delta(\hat{\tau})=c^\ast(\hat{\tau})+\mathcal{O}(\delta)$ \eqref{eq:c_ast}.
\end{theorem}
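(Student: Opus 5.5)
The plan is to follow the standard GSPT persistence argument, as in the parabolic case \cite[Thm.~3.4]{grifo2025far}, tracking invariant manifolds of the full system \eqref{eq:postscalingw} along the singular orbit $\Phi_0(\hat{\tau})$ of Prop.~\ref{thm:singsol} and closing the construction with a single transversality condition in the speed $c$. Since $\hat{\tau}$ enters only through the coefficients and the prefactor $1/(1-c^3\hat{\tau})$, the constraint $\hat{\tau}<\min\{1/c^3,1/a^2\}$ keeps every ingredient of the argument uniformly nondegenerate.

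First, compact pieces of the critical manifolds $\mathcal{M}^{(1)}(0,a)$ and $\mathcal{M}^{(3)}$ (with $w$ in a neighbourhood of $[0,a]$ and $s$ in a neighbourhood of $[0,a/\mathcal{D}]$) are normally hyperbolic, being of saddle type in the $(v,j)$ directions. By Fenichel theory they therefore persist as locally invariant slow manifolds $\mathcal{M}^{(1)}_\delta\subset\mathcal{M}^{(3)}_\delta$, lying $\mathcal{O}(\delta)$ close to the originals, together with their stable and unstable fibrations. The equilibrium $p_1(a)$ lies exactly in $\mathcal{M}^{(1)}_\delta$, because $\{v=s=j=0\}$ remains invariant for $\delta>0$. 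On $\mathcal{M}^{(3)}_\delta$ the reduced flow is an $\mathcal{O}(\delta)$ regular perturbation of \eqref{eq:slowflow_M3}, so the superslow unstable manifold of $p_1(a)$ inside $\mathcal{M}^{(1)}_\delta$, followed by the slow flow on $\mathcal{M}^{(3)}_\delta$, tracks $\mathcal{M}^{(1)}(0,a)\cup\mathcal{M}^{(3),0}(0,a)$ up to $\mathcal{O}(\delta)$ by exchange-lemma or corner-passage arguments. The result is a two-dimensional manifold $\mathcal{W}^u_\delta$, namely the unstable manifold of $p_1(a)$ in the four-dimensional phase space, parametrised additionally by $c$, which reaches a neighbourhood of $p_3(a,a/\mathcal{D})$ and exits along the fast unstable fibres of $\mathcal{M}^{(3)}_\delta$.

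Next, the fast jump is tracked. By Lemma~\ref{lemma:fc1}, the unstable fibre of $p_3(a,s^\ast)$ lies in $\Pi=\mathcal{W}^s(p_2(a))$ and connects to $p_2(a)$; since $p_2(a)$ is a hyperbolic saddle with one unstable direction, the exchange lemma and the $\lambda$-lemma carry $\mathcal{W}^u_\delta$ past $p_2(a)$ and align it, $C^1$-$\mathcal{O}(\delta)$ close, with $\mathcal{W}^u(p_2(a))$, which is the front $\phi_{f,1}$ of Lemma~\ref{lemma:fc2} when $c=c^\ast$. Counting dimensions, the stable side consists of the strong stable fibre of $p_1(a)$, thickened by $p_1(a)$ itself lying on the superslow manifold $\mathcal{M}^{(1)}_\delta$ (in the full system $p_1(a)$ has a one-dimensional centre-stable direction along $w$ and one strong stable direction). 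Counting therefore yields a one-parameter mismatch to be cancelled by $c$. Measuring the distance $D_\delta(c)$ between $\mathcal{W}^u_\delta$ and the stable fibre of $p_1(a)$ on a section transverse to $\phi_{f,1}$ (e.g.\ $\{v=a/2\}$), one gets $D_\delta(c)=M_c\,(c-c^\ast)+\mathcal{O}((c-c^\ast)^2,\delta)$. Because the Melnikov integral $M_c$ from the proof of Lemma~\ref{lemma:fc2} is strictly nonzero, the implicit function theorem gives a unique zero $c^\ast_\delta=c^\ast+\mathcal{O}(\delta)$, and the resulting orbit $\Phi_\delta(\hat{\tau})$ is $\mathcal{O}(\delta)$ close to $\Phi_0(\hat{\tau})$. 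Uniqueness of the orbit follows from the uniqueness of each tracked manifold near $\Phi_0$ together with the uniqueness of $s^\ast$ in Lemma~\ref{lemma:fc1}.

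The main obstacle is the passage near $p_2(a)$ and the corner at the origin $w=0$, where $\mathcal{M}^{(2)}$ and $\mathcal{M}^{(3)}$ meet and normal hyperbolicity degenerates, since the eigenvalues on $\mathcal{M}^{(2)}$ scale with $w$. For the corner I would first restrict to $w\in[w_0,a]$ away from $0$ on the slow part, and handle the neighbourhood of $w=0$ exactly as in \cite{grifo2025far}, noting that the $\hat{\tau}$ terms only multiply $v$-dependent factors and so do not change the leading-order structure there. For $p_2(a)$, the point that must be checked is that $\phi_{f,2}$ enters $p_2(a)$ transversally to the weak direction, so that the exchange lemma applies, while the orbit arrives along $\Pi$. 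This relies on the eigenvalue gap in \eqref{eq:eigenvalues_p2}, which persists for $\hat{\tau}<1/c^3$.
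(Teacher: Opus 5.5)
There is a genuine gap in your dimension count, and it sits at the matching step.

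\textbf{The dimensions of the invariant manifolds of $p_1(a)$.} At $p_1(a)$ the linearisation of \eqref{eq:postscalingw} is block-triangular, because the $(v,j)$-block does not see $w$ or $s$. The determinant of that block is $-c^3\delta\mathcal{B}/(1-c^3\hat\tau)<0$. So, besides $\delta/\mathcal{D}$ and $\delta^2/(1+\delta)$ (whose eigenvectors lie in the plane $\{v=j=0\}$), there is a third positive eigenvalue $\delta\mathcal{B}+\mathcal{O}(\delta^2)$, and only one negative eigenvalue. Hence $\mathcal{W}^u(p_1(a))$ is three-dimensional and $\mathcal{W}^s(p_1(a))$ is one-dimensional. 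The $w$-direction is unstable, not centre-stable; see \eqref{eq:superslowM1}, where the superslow flow moves away from $w=a$.

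\textbf{Why this breaks your argument.} Both halves of your count are off, and not harmlessly.
\begin{itemize}
\item Your two-dimensional manifold (superslow unstable manifold in $\mathcal{M}^{(1)}$, then slow flow on $\mathcal{M}^{(3)}$) is just a piece of the plane $\{v=j=0\}$. That plane is exactly invariant for every $\delta$, so this manifold can never ``exit along the fast unstable fibres''.
\item What lets the orbit leave is the omitted third unstable direction. It makes the incoming manifold transverse to $\mathcal{W}^s(\mathcal{M}^{(3)})$, so that the exchange lemma can push it off along the unstable fibres over the take-off region.
\item On the other side, an orbit landing on the strong stable fibre of $p_1(w)$ with $w\neq a$ is repelled along $\mathcal{M}^{(1)}$ and is not homoclinic. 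Your ``thickened'' stable side is therefore the wrong target.
\item Your distance function $D_\delta(c)$ is in fact measured against the one-dimensional stable fibre. But with a two-dimensional $\mathcal{W}^u_\delta$, its trace on $\{v=a/2\}$ is a curve. A curve and a point in a three-dimensional section need two conditions to meet, so one scalar condition in $c$ is not enough. The count closes only with the correct dimensions $3+1$.
\end{itemize}

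\textbf{Normal hyperbolicity.} $\mathcal{M}^{(1)}$ is not normally hyperbolic. At $\delta=0$ its $(v,j)$-linearisation has a zero eigenvalue (the neutral direction $(1,0)$ in the proof of Lemma~\ref{lemma:fc2}). Likewise, one fast eigenvalue in \eqref{eq:eigenvalues_p3} vanishes at $\bar s=0$. So Fenichel theory cannot be invoked on $\mathcal{M}^{(1)}(0,a)$, nor on an $s$-neighbourhood of $0$ in $\mathcal{M}^{(3)}$. The degeneracy runs along all of $\{s=0\}$, not only at the corner $w=0$.

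\textbf{How the paper avoids these problems.}
\begin{itemize}
\item $\mathcal{M}^{(3)}$ is exactly invariant and carries the explicit, $\hat\tau$-independent linear flow $w=w_3(s;k_1)$. On it $p_1(a)$ is a repelling node, so the $\alpha$-limit set of $\Sigma$ is $p_1(a)$.
\item Consequently the superslow and slow parts come for free, with no persistence of slow manifolds and no corner analysis.
\item The fast flow over $\mathcal{M}^{(1)}(a-\mu\mathcal{D},a+\mu\mathcal{D})$ is tracked backward through the double heteroclinic. This heteroclinic exists for every $w$, with speed $c^\ast_w$ and $s^\ast_w=w/\mathcal{D}$.
\item This gives a transversal take-off curve $\Gamma$. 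Fenichel theory and the exchange lemma perturb it to $\Gamma_\delta$ with $\Gamma_\delta\cap\Sigma=p_3(a,s^\ast_\delta)$.
\item The orbit is closed on $\mathcal{M}^{(3)}$ by choosing $k_1^\ast$ with $w_3(s^\ast_\delta;k_1^\ast)=a$.
\end{itemize}
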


\begin{proof}
The proof is analogous to \cite[Thm.~3.4]{grifo2025far}. In particular, we observe that $\mathcal{M}^{(3)}$ is an invariant manifold for the full system \eqref{eq:postscalingw}; on this manifold, the dynamics (independent of $\hat{\tau}$) are again explicitly given by $w(s)=w_3(s; k_1)$. The $\alpha$-limit set of $\Sigma$ -- a neighbourhood of $p_3(a, s^\ast)$ on $\mathcal{M}^{(3)}$ -- solely consists of $p_1(a)$. Moreover, the double heteroclinic orbit $\phi_{f,2} \cup \phi_{f,1}$ constructed in Prop.~\ref{thm:singsol} is valid for arbitrary values of $w$, yielding $c_w^\ast(\hat{\tau})=(w \, \theta(w, \hat{\tau}))^{2/3}$ and $s_w^\ast = \frac{w}{\mathcal{D}}$. Tracking backwards the (fast) flow of a neighbourhood of $p_1(a)$ given by $\mathcal{M}^{(1)}(a-\mu \mathcal{D}, a+\mu \mathcal{D})$ (with $\mu$ arbitrarily small) until it reaches $\mathcal{M}^{(3)}$ again, we obtain a transversal intersection; this corresponds to a curve $\Gamma$ such that $\Gamma \cap \Sigma = p_3(a, s^\ast)$ for $\delta=0$, and, thanks to Fenichel theory and the Exchange Lemma, a curve $\Gamma_\delta$ such that $\Gamma_\delta \cap \Sigma = p_3(a, s_\delta^\ast)$ for $0 < \delta \ll 1$, where $p_3(a, s_\delta^\ast)$ is $\mathcal{O}(\delta)$-close to $p_3(a, s^\ast)$. The orbit is finally closed via $w_3(s; k_1^\ast)$ on $\mathcal{M}^{(3)}$ until $w=a$, where we note that $k_1^\ast$ is the constant obtained from solving $w_3(s^\ast_\delta; k_1^\ast) = a$.
\end{proof}

We now turn our attention to the proof of Thm.~\ref{thm:main}.


\begin{proof}[Proof of Thm.~\ref{thm:main}]
By undoing the rescalings that led to Eq.~\eqref{eq:postscalingw} and \eqref{eq:thetaT}, we are here going to provide leading order expressions for the critical speed $\mathcal{C}^\ast$ and for the upper bound $\tau_{\rm upper}$ limiting the range of admissible $\tau$-values covered by our analysis of far-from-equilibrium travelling pulses. To this aim, we first reformulate Eq.~\eqref{eq:thetaT} to leading order as $\theta(T) = \sum_i \theta_i T^i$, where the values of $\theta_i$, $i=0,1,2$ have been derived in the proof of Lemma \ref{lemma:fc2}. According to Eq.~\eqref{eq:rescT}, this is equivalent to $c^{3/2} = \sum_i \theta_i \hat{\tau}^i w^{2i+1}$, which, reverting to the original scaling via Eq.~\eqref{eq:rescaling} and recalling that in our case $w=a$, yields 
\begin{equation} \label{eq:CepsFV}
(\mathcal{C} \varepsilon)^{3/2} = \varepsilon \sum_i \theta_i \left(\frac{\tau}{\mathcal{C} \varepsilon}\right)^i \left[(1+\mathcal{C} \varepsilon)\mathcal{A}\right]^{2i+1}. 
\end{equation}
This represents an implicit equation for $\mathcal{C}$, which can be solved numerically for any given fixed value of $\mathcal{A}$ and $\varepsilon$. To leading order, we obtain
\begin{equation}
    \mathcal{C}^\ast \varepsilon= (\mathcal{A} \theta_0)^{2/3} \varepsilon^{2/3} + \mathcal{O}(\varepsilon^{4/3}) + \left[\frac{2 \mathcal{A}^2 \theta_1}{3 \theta_0} + \frac{16 \mathcal{A} \theta_1}{9 \mathcal{A}^{1/3} \theta_0^{1/3}}\varepsilon^{2/3} + \mathcal{O}(\varepsilon^{4/3})\right] \tau + \mathcal{O}(\tau^2).
\end{equation}
In particular, applying the rescalings \eqref{eq:rescaling}-\eqref{eq:rescT} and recalling that to leading order $a \approx \mathcal{A}$, we obtain
\begin{equation} \label{eq:Casttau}
    \mathcal{C}^*(\tau) =\left( \frac{\mathcal{A}^2 \theta(\tau)^2}{\varepsilon} \right)^{1/3} + \mathcal{O}(1)
\end{equation}
as claimed. 
In Fig.~\ref{fig:bifdiag}(a) we provide a comparison between $\mathcal{C}^\ast$ as a function of $\tau$ and the numerical data obtained from AUTO continuation (see Sec.~\ref{sec:numcont}) for a fixed value of $\mathcal{A}$ and $\varepsilon$.

\noindent
As for the upper bound $\tau_{\rm upper}$, we recall that the constraint $\hat{\tau} < \min \left\{ \frac{1}{c^3}, \frac{1}{a^2} \right\}$ must be satisfied for travelling pulses to exist in this regime. Reverting the condition $\hat{\tau} < \frac{1}{c^3}$ to the original scaling via Eq.~\eqref{eq:rescaling} yields $\tau < \frac{1}{\mathcal{C}^2}$, which in virtue of Eq.~\eqref{eq:Casttau} leads to $\tau < \varepsilon^{2/3} \mathcal{A}^{-4/3} \theta(\tau)^{-4/3} + \mathcal{O}(\varepsilon)$. On the other hand, reverting the condition $ \hat{\tau} < \frac{1}{a^2}$ to the original scaling, we find $\tau < \frac{\varepsilon \mathcal{C}}{\mathcal{A}^2 (1+\varepsilon \mathcal{C})^2}$, which again by Eq.~\eqref{eq:Casttau} yields $\tau < \varepsilon^{2/3} \mathcal{A}^{-4/3}\theta(\tau)^{2/3} + \mathcal{O}(\varepsilon)$. We observe that this is internally consistent: by Eq.~\eqref{eq:rescaling}, $\tau$ should be of order $\varepsilon^{2/3}$; in particular, $\tau$ is (reasonably) small. Therefore, we can approximate $\theta(\tau) \approx \theta_0$ to leading order for small $\varepsilon$. Since $\theta_0^{2/3}<\theta_0^{-4/3}$, the second condition on $\tau$ prevails on the first, and we obtain to leading order
\begin{equation} \label{eq:taumax}
    \tau_{\rm upper} = \varepsilon^{2/3} \mathcal{A}^{-4/3}\theta_0^{2/3} + \mathcal{O}(\varepsilon).
\end{equation}
\end{proof}

\begin{remark} \label{rem:smalltau}
    As shown in the proof of Thm.~\ref{thm:main}, the assumption on the smallness of $\tau$ (in particular $\tau = \mathcal{O}(\varepsilon^{2/3})$ with $0 < \varepsilon \ll 1$) introduced in Eq.~\eqref{eq:rescaling} is crucial for the construction of the travelling pulses. In particular, assuming both $\tau$ and $\mathcal{C}$ to be order $1$ in System \eqref{eq:prescaling} would prevent us from finding a suitable scaling leading to System \eqref{eq:postscaling}.
\end{remark}

\subsection{Numerical continuation} \label{sec:numcont}

To validate the asymptotic analysis and explore the role of the inertial parameter $\tau$ on far-from-equilibrium pulse dynamics, we perform a numerical continuation of System \eqref{eq:prescaling} using the software \textsc{AUTO} \cite{Doedel1981}. The bifurcation diagrams reported in Fig.~\ref{fig:bifdiag} summarise the dependence of the wave speed and the state variables on $\tau$, starting from the reference configuration used in Fig.~\ref{fig:num_GSPT}, i.e., $\mathcal{A} = 1.2$, $\mathcal{B} = 0.45$, $\varepsilon = 0.005$, $\mathcal{D} = 4.5$, $\mathcal{H} = 1$. In particular, panel (a) shows that the migration speed $\mathcal{C}$ increases monotonically as $\tau$ grows, leading to faster uphill propagation of the vegetation pulse. We observe that this behaviour is observed even for $\tau$-values beyond the analytical upper bound, which according to Eq.~\eqref{eq:taumax} here corresponds to $\tau_{\rm upper} \approx 0.0208$. In general, this result is in perfect agreement with the theoretical predictions and provides a strong validation of the asymptotic analysis presented in the previous sections. From an ecological perspective, although inertia generally introduces a delayed response in the system, especially in the transient regime between asymptotic states (e.g. from a steady-state to a patterned state or between different patterned states),
this effect does not prevent the increase in migration speed. The reason apparently lies in the fact that the inertial term seems to act as a transport enhancer rather than a growth facilitator: while local dynamics become slower, the advective component dominates the overall behaviour, resulting in faster uphill propagation of vegetation bands despite reduced local adaptation. 

Moreover, panels (b)-(d) illustrate the variation of the pulse peak w.r.t.~its state variables, namely $U_{max}$, $V_{max}$, and $S_{max}$, as the inertial time increases. The biomass amplitude $V_{max}$ exhibits a non-monotonic trend: it initially increases for small values of $\tau$, reaches a maximum, and then decreases as $\tau$ becomes large. In contrast, the toxicity component $S_{max}$ decreases monotonically with $\tau$, indicating that stronger inertial effects reduce the accumulation of autotoxic compounds. The amplitude of the surface water $U_{max}$ remains essentially constant, showing negligible variations compared to the other state variables. The change in monotonicity for $V_{max}$ occurs approximately close to $\tau_{\rm upper}$: solutions on the part of the branch beyond this value, in fact, do not present the same structure as the one of the travelling pulses investigated here, and cannot be covered by the above analysis. From an ecological perspective, this behaviour suggests that moderate inertial effects may temporarily favour biomass growth by stabilising resource redistribution, while large $\tau$-values lead to a decline in vegetation due to the reduced time available for water uptake during rapid migration. The monotonic decrease of $S_{max}$ with $\tau$ implies that faster-moving pulses leave less time for autotoxic compounds to accumulate locally, partially mitigating toxicity stress. However, this advantage is accompanied by a qualitative change in biomass distribution: while $V_{max}$ initially increases with $\tau$, the faster migration associated with large $\tau$-values reduces the residence time of vegetation in each location, potentially limiting long-term resource exploitation despite lower toxicity levels. 

\begin{figure}[t!]
	\centering
	\includegraphics[width=1\textwidth]{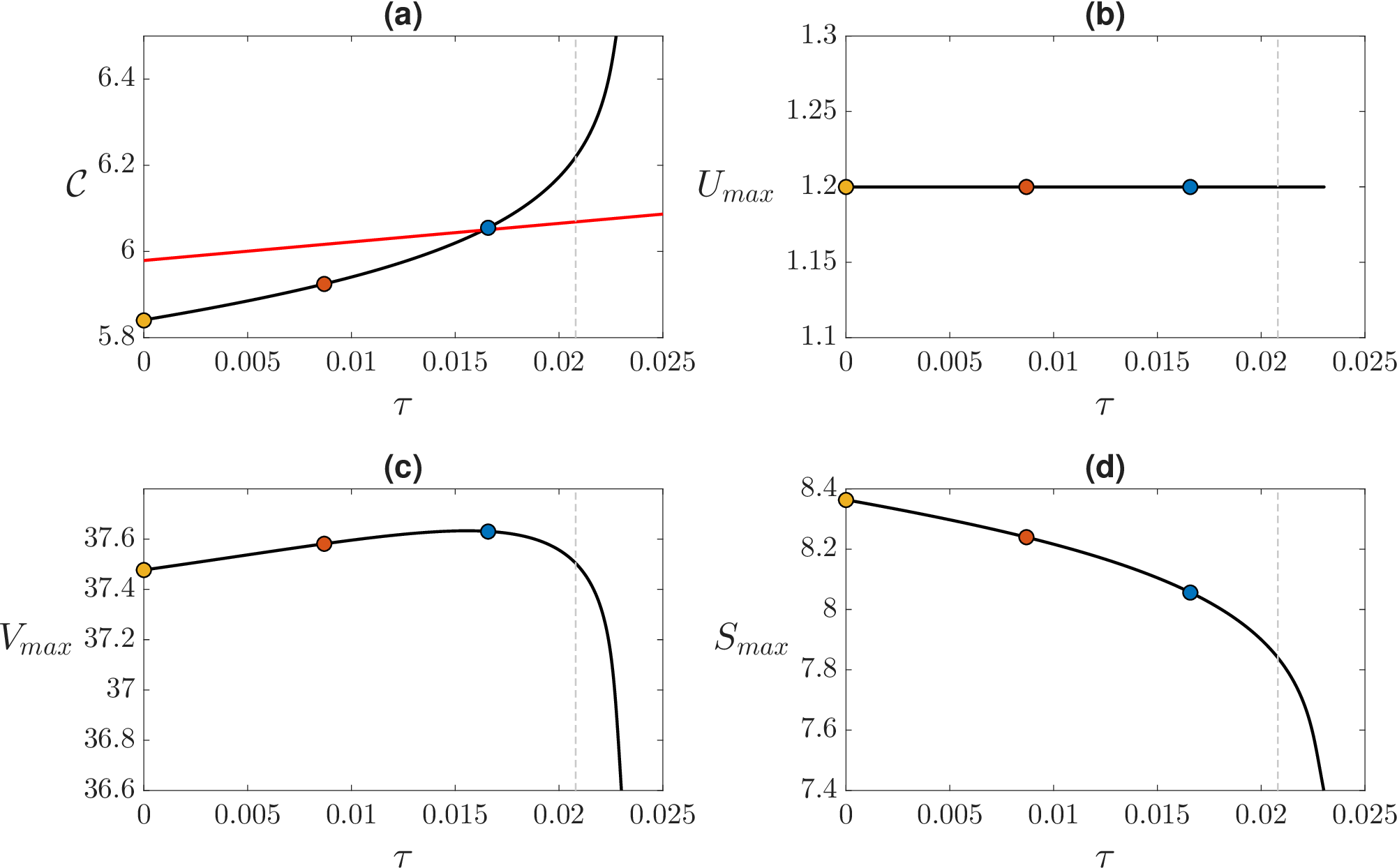}
        \caption{Bifurcation diagrams obtained by performing numerical continuation of System \eqref{eq:prescaling} with AUTO varying $\tau$. Panel (a) depicts how the migration speed $\mathcal{C}$ is affected by the inertial time: the black solid line is obtained from numerical continuation, whereas the red solid line represents the migration speed derived analytically in Thm.~\ref{thm:main}. Panels (b-d) show the dependence of the pulse peak expressed in terms of the field variables $(U_{max},V_{max},S_{max})$ on $\tau$. In all panels, the dashed grey line indicates the location of the threshold $\tau_{\rm upper}$ reported in \eqref{eq:taumax}. The starting point for all the figures is the same as the one fixed in Fig.~\ref{fig:num_GSPT} ($\mathcal{A}=1.2$, $\mathcal{B}=0.45$, $\varepsilon=0.005$, $\mathcal{D}=4.5$, $\mathcal{H}=1$ and $\tau = 0$). The coloured dots on each branch identify the location in the corresponding plane of the solutions illustrated in Fig.~\ref{fig:profiles}.}
        \label{fig:bifdiag}
\end{figure}

Further insight into the qualitative changes of the pulse solution is provided in Fig.~\ref{fig:profiles}, where profiles of $U$, $V$, $J$, and $S$ are plotted in the comoving frame $\xi$ for different $\tau$-values. As $\tau$ grows, the vegetation peak becomes narrower while the toxicity profile broadens. The resulting asymmetry in the pulse shape suggests that inertial effects hinder uphill colonisation, reinforcing the dominance of negative plant-soil feedback. Note that these morphological changes are ecologically relevant as they reproduce the transition from resilient vegetation bands to degraded states under chronic stress.

\begin{figure}[t!]
	\centering
	\includegraphics[width=0.9\textwidth]{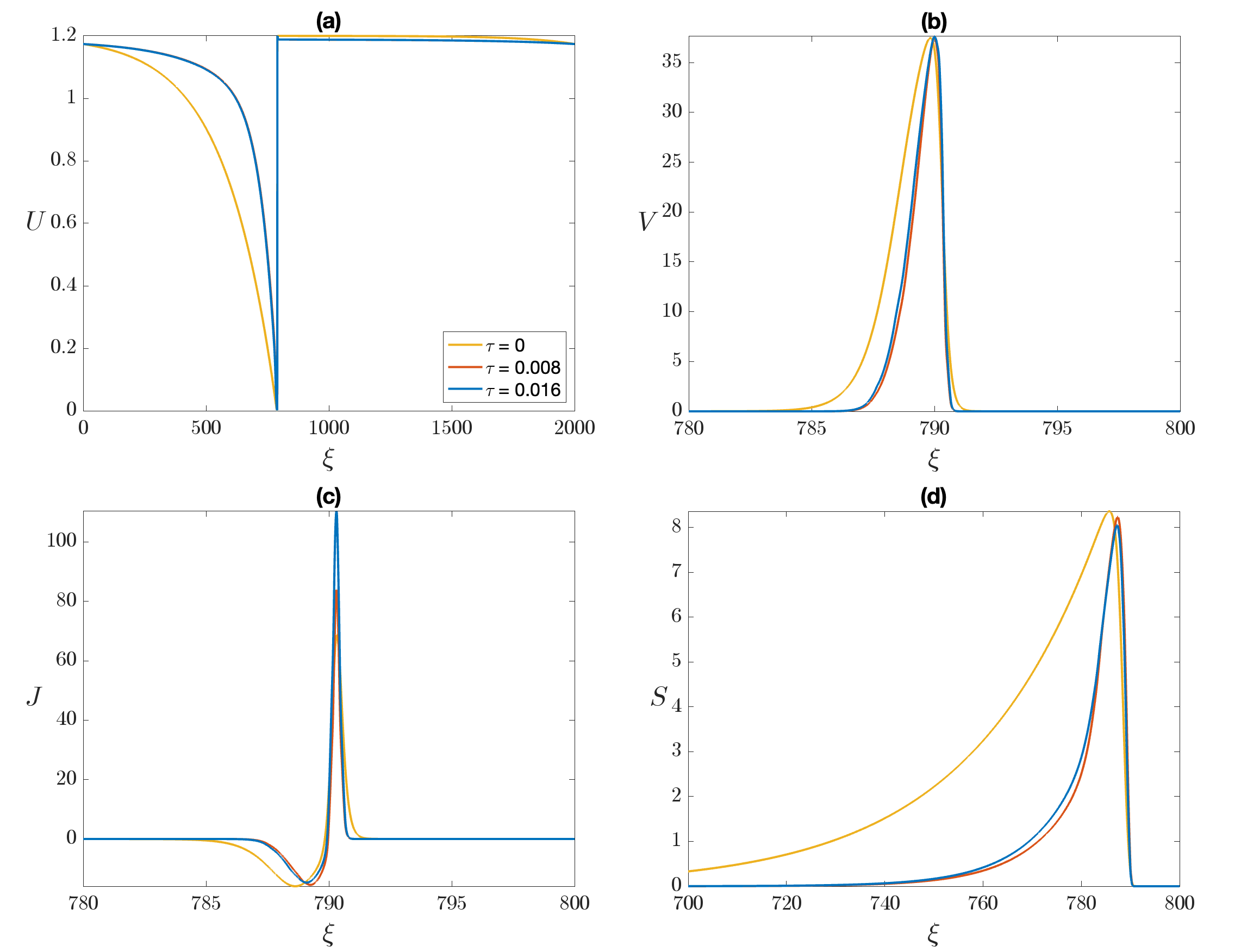}
        \caption{Profiles of the field variables (panel (a): $U$, panel (b): $V$, panel (c): $J$, panel (d) $S$) in terms of the comoving variable $\xi$, for different values of $\tau$ obtained by performing numerical continuation of System \eqref{eq:prescaling} with AUTO. Different colours correspond to different parameter configurations as shown in Fig.~\ref{fig:bifdiag}.
        }
        \label{fig:profiles}
\end{figure}

In order to compare the numerical solutions depicted in Fig.~\ref{fig:bifdiag}-\ref{fig:profiles} with the analytical results obtained in the asymptotic case $0<\delta\ll 1$, we plot in Fig.~\ref{fig:orbits} these profiles as homoclinic orbits projected into $(v,s,w)$-space (panel (a)) and $(v,s,q)$-space (panel (b)), with colours corresponding to the solutions shown in Fig.~\ref{fig:bifdiag}-\ref{fig:profiles}. It should be noticed that the shape of the homoclinic orbit remains essentially unchanged across the considered range of $\tau$, confirming that inertia does not alter the qualitative structure of the connection. This is in perfect agreement with the theoretical predictions: the inertial term influences the migration speed and the amplitude of the state variables, but it does affect neither the geometry of the singular skeleton described in the previous sections nor the different scales. In particular, the equilibria $p_{2}(a)$ and $p_{3}(a,s^{\ast})$ preserve their relative positions in phase space, and the global organisation of the orbit remains consistent with the analytical construction based on the layer problem. These findings validate the asymptotic analysis and highlight the robustness of the theoretical framework.

\begin{figure}[t!]
	\centering
    \psfrag{PLACEHOLDER}{$\mathscr{C}$}
	\includegraphics[width=1\textwidth]{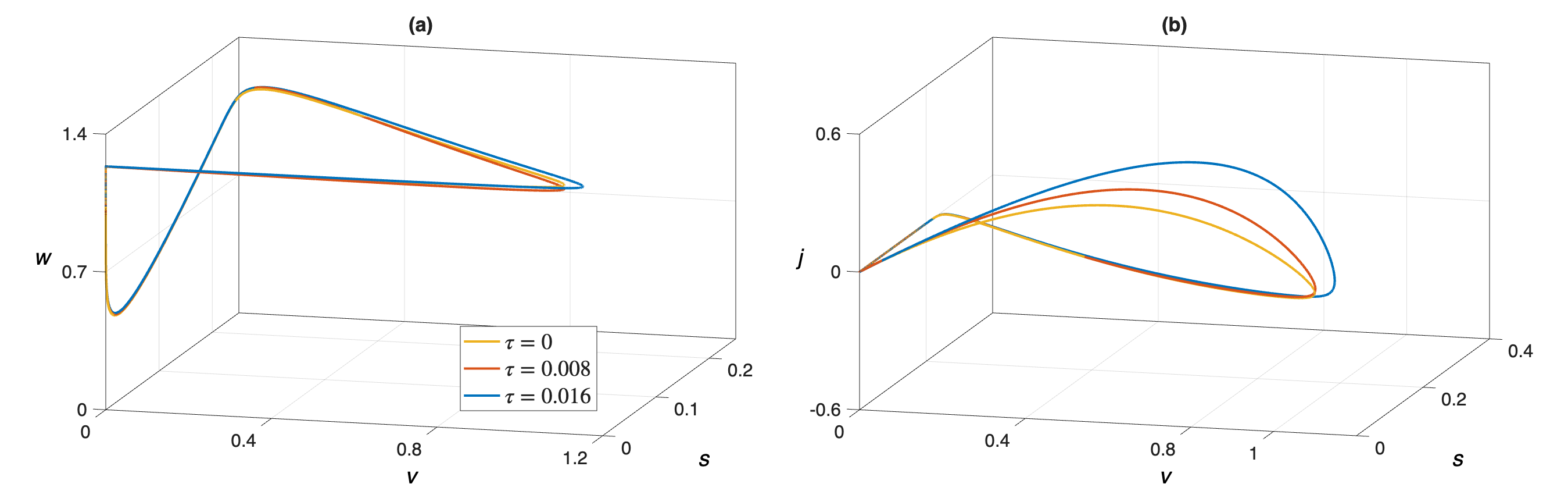}
        \caption{Homoclinic orbits obtained by performing numerical continuation of System \eqref{eq:prescaling} with AUTO in ($v,s,w$)-space (panels (a)) and in ($v,s,q$)-space (panels (b)). Different colours correspond to different parameter configurations as shown in Fig.~\ref{fig:bifdiag}.}
        \label{fig:orbits}
\end{figure}

In summary, the numerical continuation confirms the theoretical predictions derived from the asymptotic analysis. The migration speed $\mathcal{C}$ increases monotonically with $\tau$ up to the threshold $\tau = \tau_{\max}$, validating the analytical framework and highlighting the role of inertia as a transport enhancer. While the inertial term significantly affects the propagation speed and the amplitude of the state variables, it does not alter the qualitative structure of the pulse solution: the homoclinic orbit and the underlying singular skeleton remain unchanged across the considered range of $\tau$. These findings demonstrate the robustness of the geometric construction and provide a consistent link between the analytical and numerical approaches for far-from-equilibrium dynamics.

\section{Discussion}\label{sec:Conclusion}



The analysis developed in this paper provides a framework for understanding how inertial effects modulate vegetation patterns in sloped semi-arid landscapes, highlighting the differences that emerge between close to the instability threshold and far-from-equilibrium dynamics. By combining LSA, WNA, and GSPT, we have shown that inertia does not simply introduce a temporal lag in the vegetation response, but rather acts as a genuinely dynamical parameter that reshapes both the onset and the nonlinear evolution of coherent structures. 

Comparing the traveling patterns emerging in the analysed regimes, some important differences are observed both analytically and numerically. On the one hand, patterned solutions exhibit a larger sensitivity to inertial effects in the far-from-equilibrium regime. Indeed, while the wave bifurcation locus is affected by inertia for $\tau > 0.1$ only, the traveling pulses features respond to any $\tau \not = 0$. On the other hand, the migration speed shows opposite inertial dependencies in the two regimes. Specifically, it scales inversely with inertial time near the onset but directly in the far-from-equilibrium regime (see Figs. \ref{figure6} and \ref{fig:bifdiag}). This apparent contradiction stems from the fact that, in the first scenario, the speed reduction is mainly dictated by non-negligible reduced plant losses.

As for the travelling pulses analysed in Sec.~\ref{sec:travpulse}, the range of $\tau$-values supporting their existence, as further shown in Sec.~\ref{sec:numcont}, is fairly narrow, especially when compared to the one supporting the emergence of small-amplitude periodic patterns close to the onset of instability. This is due to the fact that, as pointed out in Rem.~\ref{rem:smalltau}, the regime covered by our GSPT investigation requires $\tau$ to be $\mathcal{O}(\varepsilon^{2/3})$ with $0 < \varepsilon \ll 1$. The case $\tau = \mathcal{O}(1)$, analysed in Sec.~\ref{sec:closeto}, is therefore excluded from the analysis carried out in Sec.~\ref{sec:travpulse}. Moreover, we observe that inertia does not influence the qualitative structure of such travelling pulses, but rather modulates their quantitative features, such as propagation speed and amplitude. In particular, since the speed increases and the pulses become narrower as $\tau$ increases, inducing the use of refined numerical methods to capture the pulses' structure correctly. In the hyperbolic regime, as long as $\tau$ increases, the pulse becomes extremely sharp, requiring very fine spatial meshes to avoid numerical artifacts. Stable simulations require a mesh size small enough to resolve the peak width and an implicit time–stepping scheme to handle the stiffness. Coarser grids or large time steps lead to an underestimation of both amplitude and propagation speed.

In the future, we plan to construct and analyse the associated Busse baloon \cite{EIGENTLER2024,VanDerStelt2013} to link the two regimes analysed in this work. Moreover, we aim to consider the more general scenario where multiple species (with specific, distinguished relaxation times) interact. From a more ecological perspective, a further research direction involves the introduction of parameters dependent on space and/or time in order to take into account, for instance, soil topography and seasonal rainfall, which play a particularly important role in arid and semi-arid environments.

\paragraph{Acknowledgements.}
GC, CC, GG, AI and GV are members of Gruppo Nazionale per la Fisica Matematica (GNFM), Istituto Nazionale di Alta Matematica (INdAM), which has partially supported this work. GG is a post-doc fellow supported by the National Institute of Advanced Mathematics (INdAM) funded by the European Union - NEXTGENERATIONEU CUP E63C25000470007. AI acknowledges the financial support from the project ``Pluvial Adaptation and Resilience scenarios for urban Areas through improved Data of Intense rainfall, Good practices, and water Management - PARADIGM'' (Grant No. MUR PE00000005 - CUP E63C22002000002) as well as ``Sustainable Urban areas by Nature-based solutions implementation to mitigate climate impacts and achieve a Resilient, Innovative and Smart Environment - SUNRISE'' (Grant No. MUR PE00000005 - CUP E63C22002000002). GC, CC and GV acknowledges support from MUR (Italian Ministry of University and Research) through PRIN2022-PNRR Project No.~P2022WC2ZZ ``A multidisciplinary approach to evaluate ecosystems resilience under climate change''.

\bibliographystyle{unsrtnat}
\bibliography{bibliography}

@article{Carter2018,
  title		=	{Traveling stripes in the {Klausmeier} model of vegetation pattern formation},
  author	=	{Carter, P. and Doelman, A.},
  journal	=	{SIAM Journal of Applied Mathematics},
  volume	=	{78},
  pages		=	{3213--3237},
  year		=	{2018},
  DOI 		= 	{10.1137/18M1196996}
}

@article{Marasco2014,
  title		=	{Vegetation pattern formation due to interactions between water availability and toxicity in plant-soil feedback},
  author	=	{Marasco, A. and Iuorio, A. and Carten\`i, F. and Bonanomi, G. and Tartakovsky, D.M. and Mazzoleni, S. and Giannino, F.},
  journal	=	{Bulletin of Mathematical Biology},
  volume	=	{76},
  pages		=	{2866--2883},
  year		=	{2014},
  DOI 		=	{10.1007/s11538-014-0036-6}
}

@article{Iuorio2021,
  title		=	{The influence of autotoxicity on the dynamics of vegetation spots},
  author	=	{Iuorio, A. and Veerman, F.},
  journal	=	{Physica D},
  volume	=	{427},
  pages		=	{133015},
  year		=	{2021},
  DOI 		= 	{10.1016/j.physd.2021.133015}
}

@article{Klausmeier1999,
  title		=	{Regular and irregular patterns in semiarid vegetation},
  author	=	{Klausmeier, C.A.},
  journal	=	{Science},
  volume	=	{284},
  pages		=	{1826--1828},
  year		=	{1999},
  DOI 		= 	{10.1126/science.284.5421.1826}
}

@article{Bastiaansen2019,
  title		=	{Stable planar vegetation stripe patterns on sloped terrain in dryland ecosystems},
  author	=	{Bastiaansen, R. and Carter, P. and Doelman, A.},
  journal	=	{Nonlinearity},
  volume	=	{32},
  number	=	{8},
  pages		=	{2759},
  year		=	{2019},
  DOI 		= 	{10.1088/1361-6544/ab1767}
}

@article{Doedel1981,
  title={{AUTO: A program for the automatic bifurcation analysis of autonomous systems}},
  author={Doedel, E.J.},
  journal={Congressus Numererantium},
  volume={30},
  number={265-284},
  pages={25--93},
  year={1981}
}

@book{Tongway2001,
  author 	=	{Tongway, D.J.},
  year 		= 	{2001},
  title 	= 	{Banded Vegetation Patterning in Arid and Semiarid Environments},
  publisher = 	{Springer},
  address 	= 	{New York},
  edition 	= 	{first},
  DOI 		= 	{10.1007/978-1-4613-0207-0}
}

@article{Rietkerk2021,
  title		=	{Evasion of tipping in complex systems through spatial pattern formation},
  author	=	{Rietkerk, M. and Bastiaansen, R. and Banerjee, S. and Van De Koppel, J. and Baudena, M. and Doelman, A.},
  journal	=	{Science},
  volume	=	{374},
  pages		=	{},
  year		=	{2021},
  DOI 		= 	{10.1126/science.abj0359}
}

@article{Bastiaansen2020,
  title		=	{The effect of climate change on the resilience of ecosystems with adaptive spatial pattern formation},
  author	=	{Bastiaansen, R. and Doelman, A. and Eppinga, M.B. and Rietkerk, M.},
  journal	=	{Ecology Letters},
  volume	=	{43},
  pages		=	{414--429},
  year		=	{2020},
  DOI 		= 	{10.1111/ele.13449}
}

@article{Consolo2022PRE,
  title		=	{Oscillatory periodic pattern dynamics in hyperbolic reaction-advection-diffusion models},
  author	=	{Consolo, G. and Curr\`o, C. and Grif\`o, G. and Valenti, G.},
  journal	=	{Physical Review E},
  volume	=	{105},
  pages		=	{034206},
  year		=	{2022},
  DOI 		= 	{10.1103/PhysRevE.105.034206}
}

@article{Consolo2022III,
  title		=	{Dryland vegetation pattern dynamics driven by inertial effects and secondary seed dispersal},
  author	=	{Consolo, G. and Grif\`o, G. and Valenti, G.},
  journal	=	{Ecological Modelling},
  volume	=	{474},
  pages		=	{110171},
  year		=	{2022},
  DOI 		= 	{10.1016/j.ecolmodel.2022.110171}
}

@article{Iuorio2023pre,
  title		=	{How does negative plant-soil feedback across lifestages affect the spatial patterning of trees?},
  author	=	{Iuorio, A. and Eppinga, M.B. and Baudena, M. and Veerman, F. and Rietkerk, M. and Giannino, F.},
  journal	=	{Scientific Reports},
  year		=	{2023},
  DOI 		= 	{10.1038/s41598-023-44867-0}
}

@article{VanDerStelt2013,
  title		=	{Rise and Fall of Periodic Patterns for a Generalized {K}lausmeier-{G}ray-{S}cott Model},
  author	=	{{van der Stelt}, S. and Doelman, A. and Hek, G. and Rademacher, J.D.M.},
  journal	=	{Journal of Nonlinear Science},
  volume	=	{23},
  number	=	{7},
  pages		=	{39--95},
  year		=	{2013},
  DOI 		= 	{10.1007/s00332-012-9139-0}
}

@article{Zelnik2013,
  title		=	{Regime shifts in models of dryland vegetation},
  author	=	{Zelnik, Y. and Kinast, S. and Yizhaq, H. and Bel, G. and Meron, E.},
  journal	=	{Philosophical Transactions of the Royal Society A},
  volume	=	{371},
  number	=	{2004},
  pages		=	{20120358},
  year		=	{2013},
  DOI 		= 	{10.1098/rsta.2012.0358}
}

@article{Byrnes2023,
  title={Large amplitude radially symmetric spots and gaps in a dryland ecosystem model},
  author={Byrnes, E. and Carter, P. and Doelman, A. and Liu, L.},
  journal={Journal of Nonlinear Science},
  volume={33},
  number={6},
  pages={107},
  year={2023},
  publisher={Springer},
  DOI = {10.1007/s00332-023-09963-5}
}

@article{Hillerislambers2001,
  title		=	{Vegetation pattern formation in semi-arid grazing systems},
  author	=	{HilleRisLambers, R. and Rietkerk, M. and {van de Bosch}, F. and Prins, H.H.T. and {de Kroon}, H.},
  journal	=	{Ecology},
  volume	=	{82},
  number	=	{1},
  pages		=	{50},
  year		=	{2001},
  DOI 		=	{10.1890/0012-9658(2001)082[0050:VPFISA]2.0.CO;2}
}

@article{Siteur2014,
  title		=	{{Beyond Turing: The response of patterned ecosystems to environmental change}},
  author	=	{Siteur, K. and Siero, E. and Eppinga, M.B. and Rademacher, J.D.M. and Doelman, A. and Rietkerk, M.},
  journal	=	{Ecological Complexity},
  volume	=	{20},
  pages		=	{81--96},
  year		=	{2014},
  DOI 		= 	{10.1016/j.ecocom.2014.09.002}
}

@article{Eigentler2020,
  title		=	{An integrodifference model for vegetation patterns in semi-arid environments with seasonality},
  author	=	{Eigentler, L. and Sherratt, J.A.},
  journal	=	{Journal of Mathematical Biology},
  volume	=	{81},
  pages		=	{875--904},
  year		=	{2020},
  DOI 		=	{10.1007/s00285-020-01530-w}
}

@book{Kuehn_2015,
	doi = {10.1007/978-3-319-12316-5},
	year = 2015,
	publisher = {Springer International Publishing},
	author = {Kuehn, C.},
	title = {{Multiple Time Scale Dynamics}}
}

@article{JKK96,
  title={Tracking invariant manifolds up to exponentially small errors},
  author={Jones, C.K.R.T. and Kaper, T.J. and Kopell, N.},
  journal={SIAM J. Math. Anal.},
  volume={27},
  number={2},
  pages={558--577},
  year={1996},
  publisher={SIAM},
  doi = {10.1137/s003614109325966x}
}

@article {Fe79,
    AUTHOR = {Fenichel, N.},
     TITLE = {Geometric singular perturbation theory for ordinary
              differential equations},
   JOURNAL = {J. Differential Equations},
    VOLUME = {31},
      YEAR = {1979},
    NUMBER = {1},
     PAGES = {53--98},
      ISSN = {0022-0396},
     CODEN = {JDEQAK},
   MRCLASS = {58F30 (34C29)},
  MRNUMBER = {524817 (80m:58032)},
MRREVIEWER = {F. Verhulst},
       DOI = {10.1016/0022-0396(79)90152-9}
}

@article{Consolo2024,
  title		=	{Modeling vegetation patterning on sloped terrains: The role of toxic compounds},
  author	=	{Consolo, G. and Grif\`o, G. and Valenti, G.},
  journal = {Physica D},
  volume = {459},
  number = {},
  pages = {134020},
  year = {2024},
  doi = {10.1016/j.physd.2023.134020}
}

@article{Sewalt2017,
    title   =   {Spatially periodic multipulse patterns in a generalized {K}lausmeier–{G}ray–{S}cott model},
    author  =   {Sewalt, L. and Doelman, A.},
    journal =   {SIAM Journal on Applied Dynamical Systems},
    volume  =   {16},
    number   =   {2},
    pages   =   {1113--1163},
    year    =   {2017},
    doi     =   {10.1137/16M1078756}
}

@book{UN22,
    title = {The {G}lobal {L}and {O}utlook},
    author = {{United Nations Convention to Combat Desertfication}},
    year = {2022},
    url = {https://www.unccd.int/resources/global-land-outlook/global-land-outlook-2nd-edition},
    isbn = {978-92-95118-53-9},
    edition =   {second},
    publisher   =   {UNCCD},
    address     =   {Bonn}
}

@article{grifo2025far,
author = {Grif\`{o}, G. and Iuorio, A. and Veerman, F.},
title = {Far-from-Equilibrium Traveling Pulses in Sloped Semiarid Environments Driven by Autotoxicity Effects},
journal = {SIAM Journal on Applied Mathematics},
volume = {85},
number = {1},
pages = {188-209},
year = {2025},
doi = {10.1137/24M1669499},
}

@software{Comsol,
 author 	=	{{COMSOL Multiphysics $\textsuperscript{\textregistered}$}},
  title 	=	{Ver 6.3 {COMSOL AB}, {S}tockholm, {S}weden},
  url 		=	{https://www.comsol.com/},
  version	= 	{ver 6.3 {COMSOL AB}},
  date 		= 	{2025-05-27},
}

@book{Ruggeri2021,
  title        = {Classical and Relativistic Rational Extended Thermodynamics of Gases},
  author       = {Ruggeri, T. and Sugiyama, M.},
  publisher    = {Springer Cham},
  year         = {2021},
  edition      = {1},
  doi          = {10.1007/978-3-030-59144-1},
}

@article{Grifo2025II,
  title		=	{Travelling waves in dryland ecology: continuous and discontinuous connections in a hyperbolic vegetation model},
  author	=	{Grif\`o, G. and Curr\`o, C. and Valenti, G.},
  journal	=	{Nonlinear Dynamics},
  year		=	{2025},
  volume  	=       {113},
  pages 	= 	{15295–15319},
  DOI 		= 	{10.1007/s11071-025-10925-7},
}

@article{Consolo2025,
  title		=	{Vegetation pattern formation and transition in dryland ecosystems with finite soil resources and inertia},
  author	=	{Consolo, G. and Curr\`o, C. and Grif\`o, G. and Valenti, G.},
  journal	=	{Physica D: Nonlinear Phenomena},
  year		=	{2025},
  volume 	= 	{476},
  pages 	= 	{134601},
  DOI 		= 	{10.1016/j.physd.2025.134601},
}

@article{Consolo2024II,
title = {Stationary and Oscillatory patterned solutions in three-compartment reaction-diffusion systems: Theory and application to dryland ecology},
journal = {Chaos, Solitons \& Fractals},
volume = {186},
pages = {115287},
year = {2024},
doi = {10.1016/j.chaos.2024.115287},
author = {Consolo, G. and Curr\`o, C. and Grif\`o, G. and Valenti, G.},
}

@article{BARBERA2015,
title = {On discontinuous travelling wave solutions for a class of hyperbolic reaction–diffusion models},
journal = {Physica D: Nonlinear Phenomena},
volume = {308},
pages = {116-126},
year = {2015},
issn = {0167-2789},
doi = {10.1016/j.physd.2015.06.011},
author = {E. Barbera and C. Currò and G. Valenti},
}

@article{EIGENTLER2020b,
title = {Effects of precipitation intermittency on vegetation patterns in semi-arid landscapes},
journal = {Physica D: Nonlinear Phenomena},
volume = {405},
pages = {132396},
year = {2020},
issn = {0167-2789},
doi = {10.1016/j.physd.2020.132396},
author = {L. Eigentler and J.A. Sherratt},
}

@article{EIGENTLER2024,
title = {Delayed loss of stability of periodic travelling waves: {I}nsights from the analysis of essential spectra},
journal = {Journal of Theoretical Biology},
volume = {595},
pages = {111945},
year = {2024},
issn = {0022-5193},
doi = {10.1016/j.jtbi.2024.111945},
author = {Eigentler, L. and Sensi, M.},
}

@article{SHERRATT2013,
title = {Pattern solutions of the {K}lausmeier model for banded vegetation in semi-arid environments III: The transition between homoclinic solutions},
journal = {Physica D: Nonlinear Phenomena},
volume = {242},
number = {1},
pages = {30-41},
year = {2013},
issn = {0167-2789},
doi = {10.1016/j.physd.2012.08.014},
author = {J.A. Sherratt},
}

@article{Sherratt2012,
  title     = {Vegetation patterns and desertification waves in semi-arid environments: {M}athematical models based on local facilitation in plants},
  author    = {Sherratt, J.A. and Synodinos, A.D.},
  journal   = {Discrete and Continuous Dynamical Systems - B},
  volume    = {17},
  number    = {8},
  pages     = {2815--2827},
  year      = {2012},
  doi       = {10.3934/dcdsb.2012.17.2815}
}

@article{VanSaarloosHohenberg1992,
  title     = {Fronts, pulses, sources and sinks in generalized complex {G}inzburg-{L}andau equations},
  author    = {van Saarloos, W. and Hohenberg, P. C.},
  journal   = {Physica D},
  volume    = {56},
  pages     = {303--367},
  year      = {1992},
  DOI = {10.1016/0167-2789(92)90175-M}
}

@article{Doelman1995,
  title     = {Instability of quasiperiodic solutions of the {G}inzburg-{L}andau equation},
  author    = {Doelman, A. and Gardner, R. and Jones, C.},
  journal   = {Proceedings of the Royal Society of Edinburgh: Section A Mathematics},
  volume    = {125},
  pages     = {501--526},
  year      = {1995},
  DOI ={10.1017/S0308210500032649}
}

@article{Mielke1998,
  title     = {Bounds for the solutions of the complex {G}inzburg-{L}andau equation in terms of the dispersion parameters},
  author    = {Mielke, A.},
  journal   = {Physica D},
  volume    = {117},
  pages     = {106--116},
  year      = {1998},
  DOI ={10.1016/S0167-2789(97)00308-4}
}

@incollection{Mielke2002,
  title     = {The {Ginzburg-Landau} equation in its role as a modulation equation},
  author    = {Mielke, A.},
  booktitle = {Handbook of Dynamical Systems},
  editor    = {Fiedler, B.},
  publisher = {Elsevier},
  address   = {Amsterdam},
  volume    = {2},
  pages     = {759--834},
  year      = {2002}
}

@article{AransonKramer2002,
  title     = {The world of the complex {G}inzburg-{L}andau equation},
  author    = {Aranson, I. S. and Kramer, L.},
  journal   = {Reviews of Modern Physics},
  volume    = {74},
  pages     = {99--143},
  year      = {2002},
  DOI = {10.1103/RevModPhys.74.99},
}

@article{Deblauwe2011,
  title		=	{Environmental modulation of self-organized periodic vegetation patterns in {S}udan},
  author	=	{Deblauwe, V. and Couteron, P. and Lejeune, O. and Bogaert, J. and Barbier, N.},
  journal	=	{Ecography},
  volume	=	{34},
  number	=	{6},
  pages		=	{990--1001},
  year		=	{2011},
  DOI 		=	{10.1111/j.1600-0587.2010.06694.x},
}

@article{Deblauwe2012,
  title		=	{Determinants and dynamics of banded vegetation pattern migration in arid climates},
  author	=	{Deblauwe, V. and Couteron, P. and Bogaert, J. and Barbier, N.},
  journal	=	{Ecological Monograph},
  volume	=	{82},
  number	=	{1},
  pages		=	{3--21},
  year		=	{2012},
  DOI 		=	{10.1890/11-0362.1},
}

@article{Carter2024,
  title={Travelling pulses on three spatial scales in a {Klausmeier-type} vegetation-autotoxicity model},
  author={Carter, P. and Doelman, A. and Iuorio, A. and Veerman, F.},
  journal={Nonlinearity},
  volume={37},
  number={9},
  pages={095008},
  year={2024},
  publisher={IOP Publishing},
  DOI = {10.1088/1361-6544/ad6112}
}

@article{Giunta2021,
author = {Giunta, V. and Lombardo, M.C. and Sammartino, M.},
title = {Pattern Formation and Transition to Chaos in a Chemotaxis Model of Acute Inflammation},
journal = {SIAM Journal on Applied Dynamical Systems},
volume = {20},
number = {4},
pages = {1844-1881},
year = {2021},
doi = {10.1137/20M1358104},
}

@article{Wollkind1994,
author = {Wollkind, D.J. and Manoranjan, V.S. and Zhang, L.},
title = {Weakly Nonlinear Stability Analyses of Prototype Reaction-Diffusion Model Equations},
journal = {SIAM Review},
volume = {36},
number = {2},
pages = {176-214},
year = {1994},
doi = {10.1137/1036052},
}

\end{document}